\newcommand{\figcaption}[1]{
	\addtocounter{figure}{1}
	{\center Figure \arabic{figure}. #1\\ }
	\addcontentsline{lof}{figure}{#1}
}
\DeclareMathOperator\sgn{sgn}
\newcommand{\ox}{\overline{x}}    \newcommand{\ux}{\underline{x}}
\newcommand{\old}{\overline{\ld}} \newcommand{\uld}{\underline{\ld}}
\newcommand{\oy}{\overline{y}}    \newcommand{\uy}{\underline{y}}
\newcommand{\oS}{\overline{S}}    \newcommand{\uS}{\underline{S}}
\newcommand{\ld}{\lambda}
\newcommand{\define}[1]{{\textbf{#1}}}
\newcommand{\Zitkovic}[1]{{\v Z}itkovi\'c}
\newcommand{\Sirbu}[1]{S\^\i rbu}
\newcommand\hc{{\hat{c}}}
\newcommand\hm{{\hat{m}}}
\newcommand\hs{{\hat{s}}}
\newcommand\hx{{\hat{x}}}
\newcommand\sC{{\mathcal C}}
\newcommand\sE{{\mathcal E}}
\newcommand\bE{{\mathbb E}}
\newcommand\hH{{\hat{H}}}
\newcommand\bN{{\mathbb N}}
\newcommand\bP{{\mathbb P}}
\newcommand\bR{{\mathbb R}}
\newcommand\sS{{\mathcal S}}
\newcommand\tS{{\tilde{S}}}
\newcommand\hW{{\hat{W}}}
\numberwithin{equation}{section}
\theoremstyle{plain}                % title and number in bold, text italic
\newtheorem{theorem}{Theorem}[section]
\newtheorem{lemma}[theorem]{Lemma}
\newtheorem{proposition}[theorem]{Proposition}
\newtheorem{corollary}[theorem]{Corollary}
\theoremstyle{definition}           % title and number in bold, text normal
\newtheorem{definition}[theorem]{Definition}
\newtheorem{assumption}[theorem]{Assumption}
\theoremstyle{remark}
\newtheorem{remark}[theorem]{Remark}
\newcommand{\defref}[1]{Definition~\ref{#1}}
\newcommand{\thmref}[1]{Theorem~\ref{#1}}
\newcommand{\proref}[1]{Proposition~\ref{#1}}
\newcommand{\lemref}[1]{Lemma~\ref{#1}}
\newcommand{\corref}[1]{Corollary~\ref{#1}}
\newcommand{\remref}[1]{Remark~\ref{#1}}
\newcommand{\assref}[1]{Assumption~\ref{#1}}
\begin{document}

\author{Jin Hyuk Choi}
\thanks{Department of mathematics, Ulsan National Institute of Science and Technology; jchoi@unist.ac.kr}
\title{Optimal consumption and investment
  with liquid and illiquid assets}

\maketitle

\begin{abstract}
We consider an optimal consumption/investment problem to maximize expected utility from consumption. In this market model, the investor is allowed
to choose a portfolio which consists of one bond, one liquid risky asset (no transaction costs) and one illiquid risky asset (proportional transaction
costs). We fully characterize the optimal consumption and trading strategies in terms of the solution of the free boundary ODE with an integral constraint. We find an explicit characterization of model parameters for the well-posedness of the problem, and show that the problem is well-posed if and only if there exists a shadow price process. 
Finally, we describe how the investor's optimal strategy is affected by the additional opportunity of trading the liquid risky asset, compared to the simpler model with one bond and one illiquid risky asset. 
\end{abstract}

\bigskip

{\bf Keywords}: stochastic control, optimal consumption/trading, transaction costs, singular control

\bigskip

\section{Introduction}

In the seminal papers \cite{Mer69, Mer71}, Merton formulated and solved the optimal consumption and investment problem in the continuous-time stochastic control framework. Under the assumption that the risky asset price process is a geometric Brownian motion and the investor has a CRRA (constant relative risk aversion) utility function, Merton proved that it is optimal to invest a constant proportion of wealth in the risky asset. Since then, the dynamic optimal consumption/investment problems have been studied by many researchers, and the results extend to very general situations (e.g., \cite{KLS87, KLSX91, KraSch99, KarZit03, Kra04}), under the simplifying assumption of no transaction costs (perfect liquidation).

One type of generalization of these problems is to consider transaction costs which are levied on each transaction.
 Constantinides and Magill \cite{MagCon76} assumed proportional transaction costs in the model of \cite{Mer71}. They intuited that the optimal strategy is to keep the proportion of wealth invested in the risky asset in an interval, by trading the risky asset in a minimal way. Davis and Norman \cite{DavNor90} proved this intuition by formulating the HJB (Hamilton-Jacobi-Bellman) equation. Shreve and Soner \cite{ShrSon94} subsequently complemented the analysis of \cite{DavNor90}, by removing various technical conditions and using the technique of viscosity solutions to clarify the key arguments. Because the solution of the HJB equation is not explicit, except in the case of no transaction costs, the asymptotic analysis for small transaction costs has also been studied (for a single risky asset case, see, e.g., \cite{ShrSon94, JanShr04, Bic12, GerMuhSch11, Cho11}).

The market model in Davis and Norman \cite{DavNor90} and Shreve and Soner \cite{ShrSon94} consists of a single risky asset. Even though the natural extension is to consider a model with multiple risky assets, it is known that transaction costs models with multiple assets are notably harder to analyze than a model with a single risky asset. Consequently, most of the existing results are limited to models with a single risky asset. 

For multiple-asset models, Akian \emph{et al}. \cite{Aki96} prove that the value function is the viscosity solution of a variational inequality. Liu \cite{Liu04} considers the model with exponential utility and independent Brownian motions: In this special case, the multiple-asset problem can be decomposed into a set of single risky asset problems. Muthuraman and Kumar \cite{Mut06} develop a numerical method to solve the multiple-asset problem. Chen and Dai \cite{CheDai13} characterize the shape of the no-trading region in the model with two risky assets. Bichuch and Shreve \cite{BicShr11} prove an asymptotic expansion for small transaction costs, in the market with two futures. Possamai \emph{et al}. \cite{Pos15} prove an asymptotic expansion for small transaction costs for general Markovian risky asset processes. Because a rigorous characterization of the optimal strategies is unknown in the models with multiple assets (\cite{Liu04} is an exception), these papers \cite{Aki96, Mut06, CheDai13, BicShr11, Pos15} focus on asymptotic analysis or some characteristics of the no-trading region.

In this paper, we consider an optimal consumption/investment problem in the market which consists of one bond, one liquid risky asset and one illiquid risky asset. The investor pays proportional transaction costs for trading the illiquid asset, but the other risky asset is perfectly liquid. To study this problem, we employ the \emph{shadow price approach} used in \cite{KalMuh10, GerMuhSch11, ChoSirZit11, GerGua14, Cho11,HerPro15, Guasoni}. The shadow price approach amounts to construct the most unfavorable frictionless market, where the asset price processes lie between the bid and ask prices of the original market. After proving that the constructed frictionless market produces the same expected utility as the original market, we obtain the expressions for the optimal strategies and value function by solving the optimization problem in the frictionless market.% We find a candidate of the shadow price process using the solution of a constraint free boundary ODE, and do the verification.
\footnote{Our analysis does not rely on the dynamic programming principle or the technique of viscosity solutions as in \cite{ShrSon94, BicShr11}.}

For CRRA utility functions and infinite time horizon, we fully characterize the value function and the optimal consumption/trading strategies in terms of the solution of a free boundary ODE. We also provide an explicit equivalent condition on the model parameters for  the finiteness of the value function (well-posedness of the problem), and prove that there exists a shadow price process if and only if the problem is well-posed. The approach of this paper is close to that of \cite{ChoSirZit11}, but the structure of the ODE turns out to be more complicated in the current model with the liquid risky asset.\footnote{
To be specific, the form of the ODE in \cite{ChoSirZit11} is 
\begin{equation}\label{one}
g'(x)=\tfrac{P(x,g(x))}{Q(x,g(x))},
\end{equation}
but the ODE in this paper has the form of 
\begin{equation}\label{two}
g'(x)=\tfrac{2C(x,g(x))}{-B(x,g(x))+\sqrt{B(x,g(x))^2-4 A(x,g(x)) C(x,g(x))}}\textrm{  or  }\tfrac{2C(x,g(x))}{-B(x,g(x))-\sqrt{B(x,g(x))^2-4 A(x,g(x)) C(x,g(x))}},
\end{equation}
 where $Q(x,y), P(x,y), A(x,y), B(x,y), C(x,y)$ are quadratic in $x$ and $y$. This paper is technically more involved, because the form of the ODE in \eqref{two} is more complicated than \eqref{one}, and we also need to determine which ODE to choose in \eqref{two} for various parameter conditions.
See Section 6 for details.

}

Due to the complicated nature of the problem, the asymptotic analysis for small transaction costs is useful to understand the optimal behavior. Comparing our model with the model without the liquid risky asset, we describe how the investor's optimal consumption and illiquid asset trading strategies are affected by the additional opportunity of trading the liquid risky asset. Also, we describe how the optimal trading strategy for the liquid risky asset is affected by the presence of the illiquid asset, compared to the frictionless Merton problem.

%Our analysis do not rely on technical assumptions for the market parameters, e.g.,  the size of transaction costs (\cite{GerGua14, GerMuhSch11, Pos15, HerPro15, Guasoni}), or the location of Merton line (\cite{KalMuh10, GerMuhSch11}). 

Our model is similar to the models in \cite{Dai11, Guasoni, Hobson16}. Dai \emph{et al} \cite{Dai11} consider a model with a finite horizon and position constraints, and they characterize the trading boundaries. 
%Hobson and Zhu \cite{Hop14} investigate an extreme case of the model in this paper: They consider a model with infinite transaction costs for buying and no transaction costs for selling. 
Guasoni and Bichuch \cite{Guasoni} consider the problem of maximizing the long-term growth rate. Under the assumption of small transaction costs, they solve the problem using the \emph{shadow price approach}, and prove an asymptotic expansion result. In parallel with our work, Hobson \emph{et al}. \cite{Hobson16} recently consider a similar problem as in this paper and solve it by studying the HJB equation of the primal optimization problem. They also provide an explicit characterization of the well-posedness of the problem.

The remainder of the paper is organized as follows: Section 2 describes the model. In Section 3, we explain the shadow price approach, and heuristically derive a free boundary ODE from the property of the shadow price process. In Section 4, we state and prove the main results: The expression of the optimal strategy and value function, the existence of a shadow price process, and the criteria for the well-posedness of the problem are given. Section 5 describes some properties of the optimal strategy and provides intuitive explanations for them. 
Finally, Section 6 is mainly devoted to prove the existence of a smooth solution to the free boundary ODE with an integral constraint.

\section{The Model}

The market model we consider consists of one zero-interest bond\footnote{
%If the interest rate of the bond is a nonzero constant $r>0$, the budget constraint \eqref{budget} should be
%$$d(e^{rt} \varphi_t^{(0)} + \varphi_t^{(2)} S_t^{(2)}) = re^{rt}\varphi_t^{(0)}dt+  \varphi_t^{(2)}dS_t^{(2)} - \overline{S}^{(1)}_t d(\varphi_t^{(1)})^\uparrow +\underline{S}^{(1)}_t d(\varphi_t^{(1)})^\downarrow - c_t dt,$$
%which can be rewritten in terms of $S^{(1)*}_t=e^{-rt}S^{(1)}_t,\,S^{(2)*}_t=e^{-rt}S^{(2)}_t\,c^*_t=e^{-rt}c_t$ as
%$$d(\varphi_t^{(0)} + \varphi_t^{(2)} S_t^{(2)*}) =  \varphi_t^{(2)}dS_t^{(2)*} - \overline{S}^{(1)*}_t d(\varphi_t^{(1)})^\uparrow +\underline{S}^{(1)*}_t d(\varphi_t^{(1)})^\downarrow - c^*_t dt.$$
%Then one can easily check that the market with parameters $(\delta,\mu_1,\mu_2,\sigma_1,\sigma_2,r)$ and the market with parameters $(\delta-r p ,\mu_1-r,\mu_2-r,\sigma_1,\sigma_2,0)$ produce the same expected utilities. Therefore, we assume $r=0$ without loss of generality.
The case with non-zero constant interest rate can be transformed to the case with zero interest rate.
}
 and two risky assets, whose price processes $S^{(1)}$ and $S^{(2)}$ are given by
\begin{equation}
\begin{split}\label{S}
dS^{(i)} = S^{(i)} (\mu_i dt + \sigma_i dB_t^{(i)}), \quad S_0^{(i)}>0, \quad i=1,2.\\
\end{split}
\end{equation}
Here, $B^{(1)}$ and $B^{(2)}$ are standard Brownian motions with correlation $\rho\in (-1,1)$, and the parameters $\mu_i$ and 
$\sigma_i$ are positive constants.
The information structure is given by the augmented filtration generated by $B^{(1)}$ and $B^{(2)}$. We assume that $S^{(2)}$ can be traded without transaction costs, but proportional transaction costs are imposed whenever an investor trades $S^{(1)}$. We call $S^{(1)}$ an illiquid asset and $S^{(2)}$ a liquid asset. To be specific, there are constants $\overline{\lambda}>0$ and $\underline{\lambda}\in (0,1)$ such that the investor pays $\overline{S}^{(1)}_t:=(1+\overline{\lambda})S_t^{(1)}$ for one share of the illiquid asset, but only gets $\underline{S}^{(1)}_t:=(1-\underline{\lambda})S_t^{(1)}$ for one share of the illiquid asset.

Let the investor initially hold $\eta_0$ shares of the bond, $\eta_1$ shares of illiquid asset, and $\eta_2$ shares of liquid asset. In terms of notation, let the triple $(\varphi_t^{(0)},\varphi_t^{(1)}, \varphi_t^{(2)})$ represent the number of shares in the bond and the two risky assets at time $t$, and let $c_t$ be the consumption rate. In order to incorporate the possibility of the initial jump, we distinguish $(\varphi_{0-}^{(0)},\varphi_{0-}^{(1)}, \varphi_{0-}^{(2)})$ and $(\varphi_{0}^{(0)},\varphi_{0}^{(1)}, \varphi_{0}^{(2)})$. The processes are right-continuous after that. We set $(\varphi_{0-}^{(0)},\varphi_{0-}^{(1)}, \varphi_{0-}^{(2)})=(\eta_0,\eta_1,\eta_2)$.

\begin{definition}\label{financeable}
$\sC$ is a set of nonnegative, right-continuous, and locally integrable optional processes, such that $c \in \sC$ if there exist right-continuous optional processes $(\varphi^{(0)},\varphi^{(1)}, \varphi^{(2)})$ which satisfy the following three conditions:\\ 
(i) $\varphi^{(1)}$ is of finite variation a.s.\\
(ii) (Admissibility) The liquidation value is always nonnegative, i.e., 
\begin{equation}
\begin{split}\label{adm}
\varphi_t^{(0)}+\underline{S}_t^{(1)} (\varphi_t^{(1)})^+ -\overline{S}_t^{(1)} (\varphi_t^{(1)})^- + S_t^{(2)}\varphi_t^{(2)} \geq 0, \quad t\geq 0.
\end{split}
\end{equation}
(iii) (Budget constraint) The consumption stream is financeable, i.e.,
\begin{equation}
\begin{split}\label{budget}
\varphi_t^{(0)} + \varphi_t^{(2)} S_t^{(2)} =  \eta_0 +  \eta_2 S_0^{(2)}+ \int_0^t \varphi_u^{(2)}dS_u^{(2)} - \int_0^t \overline{S}^{(1)}_u d(\varphi_u^{(1)})^\uparrow +\int_0^t \underline{S}^{(1)}_u d(\varphi_u^{(1)})^\downarrow - \int_0^t c_u du, 
\end{split}
\end{equation}
where $(\varphi_t^{(1)})^\uparrow$ and $(\varphi_t^{(1)})^\downarrow$ are the cumulative numbers of illiquid asset bought and sold up to time $t$.
\end{definition}

For the initial admissibility, we assume that 
$$
\eta_0 + \uS_t^{(1)} (\eta_1)^+ - \oS_t^{(1)} (\eta_1)^-  + S_t^{(2)} \eta_2 \geq 0.
$$
%\begin{remark}
%Since we consider utility maximization problem by consumption, the inequality in \defref{financeable} (iii) is tight for the optimal strategy.
%\end{remark}

For $p\in (-\infty,1)\setminus \{ 0\}$, we consider the \define{utility function}
$U:[0,\infty)\to [-\infty,\infty)$ of the power (CRRA) type. It is defined
for $c\geq 0$ by
\[ U(c) = \frac{c^p}{p}, 
\text{ and }\ U(0)= \begin{cases} 0, & p>0,\\ -\infty, & p\leq 0
\end{cases}
\]

%\begin{remark}
%For the logarithmic utility case ($p=0$), one can check that the optimization problem here is equivalent to the optimization problem appears in (reference) after the re-parametrization.
%\end{remark}

Our goal is to analyze the optimal consumption and investment problem
\begin{equation}
\begin{split}\label{primal problem}
\sup_{c\in \sC} \bE\Big[\int_0^\infty e^{-\delta t} U(c_t)dt \Big],
\end{split}
\end{equation}
where the constant $\delta$ is the impatience rate.

For convenience, let $q:=p/(1-p)$.

\begin{remark}\label{finite value}
The optimization problem \eqref{primal problem} is ill-posed (i.e., produces infinite value) for some parameter conditions:\\ 
(1) If $\delta \leq  \tfrac{q \mu_2}{2 \sigma_2^2}$, then the value of the optimization problem without trading the illiquid asset is infinity (see Theorem 2.1 in \cite{DavNor90}).\\
(2) If $\delta \leq \tfrac{q(2\mu_1(1+q)-\sigma_1^2)}{2(1+q)^2}$, then the value of the optimization problem without trading the liquid asset is infinity (see Proposition 6.1 in \cite{ChoSirZit11}).
\end{remark}

\begin{remark}\label{special cases}
The special cases $\mu_1=\frac{\rho \mu_2 \sigma_1}{\sigma_2}$ or $\mu_2= \frac{\rho \sigma_1 \sigma_2}{1+q}$ are covered by the result of \cite{ChoSirZit11} regarding the single risky asset case:\\
(1) Suppose that $\mu_1=\frac{\rho \mu_2 \sigma_1}{\sigma_2}$. If there is no transaction costs, then it is optimal to hold $0$ shares of $S^{(1)}$. This implies that the optimal strategy in the original transaction costs model never trades the asset $S^{(1)}$, and the problem is reduced to the frictionless model with $S^{(2)}$ only.\\
(2) Suppose that $\mu_2= \frac{\rho \sigma_1 \sigma_2}{1+q}$. One can check that the ODE in our model reduces to the ODE in the single risky asset model in \cite{ChoSirZit11}. See Appendix for details and financial interpretation.
\end{remark}

Based on \remref{finite value} and \remref{special cases}, we impose the following standing assumption throughout the rest of the paper.

\begin{assumption}\label{ass}
The parameters of the optimization problem satisfy the following conditions:
(1) $\delta >\tfrac{q \mu_2}{2 \sigma_2^2}$ and $\delta >\tfrac{q(2\mu_1(1+q)-\sigma_1^2)}{2(1+q)^2}$. \\%Otherwise, the optimization problem is ill-posed. 
(2) $\mu_1 \neq \tfrac{\rho \mu_2 \sigma_1}{\sigma_2} $ and $\mu_2 \neq \tfrac{\rho \sigma_1 \sigma_2}{1+q}$. %Otherwise, the optimization problem reduces to the problem in \cite{ChoSirZit11}.
\end{assumption}

\section{Heuristics with shadow price process}
In this section, we explain the so called  shadow price approach in this context, and heuristically derive a free boundary ODE from properties of the shadow price process.

\subsection{Shadow price approach}
In the shadow price approach (see \cite{KalMuh10, GerMuhSch11, ChoSirZit11, GerGua14, Cho11, Guasoni}), the original transaction cost problem is solved by constructing a suitable frictionless (i.e., no transaction costs) market model. We first define the set of consistent price processes, and a set of financeable consumptions in the frictionless market, in \defref{consistent}. Then the definition of the shadow price process is given in \defref{shadow_define}.

\begin{definition}\label{consistent}
(1) The set of consistent price processes $\sS$ is defined as
\begin{equation}
\begin{split}
\sS=\Big\{ \tS: \, \tS \textrm{  is an Ito-process, and  }  \uS^{(1)}_t \leq \tS_t \leq \oS^{(1)}_t \textrm{  for all }t\geq 0, \textrm{ a.s.} \Big\}
\end{split}
\end{equation}
(2) For each $\tS \in \sS$, $\sC(\tS)$ is a set of financeable consumptions in the frictionless market with risky assets $\tS$ and $S^{(2)}$. To be specific, the set $\sC(\tS)$ is defined as a set of nonnegative, locally integrable progressively measurable processes $c$, such that $c \in \sC(\tS)$ if there exist progressively measurable processes $(\varphi^{(0)},\varphi^{(1)},\varphi^{(2)})$ which satisfy the following two conditions:  \\
(i) (Admissibility) Total wealth ($W$ for notation) is always nonnegative, i.e., 
\begin{equation}
\begin{split}\label{shadow_admissible}
W_t:=\varphi_t^{(0)}+\tS_t \varphi_t^{(1)} + S_t^{(2)}\varphi_t^{(2)} \geq 0, \quad t\geq 0.
\end{split}
\end{equation}
(ii) (Budget constraint) The consumption stream is financeable, i.e.,
\begin{equation}
\begin{split}\label{shadow_financeable}
W_t = W_{0-} +    \int_0^t \varphi_u^{(1)} d\tS_t+ \int_0^t \varphi_u^{(2)}dS_u^{(2)} - \int_0^t c_u du, \quad t\geq 0.
\end{split}
\end{equation}
\end{definition}

The connection between the original transaction cost problem and the collection of frictionless problems is described in the following proposition. It is a simple translation of Proposition 2.2 in \cite{ChoSirZit11}.

\begin{proposition}\label{shadow_connection} The following two statements hold.\\
(1) For each $\tS \in \sS$, 
\begin{equation}
\begin{split}\label{shadow_ineq}
\sup_{c\in \sC} \bE\Big[\int_0^\infty e^{-\delta t} U(c_t)dt \Big] \leq \sup_{c\in \sC(\tS)}  \bE\Big[\int_0^\infty e^{-\delta t} U(c_t)dt \Big].
\end{split}
\end{equation}
(2) Given $\tS \in \sS$, let $\hc \in \sC(\tS)$ solve the frictionless optimization problem, i.e.,
\begin{equation}
\begin{split}\label{shadow pro}
\bE\Big[\int_0^\infty e^{-\delta t} U(\hc_t)dt \Big]=\sup_{c\in \sC(\tS)}  \bE\Big[\int_0^\infty e^{-\delta t} U(c_t)dt \Big],
\end{split}
\end{equation}
with $(\hat{\varphi}^{(0)},\hat{\varphi}^{(1)},\hat{\varphi}^{(2)})$ which satisfies the budget constraint \eqref{shadow_financeable}. Assume that\\
(i) $\hat{\varphi}^{(1)}$ is a right-continuous process of finite variation,\\
(ii) $(\hat{\varphi}^{(0)},\hat{\varphi}^{(1)},\hat{\varphi}^{(2)})$ satisfies \eqref{adm},\\
(iii) $d(\hat{\varphi}^{(1)}_t)^\uparrow=1_{\{ \tS_t=\oS_t \}}d(\hat{\varphi}^{(1)}_t)^\uparrow$  and  $d(\hat{\varphi}^{(1)}_t)^\downarrow= 1_{\{ \tS_t=\uS_t \} } d(\hat{\varphi}^{(1)}_t)^\downarrow$.\\
(iv) $\hc, \hat{\varphi}^{(0)},\hat{\varphi}^{(1)},\hat{\varphi}^{(2)}$ are continuous processes except for a possible initial jump at $t=0-$.\\
Then $\hc\in \sC$, and $\hc$ solves the original optimization problem \eqref{primal problem}, i.e., 
\begin{equation}
\begin{split}\label{shadow_eq}
\bE\Big[\int_0^\infty e^{-\delta t} U(\hc_t)dt \Big]=\sup_{c\in \sC}  \bE\Big[\int_0^\infty e^{-\delta t} U(c_t)dt \Big].
\end{split}
\end{equation}
\end{proposition}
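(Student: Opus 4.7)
My plan is to prove the two parts separately, with part (1) resting on the observation that buying at $\oS^{(1)} \geq \tS$ and selling at $\uS^{(1)} \leq \tS$ is always weakly worse than trading at $\tS$, and part (2) resting on integration by parts together with the boundary condition (iii) to convert the frictionless budget equation into the transaction-cost budget equation.

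For part (1), I would take an arbitrary $c \in \sC$ with witnessing processes $(\vp^{(0)}, \vp^{(1)}, \vp^{(2)})$ and use the \emph{same} $(\vp^{(1)}, \vp^{(2)})$ in the frictionless market with price $\tS$. I would then define $W_{0-} := \eta_0 + \tS_0 \eta_1 + S_0^{(2)} \eta_2$ and $W_t$ by the right-hand side of \eqref{shadow_financeable}; the budget constraint then holds by construction. The key step is admissibility \eqref{shadow_admissible}: applying integration by parts to $\tS_t \vp_t^{(1)}$ (valid because $\tS$ is continuous and $\vp^{(1)}$ has finite variation, so the bracket vanishes), then splitting $d\vp^{(1)} = d(\vp^{(1)})^\uparrow - d(\vp^{(1)})^\downarrow$, and finally using $\uS^{(1)} \leq \tS \leq \oS^{(1)}$ gives
\begin{equation*}
W_t \;\geq\; \vp^{(0)}_t + \uS^{(1)}_t (\vp^{(1)}_t)^+ - \oS^{(1)}_t (\vp^{(1)}_t)^- + S^{(2)}_t \vp^{(2)}_t \;\geq\; 0
\end{equation*}
by \eqref{adm}, after comparing with the original budget equation \eqref{budget}. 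This yields $c \in \sC(\tS)$ and hence the inequality \eqref{shadow_ineq}.

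For part (2), it suffices to show $\hc \in \sC$, because then combining with part (1) forces the chain of inequalities $\sup_{\sC(\tS)} = \bE[\int e^{-\delta t} U(\hc)\,dt] \le \sup_\sC \le \sup_{\sC(\tS)}$ to collapse into \eqref{shadow_eq}. Conditions (i), (ii), (iv) are assumed outright; only the transaction-cost budget constraint \eqref{budget} needs checking. I would start from \eqref{shadow_financeable}, substitute $W_t = \hvp^{(0)}_t + \tS_t \hvp^{(1)}_t + S^{(2)}_t \hvp^{(2)}_t$, and apply integration by parts to the term $\tS_t \hvp^{(1)}_t$ (treating the possible initial jump at $t=0$ carefully, so that the initial term reads $\tS_0 \eta_1$ to match $W_{0-}$). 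Splitting the resulting $\int_0^t \tS_u \, d\hvp^{(1)}_u$ into its increasing and decreasing parts and invoking condition (iii) replaces $\tS$ by $\oS^{(1)}$ on the support of $d(\hvp^{(1)})^\uparrow$ and by $\uS^{(1)}$ on the support of $d(\hvp^{(1)})^\downarrow$, delivering \eqref{budget} exactly.

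The only technical obstacle I anticipate is the bookkeeping at $t = 0-$: the frictionless optimizer may rebalance instantaneously, and the integration-by-parts identity must be stated in the form that absorbs this jump into $\int_{0-}^t \tS_u \, d\hvp^{(1)}_u = \tS_0(\hvp^{(1)}_0 - \eta_1) + \int_{0+}^t \tS_u \, d\hvp^{(1)}_u$. Since $\tS$ is continuous at $0$ by the Itô-process assumption, there is no cross term from the bracket, and the initial-jump contribution in the frictionless equation aligns perfectly with the initial-jump contribution in \eqref{budget}. Beyond this one bookkeeping point, the argument is a direct translation of Proposition 2.2 in \cite{ChoSirZit11}, as noted in the excerpt.
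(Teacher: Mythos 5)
Your proposal is correct and follows essentially the same route as the paper: in part (1) you define the new bond holding so the frictionless budget equation holds with equality, then use $\uS^{(1)}\le\tS\le\oS^{(1)}$ together with integration by parts to dominate the liquidated value from \eqref{adm}; in part (2) you reverse the integration by parts and use condition (iii) to recover \eqref{budget}, exactly as in the paper. Your extra care with the $t=0-$ bookkeeping is a point the paper glosses over but handles implicitly in the same way.
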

\begin{proof}
(1) For any $c\in \sC$, there exists $(\varphi^{(0)},\varphi^{(1)},\varphi^{(2)})$ which satisfies \eqref{budget}. 
\begin{displaymath}
\begin{split}
\varphi_t^{(0)} + \varphi_t^{(2)} S_t^{(2)} &=  \eta_0 +  \eta_{2}S_0^{(2)}+ \int_0^t \varphi_u^{(2)}dS_u^{(2)} - \int_0^t \overline{S}^{(1)}_u d(\varphi_u^{(1)})^\uparrow +\int_0^t \underline{S}^{(1)}_u d(\varphi_u^{(1)})^\downarrow - \int_0^t c_u du\\
&\leq \eta_0 +  \eta_2 S_0^{(2)}+ \int_0^t \varphi_u^{(2)}dS_u^{(2)} - \int_0^t \tS_u d\varphi_u^{(1)} - \int_0^t c_u du,\\
\end{split}
\end{displaymath}
where the inequality is due to $\tS\in\sS$. Then the integration-by-parts formula produces
$$\varphi_t^{(0)} +\varphi_{t}^{(1)}\tS_t+ \varphi_t^{(2)} S_t^{(2)}\leq \eta_0 + \eta_1 \tS_0+  \eta_2 S_0^{(2)}+\int_0^t  \varphi_u^{(1)}d\tS_u+ \int_0^t \varphi_u^{(2)}dS_u^{(2)} - \int_0^t c_u du.
$$
Therefore, if we define $\tilde{\varphi}^{(0)}$ as
$$\tilde{\varphi}_t^{(0)}:=\eta_0 + \eta_1 \tS_0+  \eta_2 S_0^{(2)}+\int_0^t  \varphi_u^{(1)}d\tS_u+ \int_0^t \varphi_u^{(2)}dS_u^{(2)} - \int_0^t c_u du -\varphi_{t}^{(1)}\tS_t- \varphi_t^{(2)} S_t^{(2)}, $$
then $\tilde{\varphi}^{(0)} \geq \varphi^{(0)}$ and \eqref{shadow_financeable} is satisfied with $(\tilde{\varphi}^{(0)},\varphi^{(1)},\varphi^{(2)}, c)$. We also check \eqref{shadow_admissible},
$$0\leq \varphi_t^{(0)}+\underline{S}_t^{(1)} (\varphi_t^{(1)})^+ -\overline{S}_t^{(1)} (\varphi_t^{(1)})^- + S_t^{(2)}\varphi_t^{(2)} \leq \tilde{\varphi}_t^{(0)} +\varphi_{t}^{(1)}\tS_t+ \varphi_t^{(2)} S_t^{(2)}.  $$
Therefore, $c\in \sC(\tS)$ and the inclusion $\sC \in \sC(\tS)$ finishes the proof of (1).

(2) Let $(\hc,\hat{\varphi}^{(0)},\hat{\varphi}^{(1)},\hat{\varphi}^{(2)})$ satisfies the assumptions in the proposition. Then by \eqref{shadow_financeable} and the integration-by-parts formula,
\begin{displaymath}
\begin{split}
\hat{\varphi}^{(0)}_t + \hat{\varphi}^{(2)}_t S_t^{(2)} &= - \hat{\varphi}_t^{(1)} \tS_t + \eta_0 + \eta_1 \tS_0+  \eta_2 S_0^{(2)} + \int_0^t  \hat{\varphi}_u^{(1)}d\tS_u+ \int_0^t \hat{\varphi}_u^{(2)}dS_u^{(2)} - \int_0^t \hc_u du\\
&=  \eta_0 +  \eta_2 S_0^{(2)} - \int_0^t \tS_u d\hat{\varphi}^{(1)}_u + \int_0^t \hat{\varphi}_u^{(2)}dS_u^{(2)}- \int_0^t \hc_u du\\
&=\eta_0 +  \eta_2 S_0^{(2)} - \int_0^t \oS_u d(\hat{\varphi}^{(1)})^\uparrow_u+\int_0^t \uS_u d(\hat{\varphi}^{(1)})^\downarrow_u + \int_0^t \hat{\varphi}_u^{(2)}dS_u^{(2)}- \int_0^t \hc_u du\\
\end{split}
\end{displaymath}
Hence \eqref{budget} is satisfied, and $\hc \in \sC$. Then \eqref{shadow_ineq} and \eqref{shadow pro} imply \eqref{shadow_eq}. 
\end{proof}

\begin{definition}\label{shadow_define}
If $\tS\in\sS$ satisfies the following equality
\begin{equation}
\begin{split}
\sup_{c\in \sC} \bE\Big[\int_0^\infty e^{-\delta t} U(c_t)dt \Big] = \sup_{c\in \sC(\tS)}  \bE\Big[\int_0^\infty e^{-\delta t} U(c_t)dt \Big]<\infty,
\end{split}
\end{equation}
then $\tS$ is called a shadow price process.
\end{definition}

\proref{shadow_connection} (2) implies that we can solve the original transaction costs problem by solving the frictionless problem with shadow price process, and \proref{shadow_connection} (1) says that the shadow price process can be characterized as the solution of the following minimization problem
\begin{equation}
\begin{split}\label{dual}
\inf_{\tS \in \sS} \Big( \sup_{c\in \sC(\tS)}  \bE\Big[\int_0^\infty e^{-\delta t} U(c_t)dt \Big]\Big).
\end{split}
\end{equation}

\subsection{Heuristic derivation of the free boundary ODE}
For the rest of this section, we will heuristically derive a free boundary ordinary differential equation, from the HJB equation for the optimization problem \eqref{dual}. For $\tS\in \sS$, we express $\tS_t=S_t^{(1)} e^{Y_t}$ for an Ito-process $Y$. Because $1-\underline{\lambda} \leq \tS_t/S_t^{(1)} \leq 1+\overline{\lambda}$, we have a natural bound $Y_t\in[\uy,\oy]$, where $\uy:=\ln(1-\underline{\lambda})$ and $\oy:=\ln(1+\overline{\lambda})$. Assume that the dynamics of $Y$ is given by
\begin{equation}
\begin{split}
dY_t = m_t dt + s_{1t} dB^{(1)}_t + s_{2t} dB^{(2)}_t,
\end{split}
\end{equation}
for some processes $m,s_1,s_2$. Then the state price density process $H$ (see, e.g., Remark 5.8, p.~19 in \cite{KarShr98}), in the market with stock prices $\tS$ and $S^{(2)}$, satisfies the stochastic differential equation 
\begin{equation}
\begin{split}
dH_t = - H_t \Big( \theta_1(m_t,s_{1t},s_{2t}) dB^{(1)}_t + \theta_2(m_t,s_{1t},s_{2t}) dB^{(2)}_t\Big), \quad H_0=1,
\end{split}
\end{equation}
where the functions $\theta_1$ and $\theta_2$ are defined as
\begin{equation}
\begin{split}\label{theta}
 \theta_1(m,s_1,s_2)&:=\tfrac{\rho (\sigma_2 s_2 - \mu_2)}{(1-\rho^2) \sigma_2} - \tfrac{\mu_2 s_2 - (m+\mu_1 + s_1 \sigma_1 + \frac{1}{2}(s_1^2+s_2^2))\sigma_2}{(1-\rho^2)\sigma_2 (s_1 +\sigma_1)},\\
 \theta_2(m,s_1,s_2)&:=\tfrac{\mu_2}{\sigma_2}- \rho \, \theta_1(m,s_1,s_2).
\end{split}
\end{equation}

Because the frictionless market model with stock prices $\tS$ and $S^{(2)}$ is complete, the standard duality theory can be applied (see, e.g., Theorem 9.11, p.~141 in \cite{KarShr98})
\begin{equation}
\begin{split}
&\sup_{c\in \sC(\tS)}  \bE\Big[\int_0^\infty e^{-\delta t} U(c_t)dt \Big] \\
&= \inf_{z>0} \Big( \sup_{c}\Big(
 \bE\Big[\int_0^\infty e^{-\delta t} U(c_t)dt \Big]
 + z \Big( (\eta_0 + \tS_0 \eta_1 + S_0^{(2)} \eta_2) - \bE\Big[\int_0^\infty c_t H_t dt \Big]  \Big) \Big)\Big)\\
&=\tfrac{ (\eta_0 + \tS_0 \eta_1 + S_0^{(2)} \eta_2)^p}{p} \Big(\bE\Big[\int_0^\infty e^{-(1+q)\delta t} H_t^{-q}dt \Big]  \Big)^{1-p},
\end{split}
\end{equation}
where $q=p/(1-p)$. Consequently, we can rewrite \eqref{dual} as
\begin{equation}
\begin{split}
\inf_{\tS \in \sS} \Big( \sup_{c\in \sC(\tS)}  \bE\Big[\int_0^\infty e^{-\delta t} U(c_t)dt \Big]\Big) 
=\inf_{Y_0} \Big\{ \tfrac{ (\eta_0 + S_0^{(1)} e^{Y_0} \eta_1 + S_0^{(2)} \eta_2)^p}{p} |w(Y_0)|^{1-p} \Big\}, 
\end{split}
\end{equation}
with
\begin{equation}
\begin{split}\label{w}
w(y):=\inf_{m,s_1,s_2} \Big\{ \sgn (p)\, \bE\Big[\int_0^\infty e^{-(1+q)\delta t} H_t^{-q}dt \Big| Y_0=y\Big] \Big\}.
\end{split}
\end{equation}
The formal HJB equation for \eqref{w} has the following form
\begin{equation}
\begin{split}\label{w_hjb}
\inf_{m,s_1, s_2} \Big\{-\alpha(m,s_1,s_2) w(y) +(m+\beta(m,s_1,s_2))w'(y) + \gamma(s_1,s_2) w''(y) +\sgn{(p)}\Big \} =0, 
\end{split}
\end{equation}
where (with $\theta_1=\theta_1(m,s_1,s_2)$ and $\theta_2=\theta_2(m,s_1,s_2)$)
\begin{equation}
\begin{split}\label{ab}
 \alpha(m,s_1,s_2)&:=(1+q)\delta - \tfrac{q(1+q)}{2} \big(\theta_1^2+\theta_2^2+2\rho \, \theta_1 \theta_2\big),\\
 \beta(m,s_1,s_2)&:=q\big((s_1 + \rho s_2)\theta_1+ (\rho s_1 + s_2)\theta_2\big),\\
 \gamma(s_1,s_2)&:=\tfrac{1}{2}\big(s_1^2+s_2^2+2\rho s_1 s_2\big).
\end{split}
\end{equation}
To incorporate the requirement $Y_t\in [\uy,\oy]$, we turn off the diffusion ($s_{1t}=s_{2t}=0$) whenever $Y_t$ reaches the boundary $\uy$ or $\oy$, and let the drift be the inward direction. 
%The heuristic derivation of the boundary condition in \cite{ChoSirZit11} is still valid here: 
By observing the form of the minimizer in \eqref{w_hjb}, we infer that the boundary condition would be
\begin{equation}
\begin{split}\label{w_boundary}
w''(\uy)=w''(\oy)=\infty.
\end{split}
\end{equation}
To handle this infinite boundary condition and reduce the order of the differential equation, we change variable. Let $x=-w'(y)$ and define the function $g:[\ux,\ox] \mapsto \bR$ as $g(x)=w(y)$, with $\ux=-w'(\oy)$ and $\ox=-w'(\uy)$. With $x$ and $g$, \eqref{w_hjb} is written as
\begin{equation}
\begin{split}\label{g_hjb}
&\inf_{m,s_1, s_2} \Big\{-\alpha(m,s_1,s_2) g(x) -(m+\beta(m,s_1,s_2))x + \gamma(s_1,s_2) \tfrac{x}{g'(x)} +\sgn{(p)}\Big \} =0, \quad x\in [\ux,\ox].
\end{split}
\end{equation}
\eqref{w_boundary} and the relation $dy/dx = -g'(x)/x$ produce a boundary condition and an integral constraint:
\begin{equation}
\begin{split}\label{g_boundary}
g'(\ux)=g'(\ox)=0,\quad \int_{\ux}^{\ox} \tfrac{g'(x)}{x} dx = \oy-\uy.
\end{split}
\end{equation}
As $\ux$ and $\ox$ are not predetermined, \eqref{g_hjb} together with \eqref{g_boundary} is a free boundary problem with an integral constraint. 

\begin{remark}
The purpose of this section is only to derive the free boundary problem which we analyze rigorously in the next section: The arguments in this section is heuristic and not rigorous. 
\end{remark}

\section{The results}

In this section, we first present the existence result for the solution of the free boundary problem that we derived in the previous section. Then we construct the candidate shadow price process $\tS$ using the solution of the free boundary problem. In \lemref{shadow solve}, we solve the optimization problem for the market with the candidate shadow price process. In \thmref{main verify}, we verify that $\tS$ is indeed the shadow price process by checking the conditions in \proref{shadow_connection} (2), and conclude that the optimal solution in \lemref{shadow solve} also solves the original transaction cost problem \eqref{primal problem}. Finally, we provide an explicit characterization of the well-posedness of the problem in \thmref{well-posed thm}.

\subsection{Construction of the shadow price}

The proofs of results related to the free boundary problem are postponed to Section 6 due to their technical nature.

\begin{proposition} \label{FBO}
Assume that the model parameters satisfy one of the following conditions:\\
(i) $p \leq 0$,\\
(ii) $0<p<1$ and $\delta > \tfrac{q}{2(1-\rho^2)} \big( (\tfrac{\mu_1}{\sigma_1})^2+ (\tfrac{\mu_2}{\sigma_2})^2- 2\rho \tfrac{\mu_1 \mu_2}{\sigma_1 \sigma_2} \big) $,\\
(iii) $0<p<1$, $\delta \leq  \tfrac{q}{2(1-\rho^2)} \big( (\tfrac{\mu_1}{\sigma_1})^2+ (\tfrac{\mu_2}{\sigma_2})^2- 2\rho \tfrac{\mu_1 \mu_2}{\sigma_1 \sigma_2} \big) $ and 
$c^*< \ln \big(\tfrac{1+\overline{\lambda}}{1-\underline{\lambda}}\big)$,\\
where $c^*$ is a constant explicitly defined in \defref{c definition}. 

\smallskip
Then, there exist constants $\underline{x}, \overline{x}$ and a function $g \in C^2 [\underline{x},\overline{x}]$ that satisfy following conditions: \\
(1) If $\mu_1>\frac{\rho \mu_2 \sigma_1}{\sigma_2}$, then $0<\ux<\ox$. If $\mu_1<\frac{\rho \mu_2 \sigma_1}{\sigma_2}$, then $\ux<\ox<0$.         \\
(2)  For $x\in [\underline{x}, \overline{x}]$, $g$ satisfies the differential equation 
\begin{equation}
\begin{split}\label{inf_ode}
\inf_{m,s_1, s_2} \Big\{-\alpha(m,s_1,s_2) g(x) -(m+\beta(m,s_1,s_2))x + \gamma(s_1,s_2) \tfrac{x}{g'(x)} +\sgn{(p)}\Big \} =0,
\end{split}
\end{equation}
where the functions $\alpha, \beta, \gamma$ are given in \eqref{theta} and \eqref{ab}.\\
(3) The following boundary/integral conditions are satisfied
	\begin{equation}\label{equ:integral-cond}
	 \begin{split}
	   g'(\ux)=g'(\ox)=0\text{ and }
	\int_{\ux}^{\ox} \tfrac{g'(x)}{x}\,
	dx=\ln(\tfrac{1+\old}{1-\uld}).
	 \end{split}
	\end{equation}
(4) The functions 
$$q\, g(x), \quad q \,  g(x) (g'(x)+1)-(1+q) x g'(x), \quad q\,\big( g(x)-x\, g'(x)\big), \quad \textrm{  and  } \quad g'(x)+1$$  are strictly positive on $[\ux,\ox]$. Recall that $q=p/(1-p)$.\\
(5) $g'(x)/x > 0$ for $x\in (\ux,\ox)$.
\end{proposition}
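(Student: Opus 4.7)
The plan is to reduce \eqref{inf_ode} to a conventional second-order ODE by carrying out the pointwise minimization in $(m,s_1,s_2)$ explicitly, and then to construct the triple $(g,\ux,\ox)$ via a shooting argument matched to the integral constraint in \eqref{equ:integral-cond}. The rigorous execution is postponed to Section 5; here I sketch the structure of the argument.

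\emph{Pointwise minimization.} At each fixed $x$, treating $g$ and $g'$ as parameters, the variable $m$ enters $\theta_1$ (and hence $\alpha,\beta$) only linearly, so the first-order condition in $m$ determines $m^\ast$ explicitly as a function of $s_1,s_2,g,g'$. Substituting and minimizing in $(s_1,s_2)$ yields closed-form minimizers whose nondegeneracy relies on the conditions $\mu_1\neq\rho\mu_2\sigma_1/\sigma_2$ and $\mu_2\neq\rho\sigma_1\sigma_2/(1+q)$ in \assref{ass}. Substituting back turns \eqref{inf_ode} into a quasilinear second-order ODE of the form
\[
 g''(x)=\Phi\bigl(x,g(x),g'(x)\bigr)
\]
on the free interior, with $\Phi$ real and locally Lipschitz on the region of interest thanks to the finiteness condition in \assref{ass}.

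\emph{Boundary analysis and shooting.} At any point $x^\ast\neq 0$ with $g'(x^\ast)=0$, the term $\gamma(s_1,s_2)\,x/g'(x)$ forces the minimum in $(s_1,s_2)$ to be attained at $s_1=s_2=0$; \eqref{inf_ode} then degenerates to an algebraic relation between $g(x^\ast)$ and $x^\ast$, obtained by minimizing $-\alpha(m,0,0)g-(m+\beta(m,0,0))x+\sgn(p)$ in $m$. This pins down the admissible initial data at the left boundary up to the scale symmetry $g\mapsto\lambda g$ of the reduced ODE, and a local analysis near a zero of $g'$ shows that a unique $C^2$ solution emanates from $\ux$ with $g'(\ux)=0$ and $g''(\ux)>0$. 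I would then parametrize by $\ux$, with the sign of $\mu_1-\rho\mu_2\sigma_1/\sigma_2$ selecting the correct half-line (yielding conclusion (1)), solve the initial value problem forward, and let $\ox(\ux)$ be the first subsequent zero of $g'$. Standard continuation gives continuity of $\ox(\ux)$ in $\ux$, and the map
\[
 I(\ux):=\int_{\ux}^{\ox(\ux)}\tfrac{g'(x)}{x}\,dx
\]
is continuous in $\ux$, vanishing as $\ux$ approaches the frictionless Merton value (so that $\ox-\ux\to 0$) and diverging in the opposite limit. The intermediate value theorem then produces an $\ux$ for which $I(\ux)=\log\bigl((1+\old)/(1-\uld)\bigr)$, delivering the required free boundaries.

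\emph{Positivity and main obstacle.} The assertions in (4) and (5) follow from the structure of the reduced ODE together with the boundary data. Strict positivity of $q g$ reflects that the dual integral in \eqref{w} is positive under \assref{ass}; the inequalities $g'(x)+1>0$ and $q(g-xg')>0$ express strict monotonicity and concavity of the dual value function in the primal wealth variable, and I would prove them by differentiating \eqref{inf_ode} and invoking a maximum principle; finally, $g'(x)/x>0$ on $(\ux,\ox)$ follows because $g'$ vanishes at both endpoints with the one-sided signs of $g''$ forced by the boundary form of the ODE. The main obstacle will be the shooting step: establishing continuous, suitably monotone dependence of $\ox(\ux)$ and $I(\ux)$ on the shooting parameter and verifying that $I$ sweeps out all of $(0,\infty)$ as $\ux$ ranges over its admissible half-line, so that the construction accommodates transaction costs of \emph{any} size---a point emphasized by the author and precisely the content deferred to Section 5.
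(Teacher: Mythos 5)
Your global architecture---carry out the pointwise minimization explicitly, shoot from the left free boundary, and apply the intermediate value theorem to the integral $I(\ux)$ using the limits $0$ (as $\ux$ approaches the Merton value $x_M$) and $\infty$ (in the opposite limit)---is exactly the paper's strategy. However, there is a structural error at the reduction step: after substituting the minimizers, \eqref{inf_ode} does not become a second-order equation $g''=\Phi(x,g,g')$. The change of variables $x=-w'(y)$ has already absorbed $w''$ into the term $x/g'(x)$, so the reduced equation is an \emph{implicit first-order} ODE, quadratic in $g'$: it takes the form $A(x,g)\,g'^2+B(x,g)\,g'+C(x,g)=0$ with $A,B,C$ explicit (inhomogeneous) quadratics in $(x,y)$. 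The paper's entire analysis rests on this structure: one shows the discriminant $B^2-4AC$ is nonnegative for $x\neq0$ and vanishes only at one exceptional point $(x_D,y_D)$ (\lemref{common}), selects the correct root $g'=F(x,g)=2C/(-B\pm\sqrt{B^2-4AC})$, and shoots from a point on the conic $C=0$, where $F=0$. Your boundary analysis correctly identifies that $g'=0$ forces $s_1=s_2=0$ and an algebraic relation between $g$ and $x$ (this relation is precisely the conic $C=0$), but there is no scale symmetry $g\mapsto\lambda g$ of the reduced equation---$A,B,C$ are not homogeneous---and the initial value $g(\ux)=\Gamma(\ux)$ is pinned down by choosing the correct branch of that conic, not by a scaling freedom.

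Beyond the misidentified order, the genuinely hard points are left unaddressed. First, uniqueness and $C^2$ regularity of the shot solution: $F$ fails to be Lipschitz where the discriminant vanishes, so the paper must prove the trajectory never passes through $(x_D,y_D)$ (by showing $g'\neq 1/q$ along the solution) before Picard--Lindel\"of applies; Peano alone gives existence but not the claimed uniqueness. Second, the positivity assertions in (4) are not obtained from a maximum principle for a second-order operator; the paper proves, e.g., $q\,g(g'+1)-(1+q)xg'>0$ and $q(g-xg')>0$ by contradiction arguments exploiting the explicit quadratic formulas (a putative zero forces $g'=1/q$, or forces a sign contradiction in a discriminant computation). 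Third, the divergence $I(\ux)\to\infty$ as $\ux$ tends to the non-Merton endpoint requires a quantitative comparison of $g_a$ with level curves of $F$ near the origin, not merely ``standard continuation''; this is the step that makes the result valid for transaction costs of any size. So while your roadmap coincides with the paper's, the proof as proposed would not go through without first correcting the form of the reduced equation and then supplying these three arguments.
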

\begin{proof}
See Section 6.
\end{proof}

We need the next corollary to construct the shadow price process.
\begin{corollary}\label{cor1}
(1) The minimizer $(\hm(x),\hs_1(x),\hs_2(x))$ of \eqref{inf_ode} is well defined on $[\ux,\ox]$.\\
(2) Let $\hat{\alpha}, \hat{\beta},\hat{\gamma},\hat{\theta}_1, \hat{\theta}_2 : [\ux,\ox]\mapsto \bR$ be the composition of the functions $\alpha, \beta, \gamma, \theta_1, \theta_2$ of \eqref{ab} and \eqref{theta} with the optimizers $\hm,\hs_1,\hs_2$ of \eqref{inf_ode}. For instance, $\hat{\alpha}(x):=\alpha(\hm(x),\hs_1(x),\hs_2(x))$. Then the following functions are Lipschitz on $[\ux,\ox]$
\begin{equation}
\begin{split}\label{lip}
\hat{\alpha}, \hat{\beta},\hat{\gamma},\hat{\theta}_1, \hat{\theta}_2, \tfrac{\hs_1(x)}{g'(x)},  \tfrac{\hs_2(x)}{g'(x)}, \tfrac{\hat{\beta}(x)}{g'(x)}.
\end{split}
\end{equation}
(3) For $x\in [\ux,\ox]$, we have
\begin{equation}
\begin{split}\label{envelop}
-\hat{\alpha}(x) g'(x) -(\hm(x)+\hat{\beta}(x)) + \hat{\gamma}(x)\big(\tfrac{x}{g'(x)}\big)' =0.
\end{split}
\end{equation}

\end{corollary}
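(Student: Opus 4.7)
The plan is to use the first-order conditions for the pointwise minimization in \eqref{inf_ode} to obtain explicit formulas for $(\hm(x),\hs_1(x),\hs_2(x))$ in terms of $x$, $g(x)$, $g'(x)$; then all three assertions follow almost mechanically. First I would reparametrize $m$ by $\theta_1$ (since, for fixed $(s_1,s_2)$, the map $m\mapsto\theta_1$ from \eqref{theta} is an affine bijection as long as $s_1+\sigma_1\neq 0$), so that the objective $F(m,s_1,s_2;x):=-\alpha g(x)-(m+\beta)x+\gamma\,x/g'(x)+\sgn(p)$ of \eqref{inf_ode} becomes a polynomial in $(\theta_1,s_1,s_2)$. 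For $x\in(\ux,\ox)$, where $g'(x)\neq 0$ by \proref{FBO}(5), set $\nabla F=0$; the system is a three-variable polynomial system that I expect to solve in closed form. The positivity conditions in \proref{FBO}(4) --- in particular $qg>0$, $q(g-xg')>0$ and $g'+1>0$ --- should force the Hessian of $F$ (in $(\theta_1,s_1,s_2)$-coordinates) to be strictly positive definite, yielding uniqueness. At the endpoints $g'(x)=0$ forces $\hat\gamma=0$ (otherwise $\gamma\cdot x/g'(x)$ diverges), hence $\hat s_1=\hat s_2=0$, and $\hm$ is then determined by the remaining one-variable quadratic minimization in $m$; this establishes part~(1).

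For part (2), once the explicit formulas are in hand, Lipschitz continuity of $\hat\alpha,\hat\beta,\hat\gamma,\hat\theta_1,\hat\theta_2$ on $[\ux,\ox]$ follows from $g\in C^2[\ux,\ox]$ together with the strict positivity guaranteed by \proref{FBO}(4)--(5) (which keeps all denominators bounded away from zero). The delicate point is the ratios $\hs_1/g'(x)$, $\hs_2/g'(x)$ and $\hat\beta/g'(x)$, which look singular at $\ux,\ox$ since $g'(\ux)=g'(\ox)=0$. I expect the FOCs to reveal that $\hs_1$, $\hs_2$ and $\hat\beta$ each contain an explicit factor of $g'(x)$ (equivalently, the scaling $(s_1,s_2)\propto g'(x)$ is the correct decay rate as $x\to\ux,\ox$, consistent with the ``turn off the diffusion at the boundary'' heuristic from Section~3), so these ratios extend to Lipschitz functions on the whole closed interval.

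Part (3) is the envelope theorem. Differentiating the identity $F(\hm(x),\hs_1(x),\hs_2(x);x)\equiv 0$ in $x$ and using $\partial_m F=\partial_{s_1}F=\partial_{s_2}F=0$ at the interior minimizer yields
\[
\partial_x F\big(\hm(x),\hs_1(x),\hs_2(x);x\big)=-\hat\alpha(x)g'(x)-(\hm(x)+\hat\beta(x))+\hat\gamma(x)\Big(\tfrac{x}{g'(x)}\Big)'=0,
\]
which is exactly \eqref{envelop}. At the two boundary points, where the minimum in $(s_1,s_2)$ sits at the corner $(0,0)$, one passes to the limit using the Lipschitz extensions from part~(2).

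The main obstacle is the explicit algebra in part (1): computing the first-order conditions after the $m\leftrightarrow\theta_1$ reparametrization, verifying positive definiteness of the Hessian from the sign conditions in \proref{FBO}(4), and extracting the closed-form minimizers in a way that visibly exhibits the $g'(x)$-factor needed in part (2). Once that is done, parts (2) and (3) are essentially routine, part (3) being just an application of the envelope theorem.
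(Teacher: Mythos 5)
Your plan matches the paper's proof in both outline and tools: both compute the first-order conditions of the quadratic pointwise minimization in \eqref{inf_ode} to obtain explicit formulas for $(\hm,\hs_1,\hs_2)$ (the paper records them in \eqref{optimizer}), both invoke \proref{FBO}(4) to keep the denominators and $\hs_1+\sigma_1$ bounded away from zero on $[\ux,\ox]$, both read off the Lipschitz claims from the explicit factor $g'(x)$ appearing in $\hs_1,\hs_2$ (which is exactly what makes $\hs_1/g'$, $\hs_2/g'$, $\hat\beta/g'$ extend to the endpoints), and both prove \eqref{envelop} by the envelope theorem. Your $m\leftrightarrow\theta_1$ reparametrization is a cosmetic simplification that makes the quadratic structure of the objective explicit, and your boundary discussion (diffusion shut-off forcing $\hat\gamma=0$) is a sound heuristic for why the limits at $\ux,\ox$ behave well, but neither changes the substance of the argument.
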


\begin{proof} 
(1) \eqref{inf_ode} is a simple optimization of a quadratic function. By \proref{FBO} (4) and (5), we can see that the first order condition produces the minimizer as follows
\begin{equation}
\begin{split}\label{optimizer}
\hs_1(x)&=\tfrac{\sigma_1 ( x-q \,g(x)) g'(x)}{q \, g(x)(1+g'(x))- (1+q)x \, g'(x)},\\
\hs_2(x)&=\tfrac{\big(-\rho \sigma_1 \sigma_2 x - (\mu_2(1+q)^2-q \rho \sigma_1 \sigma_2)x g'(x) + q(1+q) \mu_2 g(x) (1+g'(x))     \big)   g'(x)}{\sigma_2\big(q \, g(x)(1+g'(x))- (1+q)x \, g'(x)\big)(1+g'(x))}, \\
\hm(x)&=\tfrac{1 }{2q(1+q) \sigma_2 g(x)} \Big(2(1-\rho^2) \sigma_2 (\sigma_1 +(1+q) \hs_1(x))(\sigma_1 + \hs_1(x)) x \\
&\quad+ q(1+q)\Big(2\mu_2 (\rho(\sigma_1+\hs_1(x))+\hs_2(x)) \\
&\quad-\sigma_2 \big(2\mu_1 + 2\sigma_1 \hs_1(x)+ \hs_1(x)^2 + 2\rho (\sigma_1+ \hs_1(x))\hs_2(x) + \hs_2(x)^2 \big) \Big) g(x)   \Big).
\end{split}
\end{equation}
Therefore, $(\hm(x),\hs_1(x),\hs_2(x))$ are well defined on $[\ux,\ox]$ because of \proref{FBO} (4). 

(2) The form of $\hs_1(x)$ and $\hs_2(x)$ above, and the observation
$$
\hs_1(x) + \sigma_1 =  \tfrac{q \sigma_1 (g(x)-x\, g'(x))}{q \, g(x)(1+g'(x))- (1+q)x \, g'(x)} \neq 0 \quad \textrm{for  } x\in [\ux,\ox],
$$
  show that the functions in \eqref{lip} are Lipschitz on $[\ux,\ox]$, by \proref{FBO} (4).

(3) The appropriate version of the Envelope
  Theorem (see, e.g., Theorem 3.3, p.~475 in \cite{GinKey02}) or direct computation produce \eqref{envelop}. 
\end{proof}

We construct the shadow price process using the solution $(g, \ox,\ux)$ of the free boundary problem in \proref{FBO}. As a preliminary, we define the functions $f,\xi,r:[\ux,\ox]\mapsto \bR$ as
\begin{equation}
\begin{split}\label{fr}
f(x):=& \uy+ \int_{x}^{\ox} \tfrac{g'(t)}{t} dt,\\
\xi(x):=& \eta_0 + \eta_1 S_0^{(1)} e^{f(x)}+\eta_2 S_0^{(2)},\\
r(x):=& \eta_1 S_0^{(1)} e^{f(x)} -  \xi(x)\tfrac{x}{q\, g(x)},
\end{split}
\end{equation}
where $\uy=\ln(1-\underline{\lambda})$ and $\oy=\ln(1+\overline{\lambda})$. Then $e^{f(\ux)}=(1+\overline{\lambda})$ and $e^{f(\ox)}=(1-\underline{\lambda})$. Let the constant $\hx\in [\ux,\ox]$ be defined by\\
\begin{equation}
\begin{split}\label{hx}
\hx =
\begin{cases}
\ox, & r(x)>0\text{ for all } x\in [\ux,\ox]\\
\ux, & r(x)<0\text{ for all } x\in [\ux,\ox]\\
\text{a solution to } r(x)=0, & otherwise.
\end{cases}
\end{split}
\end{equation}
Consider the following reflected (Skorokhod-type) SDE on the interval $[\ux,\ox]$: 
\begin{equation}
\begin{split}\label{X reflect}
\left\{   \begin{split}
dX_t &= \Big(X_t \hat{\alpha}(X_t) + \tfrac{X_t\hat{\beta}(X_t)}{g'(X_t)}\Big) dt - \tfrac{X_t \hs_1(X_t)}{g'(X_t)} dB_t^{(1)} - \tfrac{X_t \hs_2(X_t)}{g'(X_t)} dB_t^{(2)} + d\Phi_t \\
X_0&=\hx.
   \end{split}
\right.\\
\end{split}
\end{equation}
\corref{cor1} (2) implies that the coefficients of the above SDE are Lipschitz on $[\ux,\ox]$. Therefore, the classical result of \cite{Sko61} is applicable: \eqref{X reflect} has a unique solution $(X,\Phi)$ such that $\Phi$ is a continuous process of finite variation and satisfies
\begin{equation}
\begin{split}\label{reflect}
d\Phi_t^{\uparrow} = 1_{\{X_t= \ux\}} d\Phi_t^{\uparrow}, \quad d\Phi_t^{\downarrow} = 1_{\{X_t= \ox\}} d\Phi_t^{\downarrow}.
\end{split}
\end{equation}
We define the process (candidate shadow price process) $\tS$ as
\begin{equation}
\begin{split}\label{tS def}
\tS_t := S_t^{(1)} e^{f(X_t)}
\end{split}
\end{equation}
The intuition is the following. In Section 3, we change variable $(y,w)$ to $(x, g)$, and they satisfy $dy/dx = -g'(x)/x$ and $-w'(\uy)=\ox$, which implies $y=f(x)$. Also in Section 3, the shadow price process has the form of $S_t^{(1)} e^{Y_t}$.

\subsection{Verification argument} In this subsection, we verify that the process $\tS$ in \eqref{tS def} is indeed a shadow price process. First, we study properties of $\tS$.

\begin{proposition} 
(1) $ \uS_t^{(1)} \leq \tS_t \leq \oS_t^{(1)}$  for  $t\geq 0$ a.s.\\
(2) $\tS_t$ satisfies the SDE
\begin{equation}
\begin{split}\label{tS}
\frac{d\tS_t}{\tS_t} &= \Big( \hm(X_t)+\mu_1 + \sigma_1 \big(\hs_1(X_t)+\rho \hs_2(X_t)\big) + \hat{\gamma}(X_t) \Big) dt\\
&\qquad + (\hs_1(X_t)+\sigma_1)dB_t^{(1)} + \hs_2(X_t) dB_t^{(2)}
\end{split}
\end{equation}
\end{proposition}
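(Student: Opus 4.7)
The plan is to derive both assertions from Itô's formula applied to $\tS_t=S_t^{(1)}e^{f(X_t)}$, with the crucial observation that the reflection term of $X$ is annihilated by the boundary conditions $g'(\ux)=g'(\ox)=0$.

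For (1), I would first argue that $f:[\ux,\ox]\to[\uy,\oy]$ is a strictly decreasing bijection. Differentiating the definition \eqref{fr} gives $f'(x)=-g'(x)/x$, which is strictly negative on $(\ux,\ox)$ by \proref{FBO}(5). The endpoint values follow directly: $f(\ox)=\uy$ by construction, and $f(\ux)=\uy+\int_{\ux}^{\ox} g'(t)/t\,dt=\uy+(\oy-\uy)=\oy$ by the integral constraint in \eqref{equ:integral-cond}. Since $X_t\in[\ux,\ox]$, we obtain $f(X_t)\in[\uy,\oy]$, which yields $\uS_t^{(1)}=S_t^{(1)}e^{\uy}\le\tS_t\le S_t^{(1)}e^{\oy}=\oS_t^{(1)}$.

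For (2), set $Y_t:=f(X_t)$ and apply Itô. The key point is that although $X$ has the reflection term $d\Phi_t$, we have $f'(X_t)\,d\Phi_t=-\frac{g'(X_t)}{X_t}d\Phi_t$, which vanishes since $d\Phi_t$ is supported on $\{X_t=\ux\}\cup\{X_t=\ox\}$ by \eqref{reflect} and $g'(\ux)=g'(\ox)=0$. Hence
\begin{equation*}
dY_t = f'(X_t)\,dX_t^{\mathrm{cont}} + \tfrac12 f''(X_t)\,d\langle X\rangle_t,
\end{equation*}
where $dX^{\mathrm{cont}}$ denotes the non-reflected part of $dX$. Substituting \eqref{X reflect}, the martingale part of $dY_t$ becomes $\hs_1(X_t)\,dB^{(1)}_t+\hs_2(X_t)\,dB^{(2)}_t$, while the $dt$ contribution from $dX^{\mathrm{cont}}$ is $-g'(X_t)\hat\alpha(X_t)-\hat\beta(X_t)$. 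Using $d\langle X\rangle_t=2\hat\gamma(X_t)X_t^2/g'(X_t)^2\,dt$ and the identity $f''(x)=-\big(g'(x)/x\big)'$, a short computation yields $\tfrac12 f''(X_t)\,d\langle X\rangle_t=\hat\gamma(X_t)\big(x/g'(x)\big)'\big|_{x=X_t}\,dt$. The envelope identity \eqref{envelop} in \corref{cor1}(3) then gives exactly
\begin{equation*}
dY_t = \hm(X_t)\,dt + \hs_1(X_t)\,dB^{(1)}_t + \hs_2(X_t)\,dB^{(2)}_t.
\end{equation*}

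Finally, applying Itô's product/exponential rule to $\tS_t=S_t^{(1)}e^{Y_t}$, and picking up the quadratic variation $d\langle Y\rangle_t=2\hat\gamma(X_t)\,dt$ as well as the cross-variation $d\langle S^{(1)},Y\rangle_t/S_t^{(1)}=\sigma_1(\hs_1(X_t)+\rho\hs_2(X_t))\,dt$, yields the stated SDE \eqref{tS}. The main (only) obstacle is the careful handling of the reflection, but it is neutralized by the zero-derivative boundary conditions of $g$; the rest is an algebraic assembly in which the envelope identity \eqref{envelop} is precisely what converts the Itô corrections into the drift $\hm(X_t)$ of $Y$.
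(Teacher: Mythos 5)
Your proposal is correct and follows essentially the same route as the paper: monotonicity of $f$ plus the integral constraint for (1), and It\^o's formula for $f(X_t)$ with the reflection term killed by $g'(\ux)=g'(\ox)=0$ and the drift simplified via the envelope identity \eqref{envelop} for (2). You merely spell out the intermediate computations (quadratic variation of $X$, the identity $f''=-\bigl(g'/x\bigr)'$, and the cross-variation with $S^{(1)}$) that the paper leaves implicit.
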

\begin{proof}
(1) \proref{FBO} (5) implies that $f$ is a monotonically decreasing function. Hence $\uy \leq f(x) \leq \oy$, which implies $ \uS_t^{(1)} \leq \tS_t \leq \oS_t^{(1)}$. \\
(2) By Ito's formula,
\begin{equation}
\begin{split}\label{Y}
d(f(X_t))&=\Big(-\tfrac{g'(x)}{x} \Big(x \hat{\alpha}(x) + \tfrac{x \hat{\beta}(x)}{g'(x)}\Big) +\big(\tfrac{x}{g'(x)}\big)'  \hat{\gamma}(x)
\Big)\Big|_{x=X_t} dt \\
&\quad + \hs_1(X_t) dB_t^{(1)} + \hs_2(X_t) dB_t^{(2)} - \tfrac{g'(x)}{x}  d\Phi_t \\
&=\hm(X_t) dt + \hs_1(X_t) dB_t^{(1)} + \hs_2(X_t) dB_t^{(2)},
\end{split}
\end{equation}
where the $dt$ term is simplified by \eqref{envelop}, and the reflection term $( \tfrac{g'(x)}{x}  d\Phi_t)$ vanishes because of $g'(\ux)=g'(\ox)=0$ and \eqref{reflect}. Ito's formula for $\tS_t = S_t^{(1)} e^{f(X_t)}$, together with \eqref{S} and \eqref{Y}, produces \eqref{tS}.
\end{proof}

In the frictionless market with $(\tS,S^{(2)})$,  the state price density process $\hH$ is given by
\begin{equation}
\begin{split}\label{stateprice}
\frac{d\hH_t}{\hH_t} = -\hat{\theta}_1(X_t) dB_t^{(1)} - \hat{\theta}_2(X_t) dB_t^{(2)}, \quad \hH_0=1.
\end{split}
\end{equation}

Consider the optimization problem in the frictionless market $(\tS,S^{(2)})$
\begin{equation}
\begin{split}\label{shadow problem}
\sup_{c\in \sC(\tS)}  \bE\Big[\int_0^\infty e^{-\delta t} U(c_t)dt \Big].
\end{split}
\end{equation}

In the next lemma, we characterize the value and the optimal strategy for \eqref{shadow problem}.

\begin{lemma}\label{shadow solve}
(1) Let $\tS$ and $\hH$ be as in \eqref{tS def} and \eqref{stateprice}. Then
\begin{equation}
\begin{split}\label{shadow value}
\sup_{c\in \sC(\tS)}  \bE\Big[\int_0^\infty e^{-\delta t} U(c_t)dt \Big] 
=\tfrac{\xi(\hx)^p}{p} |g(\hx)|^{1-p}.
\end{split}
\end{equation}
(2) In \eqref{shadow value}, the optimal wealth $\hW$ and the optimal consumption/investment $(\hc,\hat{\varphi}^{(0)},\hat{\varphi}^{(1)},\hat{\varphi}^{(2)})$ can be written as following
\begin{equation}
\begin{split}\label{strategy}
\hW_t &=\xi(\hx) e^{-(1+q)\delta t} \hH_t^{-(1+q)}\frac{g(X_t)}{g(\hx)},\\
\hc_t &= \tfrac{\hW_t}{|g(X_t)|}, \,\,\, \hat{\varphi}^{(0)}_t = (1-\pi_1(X_t)-\pi_2(X_t))\hW_t,\,\,\,\hat{\varphi}^{(1)}_t = \tfrac{\pi_1(X_t) \hW_t}{\tS_t},\,\,\, \hat{\varphi}^{(2)}_t = \tfrac{\pi_2(X_t) \hW_t}{S_t^{(2)}},\,\,\,
\end{split}
\end{equation}
where the functions $\pi_1,\pi_2:[\ux,\ox]\mapsto \mathbb{R}$ are
\begin{equation}
\begin{split}\label{pi12}
\pi_1(x)&:=\tfrac{(1+q)\hat{\theta}_1(x)-\frac{x \hs_1(x)}{g(x)} }{\hs_1(x)+\sigma_1}, \\
 \pi_2(x)&:= \tfrac{1}{\sigma_2}\Big((1+q) \hat{\theta}_2(x) g(x)- \tfrac{x \hs_2(x)}{g(x)} - \pi_1(x) \hs_2(x)\Big).
\end{split}
\end{equation}
\end{lemma}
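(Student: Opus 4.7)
The plan is to exploit completeness of the frictionless market $(\tS,S^{(2)})$ to apply standard CRRA convex duality, and then reduce the problem to the single identity
$$
F:=\bE\Big[\int_0^\infty e^{-(1+q)\delta t}\hH_t^{-q}\,dt\Big]=|g(\hx)|,
$$
after which the explicit form of the optimiser falls out of the Markov structure. In the orthogonalised basis obtained from $B^{(2)}=\rho B^{(1)}+\sqrt{1-\rho^2}\tilde B^{(2)}$, the instantaneous volatility matrix of $(\tS,S^{(2)})$ has determinant $\sigma_2\sqrt{1-\rho^2}(\hs_1(X_t)+\sigma_1)$, which is non-zero because the denominator $s_1+\sigma_1$ in the definition \eqref{theta} of $\theta_1$ forces $\hs_1+\sigma_1\ne 0$ throughout $[\ux,\ox]$. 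Hence the market is complete, and the classical martingale method for CRRA utility gives
$$
\hc_t=(z_0 e^{\delta t}\hH_t)^{-(1+q)},\qquad z_0^{-(1+q)}F=\xi(\hx),\qquad \sup_{c\in\sC(\tS)}\bE\Big[\int_0^\infty e^{-\delta t}U(c_t)dt\Big]=\tfrac{\xi(\hx)^p}{p}F^{1-p}.
$$
So part (1) of the lemma reduces to identifying $F$.

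For this identification, I apply \ito's formula to $N_t:=M_t\,g(X_t)$ with $M_t:=e^{-(1+q)\delta t}\hH_t^{-q}$. From \eqref{stateprice} one gets $dM_t=-\hat\alpha(X_t)M_t\,dt+qM_t(\hat\theta_1 dB^{(1)}_t+\hat\theta_2 dB^{(2)}_t)$; expanding $dg(X_t)$ via \eqref{X reflect} and $\gamma$ from \eqref{ab}; and the cross-variation $d[M,g(X)]_t=-M_tX_t\hat\beta(X_t)\,dt$ comes out from the definition of $\beta$. The total drift of $N_t$ collapses to
$$
M_t\Bigl[\hat\alpha(X_t)\bigl(X_tg'(X_t)-g(X_t)\bigr)+\hat\gamma(X_t)\tfrac{X_t^2 g''(X_t)}{g'(X_t)^2}\Bigr]dt,
$$
while the reflection contribution $M_tg'(X_t)\,d\Phi_t$ vanishes because $g'(\ux)=g'(\ox)=0$. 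Substituting $\hat\alpha g$ from \eqref{inf_ode} and $\hat\alpha g'$ from the envelope identity \eqref{envelop} collapses this bracket identically to $-\sgn(p)$. Thus $N_t+\sgn(p)\int_0^t M_s\,ds$ is a local martingale; standard integrability arguments upgrade it to a true martingale and give $\bE[N_t]\to 0$ as $t\to\infty$, yielding $\sgn(p)F=g(\hx)$. By \proref{FBO}(4) we have $\sgn(g)=\sgn(p)$, so $F=|g(\hx)|$.

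For part (2), the martingale representation of the dual-optimal consumption together with the Markov property of $(X,\hH)$ gives
$$
\hW_t=\tfrac{1}{\hH_t}\bE\Big[\int_t^\infty \hH_u\hc_u\,du\,\Big|\,\FF_t\Big]=z_0^{-(1+q)}e^{-(1+q)\delta t}\hH_t^{-(1+q)}|g(X_t)|,
$$
using part (1) applied from time $t$ with initial state $X_t$. Substituting $z_0^{-(1+q)}=\xi(\hx)/|g(\hx)|$ and $|g|=\sgn(p)g$ recovers the first line of \eqref{strategy}; the identity $\hc_t=\hW_t/|g(X_t)|$ is then immediate. To read off the portfolio weights I apply \ito's formula to $\hW_t=K(t,\hH_t,X_t)$ with $K(t,h,x)=\xi(\hx)e^{-(1+q)\delta t}h^{-(1+q)}g(x)/g(\hx)$, and equate the $dB^{(1)}_t,dB^{(2)}_t$ coefficients with those coming from the self-financing equation
$$
d\hW_t=\pi_1\hW_t\,d\tS_t/\tS_t+\pi_2\hW_t\,dS^{(2)}_t/S^{(2)}_t-\hc_t\,dt.
$$
This is a $2\times 2$ linear system for $(\pi_1,\pi_2)$ with determinant $\sigma_2(\hs_1+\sigma_1)\ne 0$, and its unique solution is the pair in \eqref{pi12}; the drift matches automatically by the ODE.

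The main obstacle is the integrability/convergence step in the identification of $F$: upgrading the local-martingale part of $N$ to a true martingale on each finite horizon and justifying $\bE[N_t]\to 0$ as $t\to\infty$. Both should follow from exponential-in-time decay of $\bE[M_t]$ (or a suitable power of $M_t$), which is precisely what \assref{ass} encodes (cf.\ \remref{finite value}), combined with the uniform bounds on $g$, $\hat\alpha$, $\hat\beta$, $\hat\gamma$, $\hat\theta_i$ and $\hs_i/g'$ on the compact interval $[\ux,\ox]$ provided by \proref{FBO} and \corref{cor1}.
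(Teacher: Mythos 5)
You follow the same high-level path as the paper: complete-market CRRA duality reduces \eqref{shadow value} to the scalar identity $\bE[\int_0^\infty e^{-(1+q)\delta t}\hH_t^{-q}\,dt]=|g(\hx)|$, which you establish by showing that the drift of $N_t=M_t g(X_t)$ (with $M_t=e^{-(1+q)\delta t}\hH_t^{-q}$) collapses to $-\sgn(p)M_t$ via \eqref{inf_ode} and \eqref{envelop}. Your drift computation is correct. The place where you diverge from the paper — and where the gap you flag is genuine — is the martingale step. Working under $\bP$ as you do, the diffusion part of $N$ carries the unbounded factor $M_u$, so showing that the stochastic integral has zero expectation requires moment control on $\hH^{-q}$ that you do not supply, and the claim ``$\bE[N_t]\to 0$'' is not uniform across $p$. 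The paper avoids this entirely via the algebraic identity $M_t=\sE(q\hat{\theta}\cdot B)_t\,e^{-\int_0^t\hat{\alpha}(X_s)ds}$: changing measure to $d\bar{\bP}_t=\sE(q\hat{\theta}\cdot B)_t\,d\bP$ turns $\bE[N_t]$ into $\bE^{\bar{\bP}_t}[e^{-\int_0^t\hat{\alpha}}g(X_t)]$, whose stochastic-integral part has the \emph{bounded} integrand $e^{-\int_0^u\hat{\alpha}}X_u\hs_i(X_u)$ against a $\bar{\bP}_t$-Brownian motion, so zero expectation is immediate. You should either adopt that device or substantiate the integrability claim with explicit moment bounds. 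The passage $t\to\infty$ also needs the $p>0$ and $p<0$ cases handled separately as the paper does (subsequence extraction from $\int_0^\infty\bE[M_u]\,du<\infty$ together with positivity and boundedness of $g$ when $p>0$; the bound $\hat{\alpha}>(1+q)\delta$ and boundedness of $g$ when $p<0$), which is more than a generic integrability remark. For part (2), your derivation — $\hW_t=\hH_t^{-1}\bE[\int_t^\infty \hH_u\hc_u\,du\mid\FF_t]$ combined with the Markov property of $X$, then reading $\pi_1,\pi_2$ off the $dB^{(1)},dB^{(2)}$ coefficients against the self-financing SDE with determinant $\sigma_2(\hs_1+\sigma_1)\ne0$ — is actually more systematic than the paper's (which states the candidates and declares the \ito\ verification routine without showing it), and is correct as sketched.
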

\begin{proof}
(1) We first prove the following equality
\begin{equation}
\begin{split}\label{g rep}
g(\hx)= \sgn(p) \bE\Big[\int_0^\infty e^{-(1+q)\delta t} \hH_t^{-q}dt \Big].
\end{split}
\end{equation}
Using Ito's formula and \eqref{envelop}, we have
\begin{equation}
\begin{split}\label{dg}
d(g(X_t))= \big(-X_t \hm(X_t) + \tfrac{X_t \hat{\gamma}(X_t)}{g'(X_t)}\big)dt - X_t \hs_1(X_t)dB_t^{(1)}- X_t \hs_2(X_t)dB_t^{(2)},
\end{split}
\end{equation}
where the reflection term vanishes because of $g'(\ux)=g'(\ox)=0$ and \eqref{reflect}.

Observe that the stochastic exponential $\sE (q \hat{\theta} \cdot B)$, with $\hat{\theta}_t = (\hat{\theta}_1(X_t),\hat{\theta}_2(X_t))$ and $B_t=(B_t^{(1)},B_t^{(2)})$, is a martingale because $\hat{\theta}$ is bounded. Let $\bar{B}^{(1)},\bar{B}^{(2)}$ be defined by
\begin{displaymath}
\begin{split}
\bar{B}^{(1)}_t:= B_t^{(1)} - q \int_0^t \hat{\theta}_1(X_s) + \rho \, \hat{\theta}_2(X_s) \, ds,\quad
\bar{B}^{(2)}_t:= B_t^{(2)} - q \int_0^t \rho \, \hat{\theta}_1(X_s) + \hat{\theta}_2(X_s) \, ds.
\end{split}
\end{displaymath}
As $\hat{\theta}_1$ and $\hat{\theta}_2$ are bounded on $[\ux,\ox]$, by Girsanov's theorem, $\bar{B}^{(1)}$ and $\bar{B}^{(2)}$ are Brownian motions on $[0,t]$ under the measure $\bar{\bP}_t$, defined by $d\bar{\bP}_t= \sE (q \hat{\theta} \cdot B)_t \,d\bP$. Then,
\begin{equation}
\begin{split}\label{veri}
&\bE^{\bar{\bP}_t}\Big[ e^{-\int_0^t \hat{\alpha}(X_u)du}g(X_t) \Big]\\
&=g(\hx) -  \bE^{\bar{\bP}_t}\Big[ \int_0^t e^{-\int_0^u \hat{\alpha}(X_s) ds} \Big(\sgn(p)+ X_u \big(\hs_1(X_u) d\bar{B}_u^{(1)} +\hs_2(X_u) d\bar{B}_u^{(2)} \big)     \Big) du  \Big] \\
&=g(\hx) -\sgn(p)\,  \bE^{\bar{\bP}_t}\Big[\int_0^t e^{-\int_0^u \hat{\alpha}(X_s) ds} du\Big] \\
%&=g(\hx) -\sgn(p) \int_0^t \bE \Big[ e^{-(1+q)\delta u} \hH_u^{-q} \Big] du\\
&=g(\hx)-\sgn(p)\,  \bE \Big[\int_0^t e^{-(1+q)\delta u} \hH_u^{-q} du\Big]
\end{split}
\end{equation}
Here the first equality uses Ito's formula and \eqref{inf_ode}, and the second equality holds because $B_t^{(1)}$ and $B_t^{(2)}$ are Brownian motions under the measure $\bar{\bP}_t$ and the integrands are bounded. The third equality is due to \eqref{stateprice} and $d\bar{\bP}_t= \sE (q \hat{\theta} \cdot B)_t d\bP$.

We have two cases to consider, $p>0$ and $p<0$.\\
(i) In case $p>0$: Because $g(x)$ is positive (see \proref{FBO} (4)), \eqref{veri} implies that $$\bE [\int_0^{\infty} e^{-(1+q)\delta t} \hH_t^{-q} dt]<\infty.$$ Hence, there exists a sequence $(t_n)_{n\in \bN}$ with $t_n \to \infty$ such that $\bE[e^{-(1+q)\delta t_n} \hH_{t_n}^{-q}] \to 0$. Because $g$ is bounded, we have 
$$\bE^{\bar{\bP}_{t_n}}\Big[ e^{-\int_0^{t_n} \hat{\alpha}(X_u)du}g(X_{t_n}) \Big]=\bE[e^{-(1+q)\delta t_n} \hH_{t_n}^{-q}g(X_{t_n})] \to 0 \quad \textrm{as  } t_n \to \infty $$
Therefore, we take limit $t_n\to \infty$ in \eqref{veri} and conclude \eqref{g rep}. \\
(ii) In case $p<0$: From the form of the function $\alpha$ in \eqref{ab} and $q<0$, we have $\hat{\alpha}> (1+q)\delta$.  Because $g$ is bounded, 
\begin{equation}
\begin{split}
\Big|\bE^{\bar{\bP}_t}\Big[ e^{-\int_0^t \hat{\alpha}(X_u)du}g(X_t) \Big]  \Big| \leq |g|_{\infty} e^{-(1+q)\delta t}  \to 0 \quad \textrm{as  }t\to \infty.
\end{split}
\end{equation}
Let $t \to \infty$ in \eqref{veri}, we conclude \eqref{g rep}.

Therefore, we conclude that \eqref{g rep} holds for all cases. Now the standard duality theory for complete market model (see, e.g., Theorem 9.11, p.~141 in \cite{KarShr98})\footnote{For the application of Theorem 9.11 in \cite{KarShr98}, one needs to check Assumption 9.9 in \cite{KarShr98}. In the current setup, Assumption 9.9 amounts to $\bE [\int_0^{\infty} e^{-(1+q)\delta t} \hH_t^{-q} dt]<\infty$, which is true by \eqref{g rep}. } implies that
\begin{equation}
\begin{split}\label{duality thm}
\sup_{c\in \sC(\tS)}  \bE\Big[\int_0^\infty e^{-\delta t} U(c_t)dt \Big] =\tfrac{ \xi(\hx)^p}{p} \Big(\bE\Big[\int_0^\infty e^{-(1+q)\delta t} \hH_t^{-q}dt \Big]  \Big)^{1-p},
\end{split}
\end{equation}
and the optimal consumption $\hc$ is
\begin{equation}
\begin{split}\label{c raw}
\hc_t=\frac{\xi(\hx) e^{-(1+q)\delta t} \hH_t^{-(1+q)}}{\bE\big[\int_0^\infty e^{-(1+q)\delta t} \hH_t^{-q}dt \big]}.
\end{split}
\end{equation}
By \eqref{g rep} and \eqref{duality thm}, we obtain \eqref{shadow value}.

\smallskip

(2) Obviously, $\hW_t>0$ for $t\geq 0$. As we have \eqref{c raw} and \eqref{g rep}, it is enough to check the budget constraint in \defref{consistent} (2). It can be written as
\begin{equation}
\begin{split}\label{dW}
\frac{d\hW_t}{\hW_t} = \pi_1(X_t) \frac{d\tS_t}{\tS_t} + \pi_2(X_t) \frac{dS_t^{(2)}}{S_t^{(2)}} -\frac{\hc_t}{\hW_t} dt.
\end{split}
\end{equation}
Using Ito's formula with \eqref{stateprice}, \eqref{dg}, \eqref{tS}, \eqref{pi12} and \eqref{inf_ode}, one can check that the budget constraint holds (the computation is rather long and tedious but elementary, so it is omitted).
\end{proof}

Now we are ready to state our main result. In \thmref{main verify}, we verify that the process $\tS$ in \eqref{tS def} is indeed a shadow price process. Consequently, the optimal consumption/trading strategy $(\hc,\hat{\varphi}^{(0)},\hat{\varphi}^{(1)},\hat{\varphi}^{(2)})$ of the frictionless problem \eqref{shadow problem} also satisfies \eqref{adm} and \eqref{budget}, and $\hc\in \sC$ is the optimizer of \eqref{primal problem}.

\begin{theorem}\label{main verify}(Existence of the shadow price) Under the assumptions in \proref{FBO}, the processes $(\hc,\hat{\varphi}^{(0)},\hat{\varphi}^{(1)},\hat{\varphi}^{(2)})$ in \eqref{strategy} solves \eqref{primal problem}. In other words, $(\hc,\hat{\varphi}^{(0)},\hat{\varphi}^{(1)},\hat{\varphi}^{(2)})$ satisfies the conditions in \defref{financeable}  (therefore $\hc\in \sC$), and
\begin{displaymath}
\begin{split}
\sup_{c\in \sC} \bE\Big[\int_0^\infty e^{-\delta t} U(c_t)dt \Big]&=\bE\Big[\int_0^\infty e^{-\delta t} U(\hc_t)dt \Big]=\tfrac{ \xi(\hx)^p}{p} |g(\hx)|^{1-p}.
\end{split}
\end{displaymath}
Indeed, $\tS$ is a shadow price process.
\end{theorem}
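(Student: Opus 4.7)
The plan is to invoke Proposition~\ref{shadow_connection}\,(2) with the shadow candidate $\tS$ of \eqref{tS def} and the quadruple $(\hat{\varphi}^{(0)}, \hat{\varphi}^{(1)}, \hat{\varphi}^{(2)}, \hc)$ from \eqref{strategy}. Lemma~\ref{shadow solve} has already established that this quadruple is optimal and financeable in the frictionless market with price $\tS$, so the task reduces to verifying the four structural conditions (i)--(iv) of Proposition~\ref{shadow_connection}\,(2) that upgrade it to a strategy in the original transaction-cost market. Once those hold, the proposition yields $\hc \in \sC$ together with $\bE[\int_0^\infty e^{-\delta t} U(\hc_t)\,dt] = \sup_{c \in \sC}\bE[\int_0^\infty e^{-\delta t} U(c_t)\,dt]$, and the explicit value $\tfrac{\xi(\hx)^p}{p}|g(\hx)|^{1-p}$ then follows at once from Lemma~\ref{shadow solve}\,(1).

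Conditions (iv) and (ii) are the softer ones. Continuity (iv) is read off from the SDEs: $X$ is continuous as the Skorokhod solution of \eqref{X reflect}, $\hH$ and $\tS$ inherit continuity from \eqref{stateprice} and \eqref{tS}, and composition with the continuous maps $\pi_1, \pi_2, 1/|g|$ on $[\ux, \ox]$ (with $g$ bounded away from zero by Proposition~\ref{FBO}\,(4)) preserves this, the only possible jump being the permitted one at $t=0-$. For admissibility (ii), I plan to compare the liquidation value $V_t := \hat{\varphi}^{(0)}_t + \uS^{(1)}_t(\hat{\varphi}^{(1)}_t)^+ - \oS^{(1)}_t(\hat{\varphi}^{(1)}_t)^- + S^{(2)}_t\hat{\varphi}^{(2)}_t$ with $\hW_t \geq 0$: the identity $\hW_t - V_t = (\tS_t - \uS^{(1)}_t)(\hat{\varphi}^{(1)}_t)^+ + (\oS^{(1)}_t - \tS_t)(\hat{\varphi}^{(1)}_t)^-$ reduces admissibility to a size bound on $|\pi_1(X_t)|$ in terms of $\tS_t, \uS^{(1)}_t, \oS^{(1)}_t$, which I will extract from the explicit form of $\pi_1$ in \eqref{pi12} together with the positivity relations in Proposition~\ref{FBO}\,(4)--(5).

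The heart of the proof, and the main obstacle, is the joint verification of (i) (finite variation of $\hat{\varphi}^{(1)}$) and (iii) (the supports of $(\hat{\varphi}^{(1)})^\uparrow$ and $(\hat{\varphi}^{(1)})^\downarrow$ concentrated on the shadow boundaries). I apply It\^o's formula to the composition $\hat{\varphi}^{(1)}_t = \pi_1(X_t)\hW_t/\tS_t$, feeding in \eqref{X reflect}, \eqref{dW} and \eqref{tS}. The key algebraic point is that the specific choice of $\pi_1$ in \eqref{pi12} is engineered precisely so that the martingale part (the $dB^{(1)}_t, dB^{(2)}_t$ coefficients of $d\hat{\varphi}^{(1)}_t$) vanishes, while the envelope identity \eqref{envelop} together with the HJB equation \eqref{inf_ode} should collapse the absolutely continuous drift. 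What remains is a pure reflection contribution $d\hat{\varphi}^{(1)}_t = \kappa(X_t)\,d\Phi_t$ with continuous $\kappa$, which by \eqref{reflect} is automatically finite variation (giving (i)); then the identifications $\{X_t = \ux\} = \{\tS_t = \oS^{(1)}_t\}$ and $\{X_t = \ox\} = \{\tS_t = \uS^{(1)}_t\}$, together with a sign analysis of $\kappa$ that leans on Proposition~\ref{FBO}\,(4)--(5), translate \eqref{reflect} into (iii). This two-stage cancellation is algebraically dense but mechanical in spirit; pinning down the sign of $\kappa$ so that the increasing/decreasing decomposition of $\hat{\varphi}^{(1)}$ lines up correctly with that of $\Phi$ is the most delicate piece and will use the strict positivity of $g$, $g'+1$, and $qg(g'+1)-(1+q)xg'$ on $[\ux,\ox]$.
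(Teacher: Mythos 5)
Your overall plan coincides with the paper's: invoke Proposition~\ref{shadow_connection}\,(2), lean on Lemma~\ref{shadow solve}, then check the four structural conditions. For $t>0$, your It\^o computation on $\hat\varphi^{(1)}_t = \pi_1(X_t)\hW_t/\tS_t$ is essentially the paper's, which works with $\ln(\pm\hat\varphi^{(1)}_t)$ and lands on $d\ln(\pm\hat\varphi^{(1)}_t) = \tfrac{1}{X_t}\,d\Phi_t$; those are the same fact. Your admissibility reduction via $\hW_t - V_t = (\tS_t-\uS^{(1)}_t)(\hat\varphi^{(1)}_t)^+ + (\oS^{(1)}_t-\tS_t)(\hat\varphi^{(1)}_t)^-$ and a resulting bound on $\pi_1$ is a mild reformulation of the paper's argument, which uses the monotonicity identities in \eqref{pi monotone}; you will ultimately need the monotonicity (not just positivity) of $\pi_1$ and of $\pi_1 e^{-f}/(1-\pi_1)$ to close the case $\pi_1(X_t)>1$, but that is reachable with the same \proref{FBO}\,(4)--(5) ingredients you cite.

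The genuine gap is the initial jump. You note, under (iv), that the only discontinuity is at $t=0-$, but you never verify that this jump satisfies condition (iii). Unless $r(\hx)=0$, the strategy jumps from $\hat\varphi^{(1)}_{0-}=\eta_1$ to $\hat\varphi^{(1)}_{0}=\pi_1(\hx)\hW_0/\tS_0$, and condition (iii) at $t=0$ demands that a net purchase occur only at the ask ($\tS_0=\oS^{(1)}_0$, i.e., $\hx=\ux$) and a net sale only at the bid ($\tS_0=\uS^{(1)}_0$, i.e., $\hx=\ox$). Your It\^o analysis captures only the absolutely continuous/reflection dynamics for $t>0$ and says nothing about this jump. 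The paper dedicates a separate step to this, using the simplification $\pi_1(x)=x/(q g(x))$ to rewrite $r(x)=\eta_1 e^{f(x)}S_0^{(1)}-\xi(x)\pi_1(x)$ and checking the three cases in \eqref{r cases}, which is precisely why $\hx$ was defined as in \eqref{hx}. Without this verification the hypotheses of Proposition~\ref{shadow_connection}\,(2) are not fully established, and you cannot yet conclude $\hc\in\sC$.
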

\begin{proof}
By \lemref{shadow solve}, we already know that $(\hc,\hat{\varphi}^{(0)},\hat{\varphi}^{(1)},\hat{\varphi}^{(2)})$ is the optimal solution of \eqref{shadow pro}. Therefore, we only need to check that $(\hc,\hat{\varphi}^{(0)},\hat{\varphi}^{(1)},\hat{\varphi}^{(2)})$ satisfies the assumptions in \proref{shadow_connection} (2). Then, the result of \proref{shadow_connection} completes the proof of this theorem.  

Let's first consider the initial jump. We need to show that the assumption (iii) in \proref{shadow_connection} (2) is satisfied at $t=0$, which can be written as
\begin{equation}
\begin{split}\label{initial jump2}
\hat{\varphi}^{(1)}_0 - \eta_1=1_{\{ \hx=\ux \}}(\hat{\varphi}^{(1)}_0 - \eta_1)^+ - 1_{\{ \hx=\ox \}}(\hat{\varphi}^{(1)}_0 - \eta_1)^-.
\end{split}
\end{equation}

In \eqref{pi12}, we can simplify $\pi_1(x)$ as $\pi_1(x)=\frac{x}{q g(x)}$ by using expressions in \eqref{optimizer}. Then $r(x)$ in \eqref{fr} can be written as $r(x)=\eta_1 e^{f(x)} S_0^{(1)} - \xi(x)\pi_1(x)$. Now we can see why we defined $\hx\in [\ux,\ox]$ as \eqref{hx}. The three possibilities are
\begin{equation}
\begin{split}\label{r cases}
\begin{cases}
\hat{\varphi}_0^{(1)}= \eta_1, &\textrm{if $r(\hx)=0$},\\
 \textrm{$\hat{\varphi}_0^{(1)}< \eta_1$ and $\hx=\ox$}, & \textrm{if $r(x)>0$ on $[\ux,\ox]$,}\\
 \textrm{$\hat{\varphi}_0^{(1)}> \eta_1$ and $\hx=\ux$}, &\textrm{if $r(x)<0$ on $[\ux,\ox]$.}
\end{cases}
\end{split}
\end{equation}
Obviously \eqref{r cases} implies \eqref{initial jump2}, and we conclude that assumption (iii) in \proref{shadow_connection} (2) is satisfied at $t=0$.

By \proref{FBO} and the form of $\pi_1$, we observe that $\hat{\varphi}^{(1)}_t>0$ if $\mu_1>\frac{\rho \mu_2 \sigma_1}{\sigma_2}$ and $\hat{\varphi}^{(1)}_t<0$ if $\mu_1<\frac{\rho \mu_2 \sigma_1}{\sigma_2}$. With \eqref{tS}, \eqref{X reflect}, \eqref{dW} and \eqref{dg}, Ito's formula produces (after a long but straightforward computation)
\begin{equation}
\begin{split}
\begin{cases}\label{ln}
d(\ln(\hat{\varphi}^{(1)}_t))=d\Big( \ln\Big(\frac{\pi_1(X_t) \hW_t}{\tS_t}\Big)  \Big) =\frac{1}{X_t} d\Phi_t, &\textrm{when  } \mu_1>\frac{\rho \mu_2 \sigma_1}{\sigma_2} , \\
d(\ln(-\hat{\varphi}^{(1)}_t))=d\Big( \ln\Big( \frac{-\pi_1(X_t) \hW_t}{\tS_t} \Big) \Big) =\frac{1}{X_t} d\Phi_t,&\textrm{when  } \mu_1<\frac{\rho \mu_2 \sigma_1}{\sigma_2}.
\end{cases}
\end{split}
\end{equation}
\eqref{ln} and \eqref{reflect} implies that the assumptions (i) and (iii) in \proref{shadow_connection} (2) are satisfied.

Because assumption (iv) is obvious, it remains to check assumption (ii) in \proref{shadow_connection} (2). This amounts to proving that
\begin{equation}
\begin{split}\label{adm check}
\hat{\varphi}_t^{(0)}+\underline{S}_t^{(1)} (\hat{\varphi}_t^{(1)})^+ -\overline{S}_t^{(1)} (\hat{\varphi}_t^{(1)})^- + S_t^{(2)}\hat{\varphi}_t^{(2)} \geq 0, \quad t\geq 0.
\end{split}
\end{equation}
Using \proref{FBO} (4) and (5), we obtain following inequalities
\begin{equation}
\begin{split}\label{pi monotone}
\tfrac{d}{dx}\Big( \tfrac{\pi_1(x) \, e^{-f(x)}}{1-\pi_1(x)}   \Big) &=\tfrac{ \big(q \,  g(x) \left(g'(x)+1\right)-(1+q) x g'(x)\big) e^{-f(x)}}{q^2 g(x)^2 (1-\pi_1(x)^2} >0, \quad x\in[\ux,\ox]\\
\tfrac{d}{dx}\pi_1(x)&=\tfrac{q(g(x)-x g'(x))}{q^2 g(x)^2}>0, \quad x\in[\ux,\ox]
\end{split}
\end{equation}
$\bullet$ In case $\mu_1>\frac{\rho \mu_2 \sigma_1}{\sigma_2}$: By \proref{FBO} (1) and (3), we have $\pi_1(x)>0$, so $\hat{\varphi}_t^{(1)}>0$. \\
If $\pi_1(X_t)\leq1$, then $\hat{\varphi}_t^{(0)}+\hat{\varphi}_t^{(2)}S_t^{(2)} = (1-\pi_1(X_t))\hW_t \geq 0$. Therefore, \eqref{adm check} holds. \\
If $\pi_1(X_t)>1$, then $\hat{\varphi}_t^{(0)}+\hat{\varphi}_t^{(2)}S_t^{(2)} <0$. \eqref{pi monotone} implies that 
  $$\tfrac{\hat{\varphi}_t^{(0)}+\underline{S}_t^{(1)} \hat{\varphi}_t^{(1)} +S_t^{(2)}\hat{\varphi}_t^{(2)}}{\hat{\varphi}_t^{(0)} + S_t^{(2)}\hat{\varphi}_t^{(2)}}=(1-\underline{\lambda})\tfrac{\pi_1(X_t)\, e^{-f(X_t)}}{1-\pi_1(X_t)}  +1\leq (1-\underline{\lambda})\tfrac{\pi_1(\ox)\, e^{-f(\ox)}}{1-\pi_1(\ox)} +1=\tfrac{1}{1-\pi_1(\ox)}<0,$$
where we use $e^{-f(\ox)}=1/(1-\underline{\lambda})$. Hence \eqref{adm check} holds.\\
$\bullet$ In case $\mu_1<\frac{\rho \mu_2 \sigma_1}{\sigma_2}$: By \proref{FBO} (1) and (3), we have $\pi_1(x)<0$, so $\hat{\varphi}_t^{(1)}<0$ and $ \hat{\varphi}_t^{(0)}+\hat{\varphi}_t^{(2)}S_t^{(2)} = (1-\pi_1(X_t))\hW_t>0$. \eqref{pi monotone} implies that 
  $$\tfrac{\hat{\varphi}_t^{(0)}+\overline{S}_t^{(1)} \hat{\varphi}_t^{(1)} +S_t^{(2)}\hat{\varphi}_t^{(2)}}{\hat{\varphi}_t^{(0)} + S_t^{(2)}\hat{\varphi}_t^{(2)}}=(1+\overline{\lambda})\tfrac{\pi_1(X_t)\, e^{-f(X_t)} }{1-\pi_1(X_t)} +1\geq (1+\overline{\lambda})\tfrac{\pi_1(\ux)\,  e^{-f(\ux)}}{1-\pi_1(\ux)} +1=\tfrac{1}{1-\pi_1(\ux)}>0,$$
where we use $e^{-f(\ux)}=1/(1+\overline{\lambda})$. Hence \eqref{adm check} holds.

We showed that $(\hc,\hat{\varphi}^{(0)},\hat{\varphi}^{(1)},\hat{\varphi}^{(2)})$ satisfies the assumptions in \proref{shadow_connection} (2), and the proof is completed by the result of \proref{shadow_connection}.
\end{proof}

\begin{remark}
\thmref{main verify} does not rely on a \it{smallness assumption} for the transaction cost parameters $\overline{\lambda}$ and $\underline{\lambda}$.
\end{remark}

\subsection{Well-posedness of the problem}

The result in the previous subsection enables us to explicitly characterize when the optimal consumption and investment problem \eqref{primal problem} is well-posed (i.e., the value is finite). Recall that \assref{ass} is the standing assumption in this paper, and if \assref{ass} (1) is violated, then the problem is ill-posed (see \remref{finite value}).

\begin{theorem}\label{well-posed thm} (Well-posedness of the problem) The following statements are equivalent.

(1) The optimization problem \eqref{primal problem} is well-posed, i.e.,
$$
\sup_{c\in \sC} \bE\Big[\int_0^\infty e^{-\delta t} U(c_t)dt \Big]<\infty.
$$

(2) There exists a shadow price process.

(3) The model parameters satisfy one of the following three conditions:
\begin{itemize}
\item[(i)] $p \leq 0$,
\item[(ii)] $0<p<1$ and $\delta > \tfrac{q}{2(1-\rho^2)} \big( (\tfrac{\mu_1}{\sigma_1})^2+ (\tfrac{\mu_2}{\sigma_2})^2- 2\rho \tfrac{\mu_1 \mu_2}{\sigma_1 \sigma_2} \big) $,
\item[(iii)] $0<p<1$, $\delta \leq  \tfrac{q}{2(1-\rho^2)} \big( (\tfrac{\mu_1}{\sigma_1})^2+ (\tfrac{\mu_2}{\sigma_2})^2- 2\rho \tfrac{\mu_1 \mu_2}{\sigma_1 \sigma_2} \big) $ and 
$c^*< \ln \big(\tfrac{1+\overline{\lambda}}{1-\underline{\lambda}}\big)$,
\item[]   where $c^*$ is a constant explicitly defined in \defref{c definition}. 
\end{itemize}
\end{theorem}

\begin{proof}
See Section 6.
\end{proof}

\begin{remark}
\thmref{well-posed thm} provides an explicit characterization of the well-posedness of the problem, in the sense that the constant $c^*$ is given by a closed form in terms of the model parameters. \cite{Hobson16} also provides a well-posedness criterion by analyzing the HJB equation of the primal optimization problem.
\end{remark}

\section{Discussion of optimal strategy}

The expression  \eqref{strategy} enables us to extract more information about how the transaction costs affect the optimal consumption/investment strategy. For convenience, in this section, we set $\overline{\lambda}=0$ and $\underline{\lambda}=\lambda$, and only consider the specific case of \assref{ass5}. This assumption means that the proportion of wealth invested in the illiquid asset should be positive. Other cases can be analyzed similarly.

\begin{assumption}\label{ass5}
In this section, we assume that the following inequalities hold.
\begin{displaymath}
p>0, \quad \mu_1>\tfrac{\rho \mu_2 \sigma_1}{\sigma_2}, \quad \delta>\tfrac{q}{2(1-\rho^2)} ( (\tfrac{\mu_1}{\sigma_1})^2+ (\tfrac{\mu_2}{\sigma_2})^2- 2\rho \tfrac{\mu_1 \mu_2}{\sigma_1 \sigma_2} ).
\end{displaymath}
\end{assumption}

To observe the effect of the transaction costs, we remind the case of $\lambda=0$, which is the classical Merton problem. When $\lambda=0$, it is well known that (cf. \cite{Mer71}) the optimal proportion of the consumption rate and investment are given by
\begin{equation}
\begin{split}\label{MertonOpt}
c^M &:= (1+q)\Big(\delta -\tfrac{q}{2(1-\rho^2)} ( (\tfrac{\mu_1}{\sigma_1})^2+ (\tfrac{\mu_2}{\sigma_2})^2- 2\rho \tfrac{\mu_1 \mu_2}{\sigma_1 \sigma_2} )\Big),\\
\pi_1^M &:= \tfrac{(1+q)(\mu_1-\frac{\rho \sigma_1}{\sigma_2} \mu_2 ) }{(1-\rho^2)\sigma_1^2}, \quad \pi_2^M := \tfrac{(1+q)(\mu_2-\frac{\rho \sigma_2}{\sigma_1} \mu_1 ) }{(1-\rho^2)\sigma_2^2}.
\end{split}
\end{equation}

%The following observations describe some properties (and intuitive explanations) of the optimal strategy for illiquid asset trading, liquid risky asset trading, and consumption. Mathematics for the observations is sketched in Appendix.

%For the further analysis of the effect of transaction costs on the optimal investment and consumption strategy, we provide asymptotic result for small transaction costs, as in \cite{Cho11}.

\begin{proposition}\label{asymptotic_seed}
(Under \assref{ass5})  $\ux$, $\ox$, $g(\ux)$, and $g(\ox)$ in \proref{FBO}  have the following expansions for small transaction cost $\lambda$ 
\begin{equation}
\begin{split}\label{asymptotic}
\ux &= \tfrac{q \, \pi_1^M}{c^M} -\tfrac{q \, \zeta}{c^M} \lambda^{\frac{1}{3}} +  O(\lambda^{\frac{2}{3}}) , \quad \\
\ox &= \tfrac{q \, \pi_1^M}{c^M} + \tfrac{q \, \zeta}{c^M}\lambda^{\frac{1}{3}} + O(\lambda^{\frac{2}{3}}), \\
g(\ux)&=\tfrac{1}{c^M}  - \tfrac{q(1-\rho^2)\sigma_1^2  \zeta^2}{2 (1+q) (c^M)^2 }  \lambda^{\frac{2}{3}} + O(\lambda), \quad\\
g(\ox)&=\tfrac{1}{c^M}  -\tfrac{q(1-\rho^2)\sigma_1^2  \zeta^2}{2 (1+q) (c^M)^2 }   \lambda^{\frac{2}{3}} + O(\lambda), \\
\end{split}
\end{equation}
where 
\begin{equation}\label{zeta_exp}
\zeta:=\Big( \tfrac{3(1+q) (\pi_1^M)^2 (1-\pi_1^M)^2}{4} + \tfrac{3(1+q)(\mu_2(1+q)-\rho \sigma_1 \sigma_2)^2 (\pi_1^M)^2}{4(1-\rho^2)\sigma_1^2 \sigma_2^2 }   \Big)^\frac{1}{3}.
\end{equation}
\end{proposition}
\begin{proof}
We apply the methodology in \cite{Cho11} (for the case of one illiquid asset only) to obtain the expansions. The computations are straightforward but tedious, so we omit the details here.
\end{proof}

\subsection{Optimal trading of the illiquid asset}

 We first find a more explicit characterization for the optimal investment in the illiquid asset.

\begin{corollary}\label{no trading verify}
(Under \assref{ass5}) In \eqref{primal problem}, it is optimal to minimally trade the illiquid asset $S^{(1)}$ in such a way that the proportion of investment in the illiquid asset is within the interval $[\underline{\pi}_1,\overline{\pi}_1]$, i.e.,
\begin{equation}
\begin{split}
\underline{\pi}_1 \leq \tfrac{ \hat{\varphi}^{(1)}_t S_t^{(1)} }{ \hat{\varphi}^{(0)}_t+\hat{\varphi}^{(1)}_t S_t^{(1)}+\hat{\varphi}^{(2)}_t S_t^{(2)}} \leq \overline{\pi}_1,
\end{split}
\end{equation}
where $\underline{\pi}_1,\overline{\pi}_1 \in \bR$ have explicit expressions in terms of $g,\ux,\ox$ in \proref{FBO},
\begin{equation}
\begin{split}\label{pidef}
\underline{\pi}_1:=\pi_1 (\ux),\quad \overline{\pi}_1:=\tfrac{\pi_1 (\ox)}{\pi_1(\ox) + (1-\lambda)(1-\pi_1(\ox))}
\end{split}
\end{equation}
\end{corollary}
\begin{proof}
We can easily transform 
$$\pi_1(X_t)=\tfrac{ \hat{\varphi}^{(1)}_t \tS_t }{ \hat{\varphi}^{(0)}_t+\hat{\varphi}^{(1)}_t \tS_t+\hat{\varphi}^{(2)}_t S_t^{(2)}} \quad \Longrightarrow \quad
\tfrac{ \hat{\varphi}^{(1)}_t S_t^{(1)} }{ \hat{\varphi}^{(0)}_t+\hat{\varphi}^{(1)}_t S_t^{(1)}+\hat{\varphi}^{(2)}_t S_t^{(2)}}=\tfrac{\pi_1(X_t)}{\pi_1(X_t) + (1- \pi_1(X_t))e^{f(X_t)}}. $$
Direct computation produces the following inequality
$$\tfrac{d}{dx}\Big(\tfrac{\pi_1(x)}{\pi_1(x) + (1- \pi_1(x))e^{f(x)}}\Big)= \tfrac{ \big(q \,  g(x) \left(g'(x)+1\right)-(1+q) x g'(x)\big) \,e^{f(x)}}{q^2 g(x)^2 ((e^{f(x)}-1)\pi_1(x)- e^{f(x)})^2}>0, \quad x\in [\ux,\ox],$$
where we use the result in \proref{FBO} (4). Therefore, we have
$$\pi_1 (\ux)\leq \tfrac{\pi_1(X_t)}{\pi_1(X_t) + (1- \pi_1(X_t))e^{f(X_t)}}\leq \tfrac{\pi_1 (\ox)}{\pi_1(\ox) + (1-{\lambda})(1-\pi_1(\ox))}, \quad t \geq 0,$$
and the result follows.
\end{proof}

%\begin{remark}
%As we pointed out in \remref{special cases}, we can describe results similar to \thmref{main verify} and \corref{no trading verify} for the case of  $\mu_1 = \frac{\rho \mu_2 \sigma_1}{\sigma_2}$ or  $\mu_2 = \frac{\rho \sigma_1 \sigma_2}{1+q}$, by using the results in \cite{ChoSirZit11}.
%\end{remark}

\begin{corollary}\label{pi1_cor}
(Under \assref{ass5})  $\underline{\pi}_1$ and $\overline{\pi}_1$ in \eqref{pidef} have the following expansions for small transaction cost $\lambda$ 
\begin{equation}
\begin{split}\label{asymptotic_pi1}
\underline{\pi}_1 &= \pi_1^{M} -\zeta \lambda^{\frac{1}{3}} +  O(\lambda^{\frac{2}{3}}) , \quad \\
\overline{\pi}_1 &= \pi_1^{M} + \zeta \lambda^{\frac{1}{3}} +  O(\lambda^{\frac{2}{3}}).
\end{split}
\end{equation}
\end{corollary}
\begin{proof}
The expression of $\pi_1(x)$ in \eqref{pi12} can be rewritten as 
\begin{align}\label{simplie_pi1}
\pi_1(x)=\tfrac{x}{q\, g(x)}.
\end{align}
The above expression, together with \eqref{pidef} and \proref{asymptotic_seed}, produces \eqref{asymptotic_pi1}.
\end{proof}

\corref{no trading verify} and \corref{pi1_cor} have the following implications regarding the optimal trading of the illiquid asset. Assume that the transaction costs $\lambda>0$ is small enough.
 The no-trading region described in \corref{no trading verify} is {\it wider} than the no-trading region in the model without the liquid risky asset.\footnote{This phenomenon is also observed in \cite{Guasoni} for the case of $\rho=0$.}
 Indeed, the model {\it without} the liquid risky asset is studied in \cite{Cho11} and the width of the no-trading region is approximately
$$
2\Big( \tfrac{3(1+q) (\pi^M)^2 (1-\pi^M)^2}{4}  \Big)^\frac{1}{3}\lambda^{\frac{1}{3}},
$$
where $\pi^M$ (Merton proportion) is the proportion of wealth invested in the illiquid asset. In our model {\it with} the liquid risky asset, \corref{pi1_cor} implies that the width of the no-trading region is approximately
$$
2\Big( \tfrac{3(1+q) (\pi_1^M)^2 (1-\pi_1^M)^2}{4} + \tfrac{3(1+q)(\mu_2(1+q)-\rho \sigma_1 \sigma_2)^2 (\pi_1^M)^2}{4(1-\rho^2)\sigma_1^2 \sigma_2^2 }   \Big)^\frac{1}{3}\lambda^{\frac{1}{3}},
$$
which is bigger than the previous width, as long as the Merton proportions in two models agree, i.e., $\pi^M=\pi_1^M$. 

One possible explanation for this effect is following. Intuitively, the no-trading wedge is determined to minimize 
$$\qquad\qquad \textrm{(trading costs due to rebalancing) $+$ (reduction of the value for not rebalancing).} $$
The additional opportunity of investment in the liquid risky asset increases the volatility of the total wealth, and induces more trading costs due to more frequent rebalancing. Consequently, the no-trading region becomes wider to mitigate the trading costs. 

\subsection{Optimal trading of the liquid asset} We now explore how the transaction costs affect the trading of the liquid risky asset.

\begin{corollary}\label{pi2_cor}
(Under \assref{ass5})  In \eqref{primal problem}, the optimal proportion of wealth invested in the liquid risky asset has the following expansions for small transaction cost $\lambda$: 
\begin{equation}
\begin{split}\label{asymptotic_pi2}
\pi_2^M - \tfrac{\rho \sigma_1 \zeta}{\sigma_2}   \lambda^{\frac{1}{3}} +  O(\lambda^{\frac{2}{3}})  \quad & \textrm{when selling the illiquid asset,} \\
\pi_2^M + \tfrac{\rho \sigma_1 \zeta}{\sigma_2}   \lambda^{\frac{1}{3}} +  O(\lambda^{\frac{2}{3}})  \quad & \textrm{when buying the illiquid asset.} \\
\end{split}
\end{equation}
\end{corollary}
\begin{proof}
The expression of $\pi_2(x)$ in \eqref{pi12} can be rewritten as 
$$
\pi_2(x)=\tfrac{ (1+q)\mu_2}{\sigma_2^2} - \pi_1(x)\Big(  \tfrac{\rho \sigma_1}{\sigma_2} + \tfrac{(1+q)((1+q)\mu_2 - \rho \sigma_1 \sigma_2) g'(x)}{\sigma_2^2(1+g'(x))} \Big).
$$
The above expression, together with \proref{asymptotic_seed} and \corref{pi1_cor}, produces 
\begin{equation}
\begin{split}
\textrm{(liquid risky asset proportion)} 
&= \tfrac{ \hat{\varphi}^{(2)}_t S_t^{(2)} }{ \hat{\varphi}^{(0)}_t+\hat{\varphi}^{(1)}_t S_t^{(1)}+\hat{\varphi}^{(2)}_t S_t^{(2)}}\\
&=\tfrac{e^{f(X_t)}\pi_2(X_t)}{\pi_1(X_t) + (1- \pi_1(X_t))e^{f(X_t)}}\\
&=\begin{cases}
\pi_2^M - \tfrac{\rho \sigma_1 \zeta}{\sigma_2}   \lambda^{\frac{1}{3}} +  O(\lambda^{\frac{2}{3}}) , & \textrm{when  } X_t=\overline{x}\\
\pi_2^M + \tfrac{\rho \sigma_1 \zeta}{\sigma_2}   \lambda^{\frac{1}{3}} +  O(\lambda^{\frac{2}{3}}) , & \textrm{when  } X_t=\underline{x}\\
\end{cases}.
\end{split}
\end{equation}
Because it is optimal to sell (resp., buy) the illiquid asset when $X_t=\ox$ (resp., $X_t=\ux$), we conclude \eqref{asymptotic_pi2}.
\end{proof}

\corref{pi2_cor} has the following implications regarding the optimal trading of the liquid risky asset. Assume that the transaction costs $\lambda>0$ is small enough.
We compare the liquid risky asset trading strategies in our model and in the model without transaction costs ($\lambda=0$) in \eqref{MertonOpt}. This will show how the existence of the transaction costs on one asset may affect trading strategy of the other asset (still liquid). \corref{pi2_cor} implies that
\begin{itemize}
\item When $\rho>0$ and  the illiquid asset proportion $ \tfrac{ \hat{\varphi}^{(1)}_t S_t^{(1)} }{ \hat{\varphi}^{(0)}_t+\hat{\varphi}^{(1)}_t S_t^{(1)}+\hat{\varphi}^{(2)}_t S_t^{(2)}} $ is close to $\overline{\pi}_1$ (resp., $\underline{\pi}_1$), the liquid risky asset proportion $ \tfrac{ \hat{\varphi}^{(2)}_t S_t^{(2)} }{ \hat{\varphi}^{(0)}_t+\hat{\varphi}^{(1)}_t S_t^{(1)}+\hat{\varphi}^{(2)}_t S_t^{(2)}} $ is smaller (resp., bigger) than $\pi_2^M$. \\
\item When $\rho<0$ and  the illiquid risky asset proportion is close to $\overline{\pi}_1$ (resp., $\underline{\pi}_1$), the liquid risky asset proportion is bigger (resp., smaller) than $\pi_2^M$. 
\end{itemize}
The intuition for this effect is the following. Consider the case of $\rho>0$. When the illiquid asset proportion is larger than the desired proportion, the investor is overexposed to the risk factor of the illiquid asset. Because $\rho>0$, by reducing the proportion of the liquid risky asset, the exposure to the risk factor of the illiquid asset can be reduced accordingly. A similar explanation can be applied to the other cases.

\subsection{Optimal consumption rate} Finally, we examine the effect of the transaction costs on the optimal consumption rate.

\begin{corollary}\label{c_cor}
(Under \assref{ass5}) In \eqref{primal problem}, the optimal consumption rate proportion is a decreasing function of $X_t$ in \eqref{X reflect}, and it has the following asymptotic expansion for small transaction cost $\lambda$. For any fixed time $t\geq 0$,
%$$ \tfrac{\hat{c}_t}{  \hat{\varphi}^{(0)}_t+\hat{\varphi}^{(1)}_t S_t^{(1)}+\hat{\varphi}^{(2)}_t S_t^{(2)}}>c^M.$$
\begin{equation}
\begin{split}\label{asymptotic_c}
\tfrac{\hat{c}_t}{  \hat{\varphi}^{(0)}_t+\hat{\varphi}^{(1)}_t S_t^{(1)}+\hat{\varphi}^{(2)}_t S_t^{(2)}}=c^M + \tfrac{q(1-\rho^2)\sigma_1^2 \zeta^2}{2(1+q)}  \lambda^{\frac{2}{3}} +  O(\lambda), \quad \textrm{a.s.} 
\end{split}
\end{equation}
%(2) The consumption rate proportion is bigger when $X_t=\ux$ than $X_t=\ox$, i.e.,
%$$ \tfrac{\hat{c}_t}{  \hat{\varphi}^{(0)}_t+\hat{\varphi}^{(1)}_t S_t^{(1)}+\hat{\varphi}^{(2)}_t S_t^{(2)}}\Big|_{X_t=\ux}> \tfrac{\hat{c}_t}{  \hat{\varphi}^{(0)}_t+\hat{\varphi}^{(1)}_t S_t^{(1)}+\hat{\varphi}^{(2)}_t S_t^{(2)}}\Big|_{X_t=\ox}.$$
\end{corollary}
\begin{proof}
Using \eqref{strategy} and \eqref{simplie_pi1}, we obtain
  \begin{equation}
\begin{split}\label{consumption_function}
\textrm{(consumption rate proportion at time $t$)}
&=\tfrac{\hat{c}_t}{  \hat{\varphi}^{(0)}_t+\hat{\varphi}^{(1)}_t S_t^{(1)}+\hat{\varphi}^{(2)}_t S_t^{(2)}} \\
&=\tfrac{\hW_t}{g(X_t) \big( e^{-f(X_t)}\pi_1(X_t)\hW_t + (1- \pi_1(X_t))\hW_t  \big)}\\
&=\tfrac{1}{g(x)+\frac{x}{q}(e^{-f(x)}-1)}\Big|_{x=X_t}.
\end{split}
\end{equation}
Under \assref{ass5}, \proref{FBO} and the construction of $f$ in \eqref{fr} imply that $g(x)$ and $e^{-f(x)}$ are increasing functions of $x$ on $[\ux,\ox]$. Then we conclude that the map $x\mapsto \tfrac{1}{g(x)+\frac{x}{q}(e^{-f(x)}-1)}$ is decreasing on $x\in [\ux,\ox]$, due to $f(\ux)=0$, $q>0$, and $\ux>0$. Therefore, \eqref{consumption_function} shows that the optimal consumption rate proportion is a decreasing function of $X_t$, and we obtain the bounds
\begin{align}\label{consumption_bounds}
\tfrac{1}{g(\overline{x})+ \frac{\lambda \overline{x}}{(1-\lambda)q} } \leq  \tfrac{\hat{c}_t}{  \hat{\varphi}^{(0)}_t+\hat{\varphi}^{(1)}_t S_t^{(1)}+\hat{\varphi}^{(2)}_t S_t^{(2)}} \leq  \tfrac{1}{g(\underline{x})}.
\end{align}
Using the asymptotic expansions in \proref{asymptotic_seed}, we derive
\begin{equation}
\begin{split}\label{consumption_asymp}
\tfrac{1}{g(\overline{x})+ \frac{\lambda \overline{x}}{(1-\lambda)q} } &=c^M + \tfrac{q(1-\rho^2)\sigma_1^2 \zeta^2}{2(1+q)}  \lambda^{\frac{2}{3}} +  O(\lambda), \\
\tfrac{1}{g(\underline{x})} &=c^M + \tfrac{q(1-\rho^2)\sigma_1^2 \zeta^2}{2(1+q)}  \lambda^{\frac{2}{3}} +  O(\lambda).
\end{split}
\end{equation}
We conclude \eqref{asymptotic_c} by \eqref{consumption_bounds} and \eqref{consumption_asymp}.
\end{proof}

\corref{c_cor} has the following implications regarding the optimal consumption rate. 
\begin{itemize}
\item[(1)] The proof of \corref{no trading verify} implies that the proportion of wealth invested in the illiquid asset is an increasing function of $X_t$. On the other hand, \corref{c_cor} says that the consumption rate proportion is an decreasing function of $X_t$. Therefore, 
%when the illiquid asset proportion $ \tfrac{ \hat{\varphi}^{(1)}_t S_t^{(1)} }{ \hat{\varphi}^{(0)}_t+\hat{\varphi}^{(1)}_t S_t^{(1)}+\hat{\varphi}^{(2)}_t S_t^{(2)}} $ is close to $\underline{\pi}_1$, the consumption rate proportion $\tfrac{\hat{c}_t}{  \hat{\varphi}^{(0)}_t+\hat{\varphi}^{(1)}_t S_t^{(1)}+\hat{\varphi}^{(2)}_t S_t^{(2)}}$ is larger than the consumption in the situation that the illiquid asset proportion is close to $\overline{\pi}_1$. 
\begin{displaymath}
\begin{split}
&\tfrac{\hat{c}_t}{  \hat{\varphi}^{(0)}_t+\hat{\varphi}^{(1)}_t S_t^{(1)}+\hat{\varphi}^{(2)}_t S_t^{(2)}}\Big|_{\textrm{when  }\tfrac{ \hat{\varphi}^{(1)}_t S_t^{(1)} }{ \hat{\varphi}^{(0)}_t+\hat{\varphi}^{(1)}_t S_t^{(1)}+\hat{\varphi}^{(2)}_t S_t^{(2)}}  \approx \underline{\pi}_1}\\
&> \tfrac{\hat{c}_t}{  \hat{\varphi}^{(0)}_t+\hat{\varphi}^{(1)}_t S_t^{(1)}+\hat{\varphi}^{(2)}_t S_t^{(2)}}\Big|_{\textrm{when  }\tfrac{ \hat{\varphi}^{(1)}_t S_t^{(1)} }{ \hat{\varphi}^{(0)}_t+\hat{\varphi}^{(1)}_t S_t^{(1)}+\hat{\varphi}^{(2)}_t S_t^{(2)}}  \approx \overline{\pi}_1}.
\end{split}
\end{displaymath}
A simple explanation for this phenomenon is that $ \tfrac{ \hat{\varphi}^{(1)}_t S_t^{(1)} }{ \hat{\varphi}^{(0)}_t+\hat{\varphi}^{(1)}_t S_t^{(1)}+\hat{\varphi}^{(2)}_t S_t^{(2)}}  \approx \underline{\pi}_1$ (resp., $\overline{\pi}_1$) implies undrexposure (resp., overexposure) to the risk factor of the illiquid asset, and the investor can increase (resp., decrease) the illiquid asset proportion by increasing (resp., decreasing) consumption. 

\item[(2)] Recall that $c^M$ is the optimal consumption rate proportion if there is no transaction costs. For small enough $\lambda>0$, \corref{c_cor} implies that the consumption rate proportion $\tfrac{\hat{c}_t}{  \hat{\varphi}^{(0)}_t+\hat{\varphi}^{(1)}_t S_t^{(1)}+\hat{\varphi}^{(2)}_t S_t^{(2)}}$ is always bigger than $c^M$. One possible explanation for this effect is that the transaction costs make the investment less attractive, and consequently, induce an increase of the intermediate consumption rate.
\item[(3)] The coefficient of $\lambda^{\frac{2}{3}}$ term in \eqref{asymptotic_c}, $\tfrac{q(1-\rho^2)\sigma_1^2 \zeta^2}{2(1+q)}$, represents the sensitivity of the increase in the consumption rate proportion due to the transaction costs. In other words, the size of $\tfrac{q(1-\rho^2)\sigma_1^2 \zeta^2}{2(1+q)}$ describes how pronounced the effect described in (2) is. To discuss the contributions of the liquid risky asset trading opportunity on this effect, we consider the model with the illiquid asset only (without the liquid risky asset) and compare the optimal consumption in that model with the consumption in our model (with the liquid risky asset). The case (2) in \remref{special cases} and the corresponding discussion in the Appendix imply that when $\mu_2=\rho=0$, our market model is equivalent to the market model with the illiquid asset only. We focus on two special cases, $\mu_2=\frac{\rho\sigma_2 \mu_1}{\sigma_1}$ and $\rho=0$, and compare the $\lambda^{\frac{2}{3}}$-coefficients for the market model {\it with} or {\it without} the liquid risky asset\footnote{
The inequalities \eqref{c_compare2} and \eqref{c_compare1} are followed by the following observation.
$$
\big(\tfrac{q(1-\rho^2)\sigma_1^2 \zeta^2}{2(1+q)}\big)^3- \big( \tfrac{q(1-\rho^2)\sigma_1^2 \zeta^2}{2(1+q)}\big|_{\mu_2=\rho=0}\big)^3=
\begin{cases}
-\tfrac{9 \mu_1^2 \mu_2^2 q^3 (1+q)^3(\mu_1(1+q)-\sigma_1^2)^4}{128\sigma_1^8\sigma_2^2}\leq 0, &\textrm{if  $\mu_2=\tfrac{\rho\sigma_2 \mu_1}{\sigma_1}$},\\
\frac{9\mu_1^4 \mu_2^2 q^3 (1+q)^5 (\mu_2^2(1+q)^2 \sigma_1^2 + 2(\mu_1(1+q)-\sigma_1^2)^2 \sigma_2^2)}{128 \sigma_1^8 \sigma_2^4}
 \geq 0, &\textrm{if  $\rho=0$}.
 \end{cases}
$$
} 
\begin{align}
&\tfrac{q(1-\rho^2)\sigma_1^2 \zeta^2}{2(1+q)} \leq \Big( \tfrac{q(1-\rho^2)\sigma_1^2 \zeta^2}{2(1+q)}\Big|_{\mu_2=\rho=0}\Big), \quad \textrm{if  $\mu_2=\tfrac{\rho\sigma_2 \mu_1}{\sigma_1}$}.\label{c_compare2}\\
&\tfrac{q(1-\rho^2)\sigma_1^2 \zeta^2}{2(1+q)} \geq \Big(\tfrac{q(1-\rho^2)\sigma_1^2 \zeta^2}{2(1+q)}\Big|_{\mu_2=\rho=0} \Big), \quad \textrm{if  $\rho=0$.}\label{c_compare1}
\end{align}
\begin{itemize}
\item[$\bullet$] In case $\mu_2=\tfrac{\rho\sigma_2 \mu_1}{\sigma_1}$: The inequality \eqref{c_compare2} implies that the effect of transaction costs on the consumption is {\it less} pronounced if the investor can trade the liquid risky asset. We observe that $\pi_2^M=0$ in this case, i.e., it is optimal not to invest in the liquid risky asset at all if there is no transaction costs. 
%The existence of transaction costs increases the consumption rate proportion in both markets (with or without the liquid risky asset) as described in (2). 
In the market with the liquid risky asset, the investor can adjust the exposure to the risk factor of the illiquid asset by trading the (correlated) liquid risky asset. Consequently, the liquid risky asset trading opportunity induces less frequent trading of the illiquid asset, and mitigates the effect of the transaction costs on the consumption. %when the optimal investment in the liquid risky asset is small enough ($\pi_2^M\approx 0$), 
 \item[$\bullet$] In case $\rho=0$: The inequality \eqref{c_compare1} implies that the effect of the transaction costs on the consumption is $more$ pronounced if the investor can trade the liquid risky asset. Because $\rho=0$, $\pi_1^M$ is the optimal investment proportion of the illiquid asset for both markets (with or without the liquid risky asset), i.e., $\pi_1^M=\pi^M$. The existence of the liquid risky asset induces the investor to also take an exposure to the risk factor of the liquid risky asset ($\pi_2^M \neq 0$), thereby increasing the volatility of the total wealth process. Therefore, the model with liquid asset trading opportunity has more frequent trading of the illiquid asset (see \corref{no trading verify}) and more trading costs due to rebalancing. Consequently, the effect of the transaction costs on consumption is stronger in the market with the liquid asset.
\end{itemize}
For general parameters, the liquid asset trading opportunity has both of the above aspects (adjustment of the illiquid asset risk factor, and increase in the total wealth volatility). These aspects compete and determine the size of the effect of the transaction costs on consumption when the liquid risky asset becomes tradable in the market.
\end{itemize}

\section{proofs}

This section is devoted to the proof of \proref{FBO} and \thmref{well-posed thm}. We split the proof of \proref{FBO} into five propositions which take care of different parameter regimes as following.
\begin{itemize}
\item \proref{hard}: $0<p<1$, $\mu_1>\frac{\rho \mu_2 \sigma_1}{\sigma_2}$ and $\delta > \tfrac{q}{2(1-\rho^2)} \big( (\tfrac{\mu_1}{\sigma_1})^2+ (\tfrac{\mu_2}{\sigma_2})^2- 2\rho \tfrac{\mu_1 \mu_2}{\sigma_1 \sigma_2} \big)$.
\item \proref{hard2}: $0<p<1$, $\mu_1>\frac{\rho \mu_2 \sigma_1}{\sigma_2}$ and $\delta  \leq \tfrac{q}{2(1-\rho^2)} \big( (\tfrac{\mu_1}{\sigma_1})^2+ (\tfrac{\mu_2}{\sigma_2})^2- 2\rho \tfrac{\mu_1 \mu_2}{\sigma_1 \sigma_2} \big)$.
\item \proref{pro2}: $0<p<1$ and $\mu_1<\frac{\rho \mu_2 \sigma_1}{\sigma_2}$.
\item \proref{p nega}:  $p<0$ and $\mu_1>\frac{\rho \mu_2 \sigma_1}{\sigma_2}$.
\item \proref{pro4}: $p<0$ and $\mu_1<\frac{\rho \mu_2 \sigma_1}{\sigma_2}$. 
\end{itemize}
We provide detailed proof for \proref{hard} and \proref{hard2}. Proofs for the other cases are similar, hence we omit some of the details. Finally, the proof of \thmref{well-posed thm} is given at the end of this section.

\smallskip

We use the following notation for convenience. Using the expression \eqref{optimizer} of optimizers, we rewrite \eqref{inf_ode} as
\begin{equation}
\begin{split}\label{ode}
\tfrac{A(x,g(x))g'(x)^2 + B(x,g(x)) g'(x) + C(x,g(x))}{2(1+q) \sigma_2^2 (1+g'(x))\big(q g(x)(1+g'(x)) - (1+q) x g'(x)\big)}=0,
\end{split}
\end{equation}
where
{
\begin{equation}
\begin{split}\label{ABC}
A(x,y)&=-2(1+q)^2\sigma_2^2 \sgn(p) x\\
& + \big((1+q)^4 \mu_2^2 - 2\rho q(1+q)^2 \sigma_1\sigma_2 \mu_2+ ((1+2q+q^2 \rho^2)\sigma_1^2 -2(1+q)^2 \mu_1) \sigma_2^2     \big)  x^2\\
&+ (1+q)\Big( 2q\sigma_2^2 \sgn(p) + \big(2\rho q^2  \sigma_1 \sigma_2 \mu_2 - 2q (1+q)^2 \mu_2^2+ (2\delta(1+q)^2 +q(2\mu_1 - \sigma_1^2))\sigma_2^2   \big)x        \Big)y \\ 
& - q(1+q)^2 (2\delta \sigma_2^2 - q \mu_2^2) y^2\\
B(x,y)&=-2(1+q)^2\sigma_2^2 \sgn(p) x + 2\sigma_2\big( \rho \sigma_1\mu_2 (1+q)^2 - (\mu_1 (1+q)^2 -q(1-\rho^2)\sigma_1^2)\sigma_2     \big)x^2 \\
&+(1+q)\Big( 4q \sigma_2^2 \sgn(p) +\big(2\rho q(q-1)\sigma_1\sigma_2 \mu_2 -2q(1+q)^2\mu_2^2 + (2\delta(1+q)^2 + q(4\mu_1-\sigma_1^2))\sigma_2^2     \big) x     \Big) y\\
&-2q(1+q)^2 (2\delta \sigma_2^2-q\mu_2^2) y^2\\
C(x,y)&=-(1-\rho^2)\sigma_1^2\sigma_2^2 x^2 + 2q(1+q)\sigma_2 \big( \sgn(p) \sigma_2 + (\mu_1 \sigma_2- \rho \mu_2 \sigma_1)x \big)y \\
&- q(1+q)^2 (2\delta \sigma_2^2 - q \mu_2^2) y^2\\
\end{split}
\end{equation}
}

Also, let $\Delta_x$ be the discriminant of quadratic equation with respect to $x$, i.e., 
$$\Delta_x ( a x^2+bx+c):=b^2-4ac.$$

Constants $y_C$, $x_D$, $y_D$, $x_M$ and $y_M$ are defined as
\begin{equation}
\begin{split}
y_C&:=\tfrac{2 \sigma_2^2\sgn(p) }{(1+q)(2\delta \sigma_2^2 - q \mu_2^2)},\\
x_D&:=\tfrac{2q(1+q)\sgn(p)}{2\delta(1+q)^2 + q(\sigma_1^2-2(1+q)\mu_1)}, \\
y_D&:=\tfrac{2(1+q)\sgn(p)}{2\delta(1+q)^2 + q(\sigma_1^2-2(1+q)\mu_1)},\\
x_M&:=\tfrac{q (\mu_1-\frac{\rho\sigma_1\mu_2}{\sigma_2})\sgn(p)}{(1-\rho^2)\sigma_1^2(\delta -\frac{q}{2(1-\rho^2)} ( (\frac{\mu_1}{\sigma_1})^2+ (\frac{\mu_2}{\sigma_2})^2- 2\rho \frac{\mu_1 \mu_2}{\sigma_1 \sigma_2} ) )},\\
y_M&:=\tfrac{\sgn(p)}{(1+q)\big( \delta -\frac{q}{2(1-\rho^2)} ( (\frac{\mu_1}{\sigma_1})^2+ (\frac{\mu_2}{\sigma_2})^2- 2\rho \frac{\mu_1 \mu_2}{\sigma_1 \sigma_2} )\big) }.
\end{split}
\end{equation}

\begin{lemma}\label{common} Let $A,B,C$ be functions as in \eqref{ABC}. Then the following statements hold:\\
(1) $\{(x,y):\,  B(x,y)=C(x,y)=0 \}=\{(x,y):\,  B(x,y)=A(x,y)=0 \}=\{(0,0),(0,y_C)\}$.\\
(2) $\{(x,y):\, x \neq 0, \,\, B(x,y)^2-4A(x,y)C(x,y) =0\} = \{(x_D,y_D)\}$.\\
(3) $\{(x,y):\, x \neq 0, \,\, B(x,y)^2-4A(x,y)C(x,y) <0\} = \emptyset$.
\end{lemma}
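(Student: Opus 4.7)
The proof is a polynomial computation organized around a strong degeneracy of $(A,B,C)$ on the axis $\{x=0\}$.

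\textbf{Part (1).} Setting $x=0$ in \eqref{ABC} and writing $K:=2\delta\sigma_2^2-q\mu_2^2$, direct substitution gives
\[
A(0,y) \;=\; C(0,y) \;=\; qy\bigl[2(1+q)\sigma_2^2\sgn(p) - (1+q)^2 K y\bigr],\qquad B(0,y) \;=\; 2A(0,y).
\]
Hence on $\{x=0\}$ all three vanish exactly when $y\in\{0,y_C\}$, giving the inclusion $\{(0,0),(0,y_C)\}\subseteq\{B=C=0\}\cap\{B=A=0\}$. The identities above force $A-C$ and $B-2A$ to be divisible by $x$, and an inspection of the $y^2$-coefficients (both cancel) yields $A-C=xR(x,y)$ and $B-2A=xS(x,y)$ with $R,S$ linear in $y$. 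For $x\neq 0$, the system $\{A=B=0\}$ becomes $\{A=0,\,S=0\}$ and $\{B=C=0\}$ becomes $\{A=0,\,R=0\}$. Solving the linear-in-$y$ constraint for $y$ and substituting into $A(x,y)=0$ produces a polynomial in $x$ alone whose coefficients -- using the strict inequality and strict non-equalities in \assref{ass} -- admit no nonzero real root, establishing the reverse inclusion.

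\textbf{Parts (2) and (3).} The identities $A(0,y)=C(0,y)$ and $B(0,y)=2A(0,y)$ from Part (1) immediately give $(B^2-4AC)|_{x=0}\equiv 0$. Letting $a_{ij},b_{ij},c_{ij}$ denote the coefficients of $x^iy^j$ in $A,B,C$, one checks directly from \eqref{ABC} the matching first-order identities $b_{10}=a_{10}+c_{10}$ and $b_{11}=a_{11}+c_{11}$; these force $\partial_x(B^2-4AC)|_{x=0}\equiv 0$. Consequently $x^2$ divides $B^2-4AC$, and
\[
B^2-4AC \;=\; x^2\, E(x,y)
\]
for a polynomial $E$ of total degree at most $2$ in $(x,y)$. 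Parts (2) and (3) then reduce to showing $E\geq 0$ on $\bR^2$ with a unique zero at $(x_D,y_D)$. I would complete the square to write $E$ as a positive-definite quadratic form in $(x-x_D,\,y-y_D)$: check (i) $(x_D,y_D)$ is a critical point of $E$, (ii) $E(x_D,y_D)=0$, and (iii) the Hessian of $E$ is positive definite. Step (iii) ultimately reduces to the strict inequality $\delta>\tfrac{q}{2(1-\rho^2)}(\cdots)$ of \assref{ass}, which drives positivity of the Hessian's determinant. Combined with the $x^2$ factor, both statements follow: $B^2-4AC\geq 0$ everywhere, and with $x\neq 0$ equality holds uniquely at $(x_D,y_D)$.

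\textbf{Main obstacle.} All steps are elementary but notationally heavy: the coefficients in \eqref{ABC} involve seven parameters $\mu_i,\sigma_i,\rho,\delta,q$, so verifying the two coefficient identities needed to factor out $x^2$, and then identifying $E$ as a PD quadratic centered at $(x_D,y_D)$, is a lengthy symbolic manipulation. The conceptual content, however, is minimal: the lemma is the algebraic shadow of \assref{ass} guaranteeing finiteness of the Merton baseline (\remref{finite value}).
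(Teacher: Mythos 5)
Your two organizing observations—that $A(0,y)=C(0,y)$, $B(0,y)=2A(0,y)$, and that the first-order $x$-coefficients satisfy $b_{10}=a_{10}+c_{10}$, $b_{11}=a_{11}+c_{11}$—are correct and do force $x^2\mid B^2-4AC$. Your Parts (2)--(3) then take a genuinely different route from the paper: you factor $B^2-4AC=x^2E$ and study $E$ as a quadratic form, whereas the paper keeps $B^2-4AC$ intact, treats it as a quadratic in $y$ for fixed $x$, and shows the $y$-discriminant is $-(1-\rho^2)$ times a perfect square that vanishes only at $x\in\{0,x_D\}$, while the $y^2$-coefficient is $x^2$ times a positive constant. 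Both arguments hinge on the same two facts (positive $y^2$-coefficient and non-positive $y$-discriminant); yours repackages them as positivity of the Hessian of $E$. One can check that with the paper's discriminant formula, $\Delta_y(E)=\Delta_y(B^2-4AC)/x^4=-(1-\rho^2)N^2(x-x_D)^2$ for a nonzero constant $N$, and then the Hessian determinant of $E$ comes out to $(1-\rho^2)N^2>0$, so your completion-of-square plan does close. This version is arguably cleaner because the $x=0$ degeneracy is absorbed up front; the cost is that you still need the explicit seven-parameter computation of $N$ and of the $y^2$-coefficient to see positivity, which you have only outlined, not carried out.

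There is, however, a genuine slip in Part (1). With your definitions $A-C=xR$ and $B-2A=xS$, the system $\{A=B=0,\,x\neq0\}$ does reduce to $\{A=0,\,S=0\}$ as you say. But $\{B=C=0,\,x\neq0\}$ does \emph{not} become $\{A=0,\,R=0\}$: from $C=0$ you get $A=xR$ (not $A=0$), and from $B=0$ you get $2A+xS=0$, so for $x\neq0$ the system is $\{C=0,\ 2R+S=0\}$. Equivalently, the natural linear combination to eliminate the $y^2$-term here is $B-2C=x(2R+S)$, which is precisely what the paper uses (it factors $(1+q)x$ out of $2C-B$). The set $\{A=0,\,R=0,\,x\neq0\}$ is simply a different set, and showing it is empty would not establish the lemma. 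Separately, when you say ``solving the linear-in-$y$ constraint for $y$ and substituting'' you are assuming the $y$-coefficient of that linear polynomial is nonzero; the paper explicitly splits into the case where both coefficients of the linear factor vanish (its case (i)), the generic case (ii), and $x=0$ (case (iii)). You should handle the degenerate case as well. Once these two points are repaired, your Part (1) follows the same route as the paper's.
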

\begin{proof}
(1) Suppose that $(x,y)\in \bR^2$ satisfies $B(x,y)=C(x,y)=0$. Then 
\begin{displaymath}
\begin{split}
0&=2C(x,y)-B(x,y)\\
&=(1+q)x \Big( \big(2q(1+q)^2\mu_2^2 - 2\rho q(1+q)\sigma_1 \sigma_2 \mu_2+(q \sigma_1^2-2\delta (1+q)^2)\sigma_2^2\big) y \\
& \quad + 2\sigma_2 \big((1+q)\sigma_2 \sgn(p) + \big( \sigma_2((1+q)\mu_1 - (1-\rho^2)\sigma_1^2)-\rho(1+q) \sigma_1 \mu_2  \big)x         \big)           \Big)
\end{split}
\end{displaymath}
There are three possibilities:\\
(i) In case $2q(1+q)^2\mu_2^2 - 2\rho q(1+q)\sigma_1 \sigma_2 \mu_2+(q \sigma_1^2-2\delta (1+q)^2)\sigma_2^2=0$ and $(1+q)\sigma_2 \sgn(p) + \big( \sigma_2((1+q)\mu_1 - (1-\rho^2)\sigma_1^2)-\rho(1+q) \sigma_1 \mu_2  \big)x =0$, we solve these equations for $\delta$ and $x$ and obtain
$$\delta=\tfrac{q(2(1+q)^2\mu_2^2 - 2\rho (1+q)\sigma_1 \sigma_2 \mu_2 + \sigma_1^2 \sigma_2^2)}{2(1+q)^2 \sigma_2^2} \textrm{  and  }   x=\tfrac{(1+q)\sigma_2 \sgn(p)}{\rho (1+q)\sigma_1 \mu_2 - (\mu_1(1+q)-(1-\rho^2)\sigma_1^2)\sigma_2}.$$ Substituting these expressions for $\delta$ and $x$, we obtain
$$\Delta_y (C(x,y))=-(1-\rho^2)\Big(\tfrac{2q(1+q)\sigma_1 \sigma_2^2((1+q)\mu_2-\rho \sigma_1 \sigma_2)}{\rho (1+q)\sigma_1 \mu_2 - (\mu_1(1+q)-(1-\rho^2)\sigma_1^2)\sigma_2}\Big)^2 <0.$$
Therefore, $C(x,y)\neq0$, which is a contradiction.\\
(ii) In case $y=-\frac{2\sigma_2 \big((1+q)\sigma_2 \sgn(p) + \big( \sigma_2((1+q)\mu_1 - (1-\rho^2)\sigma_1^2)-\rho(1+q) \sigma_1 \mu_2  \big)x         \big) }{2q(1+q)^2\mu_2^2 - 2\rho q(1+q)\sigma_1 \sigma_2 \mu_2+(q \sigma_1^2-2\delta (1+q)^2)\sigma_2^2}$, we substitute this expression for $y$ and obtain
$$\Delta_x(C(x,y))=-(1-\rho^2)\Big(\tfrac{4q(1+q)\sigma_1 \sigma_2^3 ((1+q)\mu_2 - \rho \sigma_1 \sigma_2)  }{2q(1+q)^2 \mu_2^2 - 2\rho q(1+q)\sigma_1 \sigma_2 \mu_2+(q \sigma_1^2 - 2\delta (1+q)^2)\sigma_2^2}\Big)^2<0.$$
Therefore, $C(x,y)\neq0$, which is a contradiction.\\
(iii) In case $x=0$, we solve $B(0,y)=C(0,y)=0$ for $y$ and obtain $y=0$ or $y=y_C$.
Therefore, $\{(x,y):\,  B(x,y)=C(x,y)=0 \}=\{(0,0),(0,y_C)\}$. 

The proof of $\{(x,y):\, B(x,y)=A(x,y)=0 \}=\{(0,0),(0,y_C)\}$ is similar.

\bigskip

(2) $B(x,y)^2-4A(x,y)C(x,y)$ is quadratic in $y$. If $x\neq 0$ and $x\neq x_D$, then
\begin{displaymath}
\begin{split}
&\Delta_y(B(x,y)^2-4A(x,y)C(x,y))\\
&=-(1-\rho^2)\Big( 4(1+q)^2 \sigma_1 \sigma_2^3 ((1+q)\mu_2 - \rho \sigma_1 \sigma_2)x^2 \big( 2\delta (1+q)^2 +q(\sigma_1^2-2\mu_1(1+q))\big)(x-x_D) \Big)^2 < 0.
\end{split}
\end{displaymath}
Hence $B(x,y)^2-4A(x,y)C(x,y)$ can be zero only when $x=0$ or $x=x_D$. We have
 \begin{displaymath}
\begin{split}
&B(x_D,y)^2-4A(x_D,y)C(x_D,y)\\
&=\sigma_2^2 x_D^2 \Big( (1-\rho^2)\sigma_2^2 \big(2\delta(1+q)^2-q\sigma_1^2 \big)^2 + \big(2q(1+q)\sigma_1 \mu_2 - \rho \sigma_2(2\delta(1+q)^2+q \sigma_1^2)   \big)^2  \Big) (y-y_D)^2,
\end{split}
\end{displaymath}
and observe that $2\delta(1+q)^2-q\sigma_1^2$ and $2q(1+q)\sigma_1 \mu_2 - \rho \sigma_2(2\delta(1+q)^2+q \sigma_1^2)$ cannot be zero at the same time because $\mu_2 \neq \frac{\rho \sigma_1 \sigma_2}{1+q}$.
Now we conclude that $$\big\{(x,y):\, x \neq 0, \,\, B(x,y)^2-4A(x,y)C(x,y) =0\big\} = \big\{(x_D,y_D)\big\}$$ 

\bigskip

(3) In the proof of (2), we can see that \\
(i) if $x=x_D$, then $B(x,y)^2-4A(x,y)C(x,y)\geq 0$, \\
(ii) if $x\neq 0$ or $x\neq x_D$, then there is no solution to $B(x,y)^2-4A(x,y)C(x,y)=0$. \\
Therefore, to prove (3), it is enough to show that the coefficient of $y^2$ in $B(x,y)^2-4A(x,y)C(x,y)$ is positive for any $x \neq 0$. Indeed, the coefficient of $y^2$ in $B(x,y)^2-4A(x,y)C(x,y)$ can be written as a sum of squares
 \begin{displaymath}
\begin{split}
((1+q)\sigma_2 x)^2 \Big( (1-\rho^2)\sigma_2^2 \big(2\delta(1+q)^2-q\sigma_1^2 \big)^2 + \big(2q(1+q)\sigma_1 \mu_2 - \rho \sigma_2(2\delta(1+q)^2+q \sigma_1^2)   \big)^2  \Big)>0.
\end{split}
\end{displaymath}
\end{proof}

\begin{proposition}\label{hard} If $0<p<1$, $\mu_1>\frac{\rho \mu_2 \sigma_1}{\sigma_2}$ and $\delta > \tfrac{q}{2(1-\rho^2)} \big( (\tfrac{\mu_1}{\sigma_1})^2+ (\tfrac{\mu_2}{\sigma_2})^2- 2\rho \tfrac{\mu_1 \mu_2}{\sigma_1 \sigma_2} \big)$, then \proref{FBO} holds.
\end{proposition}
\begin{proof}
We observe that $A,B,C$ are quadratic in $x,y$. The level curve $C=0$ is an ellipse, because
$$\tfrac{\partial^2 C}{\partial x \partial y}-4\, \tfrac{\partial^2 C}{\partial x^2}\tfrac{\partial^2 C}{\partial y^2}= -8(1-\rho^2)q(1+q)^2 \sigma_1^2 \sigma_2^4 \Big(\delta -\tfrac{q}{2(1-\rho^2)} ( (\tfrac{\mu_1}{\sigma_1})^2+ (\tfrac{\mu_2}{\sigma_2})^2- 2\rho \tfrac{\mu_1 \mu_2}{\sigma_1 \sigma_2} ) \Big)<0.$$ 
\assref{ass} implies that $y_C,x_D, y_D,x_M, y_M$ are all positive. The slopes of the level curves $C=0$, $B=0$ and $A=0$ at the points $(0,y_C)$ and $(0,0)$ are 
\begin{displaymath}
\begin{split}
\begin{cases}C=0: &\frac{2\sigma_2^2 (\mu_1 -\frac{\rho \sigma_1 \mu_2}{\sigma_2})}{(1+q)(2\delta \sigma_2^2- q \mu_2^2)} \textrm{  and  }0 \\
B=0: &\frac{-\mu_2^2 (1+q)^2 -2(1-q)\rho \sigma_1 \sigma_2 \mu_2 +(4\mu_1- \sigma_1^2)\sigma_2^2    }{2(1+q)(2\delta \sigma_2^2- q \mu_2^2)} \textrm{  and  }\frac{1+q}{2q}\\
A=0: &\frac{-\mu_2^2 (1+q)^2 + 2\rho q \sigma_1 \sigma_2 \mu_2 + (2\mu_1 - \sigma_1^2)\sigma_2^2}{(1+q)(2\delta \sigma_2^2- q \mu_2^2)} \textrm{  and  }\frac{1+q}{q}\\
\end{cases}
\end{split}
\end{displaymath}
We observe that 
\begin{displaymath}
\begin{split}
&\tfrac{2\sigma_2^2 (\mu_1 -\frac{\rho \sigma_1 \mu_2}{\sigma_2})}{(1+q)(2\delta \sigma_2^2- q \mu_2^2)}- \tfrac{-\mu_2^2 (1+q)^2 -2(1-q)\rho \sigma_1 \sigma_2 \mu_2 +(4\mu_1- \sigma_1^2)\sigma_2^2    }{2(1+q)(2\delta \sigma_2^2- q \mu_2^2)} = \tfrac{\mu_2^2 (1+q)^2 - 2\mu_2 (1+q) \rho \sigma_1 \sigma_2 + \sigma_1^2 \sigma_2^2}{2(1+q)(2\delta \sigma_2^2- q \mu_2^2) }>0,\\
& \tfrac{-\mu_2^2 (1+q)^2 -2(1-q)\rho \sigma_1 \sigma_2 \mu_2 +(4\mu_1- \sigma_1^2)\sigma_2^2    }{2(1+q)(2\delta \sigma_2^2- q \mu_2^2)}-\tfrac{-\mu_2^2 (1+q)^2 + 2\rho q \sigma_1 \sigma_2 \mu_2 + (2\mu_1 - \sigma_1^2)\sigma_2^2}{(1+q)(2\delta \sigma_2^2- q \mu_2^2)}= \tfrac{\mu_2^2 (1+q)^2 - 2\mu_2 (1+q) \rho \sigma_1 \sigma_2 + \sigma_1^2 \sigma_2^2}{2(1+q)(2\delta \sigma_2^2- q \mu_2^2) }>0,
\end{split}
\end{displaymath}
where we use the Cauchy-Schwarz inequality. By this observation and \lemref{common} (1), the quadratic curves $A=0, B=0,C=0$ are as in Figure 1. Using the observation that the coefficients of $y^2$ in $A,B,C$ are all negative, we partition the ellipse as 
\begin{displaymath}
\begin{split}
&\{(x,y):\, x>0,\, C(x,y)\geq0 \}=\Omega_1 \cup \Omega_2 \cup \Omega_3\cup \Omega_4,\\
&\textrm{where  }
\begin{cases}
\Omega_1:=\{(x,y):\, x>0, \,\,  C>0,\, B\geq 0,\, A\geq 0\} \\
\Omega_2:=\{(x,y):\, x>0, \,\,  C>0,\, B\geq 0,\, A< 0\} \\
\Omega_3:=\{(x,y):\, x>0, \,\,  C\geq 0,\, B< 0,\, A< 0\} \\
\Omega_4:=\{(x,y):\, x>0, \,\,  C\geq 0,\, B< 0,\, A\geq 0\} \\
\end{cases}
\end{split}
\end{displaymath}

\begin{figure}
\begin{center}
  \includegraphics[width=6cm]{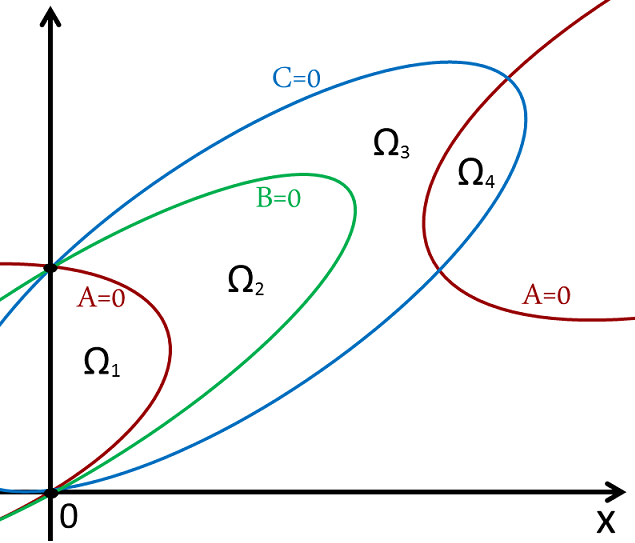}\qquad \includegraphics[width=6cm]{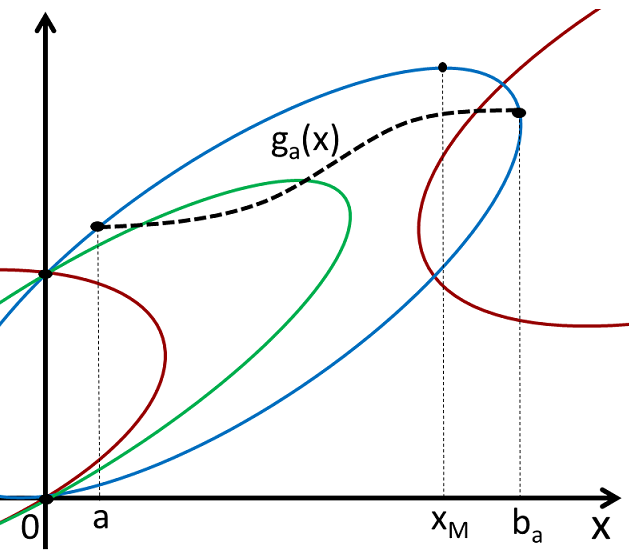}
\end{center}
\figcaption{$0<p<1$, $\mu_1>\frac{\rho \mu_2 \sigma_1}{\sigma_2}$ and $\delta > \tfrac{q}{2(1-\rho^2)} \big( (\tfrac{\mu_1}{\sigma_1})^2+ (\tfrac{\mu_2}{\sigma_2})^2- 2\rho \tfrac{\mu_1 \mu_2}{\sigma_1 \sigma_2} \big)$}
\end{figure}

By \lemref{common} (3), $\sqrt{B(x,y)^2-4 A(x,y) C(x,y)}$ is well defined if $x\neq0$. If $(x,y)\in \Omega_2\cup \Omega_3 \cup \Omega_4$, then $-B(x,y)+\sqrt{B(x,y)^2-4 A(x,y) C(x,y)}>0$. 

\bigskip

{\bf Claim 1}: \emph{For any $a\in \bR$ such that $0<a<x_M$, there exists a constant $b_a>a$ and a function $g_a:[a,b_a]\mapsto \bR$ such that
\begin{equation}
\begin{split}\label{FG}
&g_a'(x)=F(x,g_a(x)), \quad g_a(a)=\Gamma(a), \quad g_a'(b_a)=0,  \\
\textrm{where } &\begin{cases} F(x,y):=\tfrac{2C(x,y)}{-B(x,y)+\sqrt{B(x,y)^2-4 A(x,y) C(x,y)}},\\
\Gamma(x):=\tfrac{q\sigma_2 (\sigma_2+(\mu_1 \sigma_2 - \rho \sigma_1 \mu_2)x) + \sigma_2\sqrt{ 2q\delta(\rho^2-1)\sigma_1^2\sigma_2^2 x^2 +  q^2( \mu_2^2 \sigma_1^2 x^2 - 2\rho \sigma_1 \sigma_2 \mu_2 x(1+\mu_1 x) + \sigma_2^2(1+\mu_1 x)^2)  }     }{q(1+q)(2 \delta \sigma_2^2 - q \mu_2^2)}.
\end{cases}
\end{split}
\end{equation}
Here, $y=\Gamma(x)$ is the equation for the upper part of the ellipse $C=0$.}

(Proof of Claim 1): The function $F$ is continuous and nonnegative on $\Omega_2\cup \Omega_3 \cup \Omega_4$. By the Peano existence theorem, starting from $(a,\Gamma(a))$, we can evolve the above ODE to the right (see Figure 1) until $(x,g_a(x))$ reaches the boundary of $\Omega_2\cup \Omega_3 \cup \Omega_4$. Indeed, the curve $(x,g_a(x))$ is inside of $\Omega_2\cup \Omega_3 \cup \Omega_4$ for $x>a$ close enough to $a$, because $\Gamma'(a)>0$ and $g_a'(a)=F(a,\Gamma(a))=0$. Let the equation of the upper curve of $\partial \Omega_1 \cap \partial \Omega_2$ be $l(x)$. Observe that $(a,g_a(a))$ is above the curve $\partial \Omega_1 \cap \partial \Omega_2$, i.e., $g_a(a)>l(a)$ (see Figure 1). Define $b_a>a$ as
$$b_a:=\inf \big\{x>a : \, (x,g_a(x))\in \partial \big(\Omega_2\cup \Omega_3 \cup \Omega_4 \big) \big\}.$$
As $F\geq 0$ on $\Omega_2\cup \Omega_3 \cup \Omega_4$, $\lim_{x\uparrow b_a}g_a(x)$ exists. Suppose that $\lim_{x\uparrow b_a}g_a(x)=l(b_a)$. The definition of $b_a$ implies that $g_a'(x)>0$ for $a<x<b_a$, and we observe that $-B+\sqrt{B^2-4AC}=0$ on $\partial \Omega_1 \cap \partial \Omega_2$. Then, we produce
$$+\infty=\lim_{x \uparrow b_a}F(x,g_a(x))=\lim_{x \uparrow b_a} g_a'(x)
=\lim_{x \uparrow b_a} \tfrac{g_a(b_a)-g_a(x)}{b_a-x} \leq \lim_{x\uparrow b_a} \tfrac{l(b_a)-l(x)}{b_a-x}=l'(b_a)\leq l'(0)<\infty,
$$
where we use L'Hopital's Rule and concavity of the curve $l(x)$. This is a contradiction, and we conclude that 
$$b_a=\inf \{x>a : \, C(x,g_a(x))=0 \}.$$
Then $g_a'(b_a)=F(b_a,g_a(b_a))=0$ because $C(b_a,g_a(b_a))=0$. \\
(End of the proof of Claim 1).

\bigskip

{\bf Claim 2}: \emph{$g_a'(x)\neq \frac{1}{q}$  for  $x\in[a,b_a]$.}

(Proof of Claim 2): Suppose that $\{x\in [a,b_a]:\, g_a'(x)=\frac{1}{q}\} \neq \emptyset$. Then we define $x_0$ as 
$$x_0:=\inf \{x\in [a,b_a]:\, g_a'(x)=\tfrac{1}{q}\}. $$
$F(x_0,y)=\frac{1}{q}$ implies $\big(2q C(x_0,y)+B(x_0,y)\big)^2=B(x_0,y)^2-4 A(x_0,y) C(x_0,y).$
Solving this equation for $y$, we obtain
\begin{equation}
\begin{split}\label{lines}
y=\tfrac{x_0}{q}\quad \textrm{or} \quad y=L(x_0),  \textrm{  where  } L(x):=\tfrac{2(1+q)\sigma_2^2-(\mu_2^2(1+q)^2 +(\sigma_1^2 -2(1+q)\mu_1)\sigma_2^2 )x}{(1+q)^2 (2\delta \sigma_2^2 -q \mu_2^2)}.
\end{split}
\end{equation}
Direct computation produces 
\begin{displaymath}
\begin{split}
x_0>0 \textrm{  and  }F(x_0,\tfrac{x_0}{q})=\tfrac{1}{q} \quad \iff \quad 0< x_0 \leq x_D\\
x_0>0 \textrm{  and  }F(x_0,L(x_0))=\tfrac{1}{q} \quad \iff \quad 0< x_0 \leq x_D\\
\end{split}
\end{displaymath}
Therefore, $x_0\leq x_D$. And the point $(x_0,g_a(x_0))$ is on one of the two lines in \eqref{lines}. We also check that
\begin{equation}
\begin{split}\label{inside1}
A(x_D,y_D)>0,\, B(x_D,y_D)<0,\, C(x_D,y_D)>0 \quad \Longrightarrow \quad (x_D,y_D)\in\Omega_4 .
\end{split}
\end{equation}
\eqref{inside1} implies that the line segment $y=L(x)$ connecting $(0,y_0)$ and $(x_D,y_D)$ is inside of the ellipse $C=0$. Therefore, $(a,g_a(a))$ is above this line segment. 
If $(x_0,g_a(x_0))$ is on this line segment, then by the definition of $x_0$, $g_a'(x_0)\leq L'(x_0)$. This is a contradiction because
$$g_a'(x_0)- L'(x_0)=\tfrac{1}{q}-\tfrac{(2(1+q)\mu_1-\sigma_1^2)\sigma_2^2 - (1+q)^2 \mu_2^2}{(1+q)^2(2\delta \sigma_2^2 - q \mu_2^2)}=\tfrac{2\delta(1+q)^2 + q(\sigma_1^2-2(1+q)\mu_1)}{q(1+q)^2(2\delta \sigma_2^2 - q \mu_2^2)}>0.$$
The line segment $y=\tfrac{x}{q}$ connecting $(0,0)$ and $(x_D,y_D)$ is below the line $y=L(x)$. Therefore, $(x_0,g_a(x_0))$ cannot be on this line segment, either. Now we reach the contradiction.\\
(End of the proof of Claim 2).

\bigskip

{\bf Claim 3}: \emph{In Claim 1, the solution $g_a$ is unique in $x\in[a,b_a]$ and $g_a\in C^2([a,b_a])$}

(Proof of Claim 3): Direct computation produces $F(x_D,y_D)=\frac{1}{q}$. Therefore, by Claim 2, $(x,g_a(x)) \neq (x_D,y_D)$ for $x\in [a,b_a]$. Considering \lemref{common} (2) and (3), we can include the set $\{(x,g_a(x)): x\in [a,b_a]\}$ in a compact set in $\bR^2$ where $F$ is uniformly Lipschitz. Then the uniqueness follows from the Picard-Lindelof €"theorem. Because $F\in C^2(\bR_+ \times \bR \setminus \{(x_D,y_D)\} )$, we conclude $g_a\in C^2([a,b_a])$.\\
(End of the proof of Claim 3).

\bigskip

{\bf Claim 4}: \emph{Let $G(a):=\int_a^{b_a} \frac{g_a'(x)}{x} dx $. Then $G$ has the following properties: \\
(i) $G$ is continuous on $(0,x_M)$.\\
(ii) $\lim_{a\uparrow x_M} G(a)=0$.\\
(iii) $\lim_{a\downarrow 0} G(a)=\infty$.
}

(Proof of Claim 4): (i) Suppose that $g_a(\cdot)$ is tangent to the ellipse $C=0$ at $x=b_a$. Because $g_a'(b_a)=0$, the only possibility is $(b_a,g_a(b_a))=(x_M,y_M)$. Direct computation produces $g_a''(b_a)=\frac{d}{dx}F(x,g_a(x))|_{x=x_M}=0$, $\Gamma'(x_M)=0$ and $\Gamma''(x_M)=-\frac{(1-\rho^2)\sigma_1^2}{q(1+q)}<0$. This observation implies that $g_a(x)>\Gamma(x)$ for $x<x_M$ close enough to $x_M$, which is a contradiction. Therefore, $g_a$ is not tangent to $C=0$ at $x=b_a$, and by the implicit function theorem and the continuity of $g_a$ with respect to the initial data (see, e.g., Theorem VI., p 145 in \cite{Wal98}), the map $a\mapsto b_a$ is continuous.

By Claim 2, $0\leq g_a'<\frac{1}{q}$ on $[a,b_a]$ for any $a\in(0,x_M)$. Because the map $a\mapsto b_a$ is continuous, $G$ is continuous on $a\in(0,x_M)$ by the dominated convergence theorem. 

(ii) The ellipse $C=0$ has the biggest $y$ value at $x=x_M$. Because $g_a$ increases and $g_a'(b_a)=0$, we have $b_a \geq x_M$ and $\lim_{a\uparrow x_M}b_a =x_M$. As $|g_a'| \leq \frac{1}{q}$, the dominated convergence theorem produces
$$\lim_{a\uparrow x_M} |G(a)|\leq \lim_{a\uparrow x_M} \int_a^{b_a} \tfrac{|g_a'(x)|}{x} dx \leq \lim_{a\uparrow x_M} \tfrac{1}{q} \ln(\tfrac{b_a}{a})=0.$$

(iii) Let $\Gamma_k:\bR \mapsto \bR$ for $k\geq 0$ be defined as
$$\Gamma_k(x):=\max\{y: F(x,y)=k \}.$$
We observe that $\Gamma_k(x)$ is well-defined for small enough $k$ and $x$, and $\Gamma_k(x)<\Gamma(x)$ for $x>0$ and $k>0$, in the intersection of their domains. Also, $\Gamma_0(x)=\Gamma(x)$ and $\Gamma_0'(0)=\frac{2\sigma_2(\sigma_2\mu_1-\rho \sigma_1 \mu_2)}{(1+q)(2\delta \sigma_2^2 -q \mu_2^2}>0$. Because $\Gamma_k'(x)$ is jointly continuous for small enough $x$ and $k$, there exists $\epsilon>0$ such that
$$\Gamma_\epsilon'(x)>2\epsilon \quad \textrm{for}\quad x\in [0,\epsilon].$$
We define $h(x):=x+ \frac{\Gamma(x)-\Gamma_\epsilon (x)}{\epsilon}$. Because $\Gamma_\epsilon(0)=\Gamma(0)$, we have $\lim_{x\downarrow 0} h(x)=0$. Hence, there exists $a_\epsilon>0$ such that $0<h(x)<\epsilon$ for their $x\leq a_\epsilon$.

Let $a\in (0, a_\epsilon)$ be fixed. Suppose that $g_a(x)>\Gamma_\epsilon(x)$ for $x\in [a,h(a)]$. Then, the definition of the level curve $\Gamma_\epsilon$ implies that $g_a'(x)<\epsilon$ on $[a, h(a)]$, but this is a contradiction
$$ 0<g_a(h(a))-\Gamma_\epsilon (h(a))= \int_a^{h(a)} \big(g_a'(x)-\Gamma_\epsilon '(x)\big)\, dx  + \Gamma(a)-\Gamma_\epsilon(a) \leq -\epsilon (h(a)-a)+ \Gamma(a)-\Gamma_\epsilon(a)=0. $$
Therefore, $g_a$ intersect $\Gamma_\epsilon$ on $[a,h(a)]$. After $g_a$ intersect $\Gamma_\epsilon$, $g_a$ is below $\Gamma_\epsilon$ for a while: $g_a(x)<\Gamma_\epsilon (x)$ for $x\in [h(a),\epsilon]$, because $F(x,\Gamma_\epsilon(x))=\epsilon< \Gamma_\epsilon'(x)$ for $x\in [0,\epsilon]$. This means that $g_a'(x)\geq \epsilon$ for $x\in [h(a),\epsilon]$. 

Now we take the limit $a\downarrow 0$ and obtain the result
$$\liminf_{a\downarrow 0} G(a) \geq \liminf_{a\downarrow 0} \int_{h(a)}^{\epsilon} \frac{\epsilon}{x}\, dx =\liminf_{a\downarrow 0} \epsilon \ln\big(\tfrac{\epsilon}{h(a)}\big)=\infty.$$ 
(End of the proof of Claim 4).\\

By Claim 4 and the intermediate value theorem, we can choose $a\in (0,x_M)$ such that 
$$\int_a^{b_a} \frac{g_a'(x)}{x} \, dx =\ln (\tfrac{1+\overline{\lambda}}{1-\underline{\lambda}}).$$ 
We set $\ux=a$, $\ox=b_a$ and $g=g_a$. Then, $\ux$, $\ox$ and $g$ satisfy
\begin{equation}
\begin{split}\label{g1}
&\textrm{$0<\ux<\ox$, $g\in C^2([\ux,\ox])$ and}\\
&g'(x)=F(x,g(x)), \quad g'(\ux)=0,\quad g'(\ox)=0, \quad \int_{\ux}^{\ox} \frac{g'(x)}{x} \, dx =\ln \big(\tfrac{1+\overline{\lambda}}{1-\underline{\lambda}}\big).
\end{split}
\end{equation}
Here we have $g'(\ux)=0$ because $C(\ux,\Gamma(\ux))=0$. 

\bigskip

{\bf Claim 5: } $g$ has the following properties:\\
(i) $g(x)>0$  for $x\in(\ux,\ox)$ and $g'(x)>0$ for $x\in[\ux,\ox]$.\\
(ii) $g'(x)/x>0$ for $x\in(\ux,\ox)$.\\
(iii) $q\, g(x)\big(g'(x)+1)- (1+q)x g'(x)>0$ for $x\in[\ux,\ox]$. \\
(iv) $g(x)-x g'(x)>0$  for $x\in[\ux,\ox]$.

(Proof of Claim 5): (i) and (ii) are obvious from the construction. 

(iii) Observe that $q\, g(\ux)\big(g'(\ux)+1)- (1+q)\ux g'(\ux)=q\, g(\ux)>0$. Suppose that there exists $x_0\in[\ux,\ox]$ s.t. $q\, g(x_0)\big(g'(x_0)+1)- (1+q)x_0 g'(x_0)=0$. Then, 
$$g'(x_0)=\tfrac{q \, g(x_0)}{(1+q)x_0 - q\, g(x_0)} \quad \Longrightarrow \quad F(x_0,g(x_0))=\tfrac{q \, g(x_0)}{(1+q)x_0 - q\, g(x_0)} \quad \Longrightarrow \quad g(x_0)=\tfrac{x_0}{q} \quad \Longrightarrow \quad g'(x_0)=\tfrac{1}{q},$$ 
which contradicts to Claim 2. 

(iv) Obviously, $g(\ux)-\ux g'(\ux)=g(\ux)>0$. Suppose that there exists $x_0\in[\ux,\ox]$ such that $g(x_0)-x_0g'(x_0)=0$. We set $k=g'(x_0)$ and $g(x_0)=k x_0$. Then, $F(x_0,k x_0)=k$ can be rewritten as
\begin{displaymath}
\begin{split}
&\sqrt{B(x_0,kx_0)^2-4A(x_0,k x_0)B(x_0, kx_0)} \\
&=-\tfrac{4k(1+k)(1+q)^3 (1-kq)\sigma_2^2 \big( k^2\mu_2^2(1+q)^2 - 2k^2 \mu_2(1+q)\rho \sigma_1 \sigma_2 + ((1+k)^2-(1+2k)\rho^2)\sigma_1^2 \sigma_2^2 \big)}{\big(k^2 \mu_2^2 (1+q)^2 (kq-1)+2k\mu_2(1+q)(kq-1)\rho \sigma_1 \sigma_2- (2k(1+k)(1+q)(\delta k (1+q)-\mu_1)+(\rho^2-1+k(k+q+kq-q\rho^2))\sigma_1^2)\sigma_2^2\big)^2}
\end{split}
\end{displaymath}
Observe that $1-kq>0$ by Claim 2. Then the above equality is a contradiction because
\begin{displaymath}
\begin{split}
&\Delta_k\big( k^2\mu_2^2(1+q)^2 - 2k^2 \mu_2(1+q)\rho \sigma_1 \sigma_2 + ((1+k)^2-(1+2k)\rho^2)\sigma_1^2 \sigma_2^2 \big)\\
&=-4(1-\rho^2)\sigma_1^2\sigma_2^2((1+q)\mu_2-\rho \sigma_1 \sigma_2)^2<0\\
&\Longrightarrow \quad k^2\mu_2^2(1+q)^2 - 2k^2 \mu_2(1+q)\rho \sigma_1 \sigma_2 + ((1+k)^2-(1+2k)\rho^2)\sigma_1^2 \sigma_2^2>0 \textrm{  for any $k$}.
\end{split}
\end{displaymath}
(End of the proof of Claim 5).

\bigskip

Finally, the definition of $F$ and Claim 5 imply that $g'(x)=F(x,g(x))$ solves \eqref{g1}. Also, Claim 5 implies \proref{FBO} (4) and (5).
\end{proof}

\bigskip

Out next task is to study the case of $\delta \leq  \tfrac{q}{2(1-\rho^2)} \big( (\tfrac{\mu_1}{\sigma_1})^2+ (\tfrac{\mu_2}{\sigma_2})^2- 2\rho \tfrac{\mu_1 \mu_2}{\sigma_1 \sigma_2} \big) $. Recall from the original Merton's problem that  if $\delta >  \tfrac{q}{2(1-\rho^2)} \big( (\tfrac{\mu_1}{\sigma_1})^2+ (\tfrac{\mu_2}{\sigma_2})^2- 2\rho \tfrac{\mu_1 \mu_2}{\sigma_1 \sigma_2} \big) $, then the optimization problem is well-posed even with zero transaction costs. In case $\delta \leq  \tfrac{q}{2(1-\rho^2)} \big( (\tfrac{\mu_1}{\sigma_1})^2+ (\tfrac{\mu_2}{\sigma_2})^2- 2\rho \tfrac{\mu_1 \mu_2}{\sigma_1 \sigma_2} \big) $, it turns out that the size of the transaction costs  should be large enough to ensure the existence of the solution of the free boundary ODE. 

Before we prove \proref{hard2}, we do some preliminary analysis. For convenience, we define the functions $K$, $Q_0$, $Q_1$, $Q_2$ and $Q_3$ by
\begin{equation}
\begin{split}\label{KQT}
K(x,y,k)&:=A(x,y)k^2 + B(x,y) k + C(x,y)\\
Q_0(k)&:=4(1+k)^2 q^2 \sigma_2^2\\
Q_1(k)&:=4(1+k)q \sigma_2 \Big( 2q  (\mu_1 \sigma_2 - \rho \mu_2 \sigma_1)(1-k q) - 2(1+q)^2 \sigma_2 \big( \delta -\tfrac{q(2\mu_1(1+q)-\sigma_1^2)}{2(1+q)^2} \big) k  \Big)\\
Q_2(k)&:=\Big(2\mu_2 q (1-k q ) \sigma_1 + \rho \sigma_2 \big(2\delta k (1+q)^2 + q(k \sigma_1^2 - 2(1+k)\mu_1)  \big)    \Big)^2\\
&\qquad + (1-\rho^2)\sigma_2^2 \Big( \big(2\delta k (1+q)^2 - 2(1+k)q \mu_1 + k q \sigma_1^2  \big)^2 - 8 \delta q (1-k q)^2 \sigma_1^2     \Big)\\
Q_3(k)&:= k \big( 2\mu_2 q^2 \rho \sigma_1 \sigma_2 - 2\mu_2^2 q (1+q)^2 + (2\delta (1+q)^2 + q(2\mu_1 - \sigma_1^2))\sigma_2^2 \big) \\
&\qquad +2q \sigma_2(\mu_1 \sigma_2 - \rho \mu_2 \sigma_1) 
\end{split}
\end{equation}
To study the level curve $K(x,y,k)=0$, we define $T_u$ and $T_d$ by
\begin{equation}
\begin{split}\label{T}
T_u(x,k)&:= \tfrac{ Q_3(k) x   +2(1+k)q \sigma_2^2 + \sigma_2 \sqrt{Q_2(k) x^2 + Q_1(k) x + Q_0(k)} }{2(1+k)q(1+q)(2\delta \sigma_2^2 - q \mu_2)}\\
T_d(x,k)&:= \tfrac{ Q_3(k) x   +2(1+k)q \sigma_2^2 - \sigma_2 \sqrt{Q_2(k) x^2 + Q_1(k) x + Q_0(k)} }{2(1+k)q(1+q)(2\delta \sigma_2^2 - q \mu_2)}
\end{split}
\end{equation}
To describe limiting behaviors of $T_u$ and $T_d$, we define $l_u$ and $l_d$ by
\begin{equation}
\begin{split}\label{luld}
l_u(k)&:= \tfrac{ Q_3(k)  + \sigma_2 \sqrt{Q_2(k)} }{2(1+k)q(1+q)(2\delta \sigma_2^2 - q \mu_2)}, \quad l_d(k):= \tfrac{ Q_3(k)  - \sigma_2 \sqrt{Q_2(k)} }{2(1+k)q(1+q)(2\delta \sigma_2^2 - q \mu_2)}\\
\end{split}
\end{equation}
The following technical lemma is useful for the proof of \proref{hard2}.

\begin{lemma}\label{wellposedlemma}
Assume that $\delta \leq  \tfrac{q}{2(1-\rho^2)} \big( (\tfrac{\mu_1}{\sigma_1})^2+ (\tfrac{\mu_2}{\sigma_2})^2- 2\rho \tfrac{\mu_1 \mu_2}{\sigma_1 \sigma_2} \big) $. Let $F$ be as in  \proref{hard}, i.e., 
$$F(x,y)=\tfrac{2C(x,y)}{-B(x,y)+\sqrt{B(x,y)^2-4 A(x,y) C(x,y)}}.$$
(1) $Q_2(k)$ is quadratic in $k$ and $Q_2(k)=0$ has two distinct non-negative real roots. We define $k^*$ as the smaller root of $Q_2(k)=0$. \\
%Then $k^* \geq 0$ and $k^*=0$ iff $\delta=\tfrac{q}{2(1-\rho^2)} \big( (\tfrac{\mu_1}{\sigma_1})^2+ (\tfrac{\mu_2}{\sigma_2})^2- 2\rho \tfrac{\mu_1 \mu_2}{\sigma_1 \sigma_2} \big)$.
(2) For $x>0$ and $0\leq k \leq k^*$, $Q_2(k) x^2 + Q_1(k) x + Q_0(k) >0$.\\
(3) For $0 \leq k \leq k^*$,
$$
\big \{ (x,y): x>0, \,\,C(x,y) \geq 0, \,\, k=F(x,y) \big \} = \big \{ (x,y): x>0, \,\,C(x,y) \geq 0,\,\,  y=T_u(x,k) \textrm{  or  } T_d(x,k) \big \}  
$$
(4) The set  $\big \{ (x,y): x>0,  \,\,C(x,y) \geq 0, \,\, k=F(x,y) \big \}$ is bounded if $k > k^*$. It is unbounded if $0\leq k \leq k^*$.\\
(5) For $0 \leq k \leq k^*$, $l_d(k)>k$.\\
(6) For $0 \leq k<k^*$, we have
\begin{equation}
\begin{split}
\lim_{x\to \infty} \tfrac{\partial}{\partial x} T_{u,d} (x,k) = l_{u,d}(k) \,\, \textrm{  and  } \,\,  \lim_{x\to \infty} \tfrac{1}{x} \tfrac{\partial}{\partial k} T_{u,d} (x,k) = l_{u,d}'(k) 
\end{split}
\end{equation}
(7) There exists a constant $c > 0$ such that for all $x>c$ and $0\leq k <k^*$ the following inequalities hold
\begin{equation}
\begin{split}\label{dominator}
\Big|  \tfrac{  \tfrac{\partial}{\partial k} T_{u} (x,k)} {x(k- \tfrac{\partial}{\partial x} T_{u} (x,k))}  \Big|< c+ \tfrac{c}{\sqrt{k^*-k}}, 
\quad \Big|  \tfrac{  \tfrac{\partial}{\partial k} T_{d} (x,k)} {x(k- \tfrac{\partial}{\partial x} T_{d} (x,k))}  \Big|< c + \tfrac{c}{\sqrt{k^*-k}}.
\end{split}
\end{equation}
\end{lemma}
\begin{proof}
(1) As 
$$\Delta_k(Q_2(k)) = 32 q (1-\rho^2) \sigma_1^2 (1+q)^2 \big(\delta -\tfrac{q(2\mu_1(1+q)-\sigma_1^2)}{2(1+q)^2}\big)^2 \sigma_2^2 (2\delta \sigma_2^2 - q \mu_2^2)>0,$$
the equation $Q_2(k)=0$ has two distinct roots. To show that these roots are non-negative, it is enough to check that $Q_2(0)\geq 0, \, Q_2'(0)<0, \, Q_2(\tfrac{1}{q})>0$ and $Q_2'(\tfrac{1}{q})>0$. Indeed, 
\begin{displaymath}
\begin{split}
&Q_2(0)=8 q \sigma_1^2 \sigma_2^2 (1-\rho^2) \Big( \tfrac{q}{2(1-\rho^2)} \big( (\tfrac{\mu_1}{\sigma_1})^2+ (\tfrac{\mu_2}{\sigma_2})^2- 2\rho \tfrac{\mu_1 \mu_2}{\sigma_1 \sigma_2} \big) - \delta \Big) \geq 0,\\
&Q_2(\tfrac{1}{q})= \tfrac{ 2 (1+q)^2  \sigma_2^2 \big(  \delta-\tfrac{q(2\mu_1(1+q)-\sigma_1^2)}{2(1+q)^2}  \big)^2 }{q^2}>0,\\
&Q_2'(\tfrac{1}{q}) = \tfrac{4 \sigma_2 (1+q)^2 \big(  \delta-\tfrac{q(2\mu_1(1+q)-\sigma_1^2)}{2(1+q)^2}  \big) \big(  2\sigma_2(1+q)^2 \big(  \delta-\tfrac{q(2\mu_1(1+q)-\sigma_1^2)}{2(1+q)^2}  \big) + 2 q^2 (\mu_1 \sigma_2 - \rho \mu_2 \sigma_1)  \big)   }{q}>0.
\end{split}
\end{displaymath}
Also observe that $Q'_2(0)$ is linear in $\delta$ and 
\begin{displaymath}
\begin{split}
&Q_2'(0)= \begin{cases} 
-\tfrac{8 q^3 ((1-\rho^2)(\mu_1(1+q)-\sigma_1^2)^2 \sigma_2^2 + (1+q)^2 (\mu_2 \sigma_1 - \rho \mu_1 \sigma_2)^2)}{(1+q)^2}, \quad \textrm{when  }\delta=\tfrac{q(2\mu_1(1+q)-\sigma_1^2)}{2(1+q)^2}\\
-\tfrac{4 q^2(\mu_1 - \tfrac{\rho \mu_2 \sigma_1}{\sigma_2}) ((1-\rho^2)(\mu_1(1+q)-\sigma_1^2)^2 \sigma_2^2 + (1+q)^2 (\mu_2 \sigma_1 - \rho \mu_1 \sigma_2)^2)}{(1-\rho^2)\sigma_1^2}, \\
\qquad\qquad\qquad\qquad\qquad\qquad\qquad   \textrm{when  } \delta=  \tfrac{q}{2(1-\rho^2)}\big( (\tfrac{\mu_1}{\sigma_1})^2+ (\tfrac{\mu_2}{\sigma_2})^2- 2\rho \tfrac{\mu_1 \mu_2}{\sigma_1 \sigma_2} \big)
\end{cases}
\end{split}
\end{displaymath}
Therefore, $Q'_2(0)<0$ under the assumption of $\tfrac{q(2\mu_1(1+q)-\sigma_1^2)}{2(1+q)^2}<\delta\leq  \tfrac{q}{2(1-\rho^2)}\big( (\tfrac{\mu_1}{\sigma_1})^2+ (\tfrac{\mu_2}{\sigma_2})^2- 2\rho \tfrac{\mu_1 \mu_2}{\sigma_1 \sigma_2} \big)$.

(2) As $Q_0(k)>0$ and $Q_2(k) \geq 0 $ for $ 0 \leq k \leq k^*$, it is enough to check that $Q_1(k) \geq 0$ for $0 \leq k \leq k^*$. We may rewrite $Q_1(k)$ as
$$
Q_1(k)=4(1+k)q \sigma_2 \Big( 2q \sigma_2 (\mu_1 - \tfrac{\rho \mu_2 \sigma_1}{\sigma_2}) - \Big(  2q^2 \sigma_2 (\mu_1 - \tfrac{\rho \mu_2 \sigma_1}{\sigma_2}) + 2(1+q)^2 \sigma_2 \big( \delta -\tfrac{q(2\mu_1(1+q)-\sigma_1^2)}{2(1+q)^2} \big)    \Big ) k  \Big),
$$
and observe that $Q_1(k)\geq 0$ for $0 \leq k \leq k_1$, where
$$k_1= \tfrac{2q \sigma_2 (\mu_1 - \tfrac{\rho \mu_2 \sigma_1}{\sigma_2})}{2q^2 \sigma_2 (\mu_1 - \tfrac{\rho \mu_2 \sigma_1}{\sigma_2}) + 2(1+q)^2 \sigma_2 \big( \delta -\tfrac{q(2\mu_1(1+q)-\sigma_1^2)}{2(1+q)^2} \big) }.
$$
Observe that
$$Q_2(k_1)=- \tfrac{4q(1-\rho^2)\sigma_1^2 \big( 2\delta(1+q)^2 -q(2\mu_1(1+q)-\sigma_1^2)  \big)^2 \sigma_2^2 (2\delta \sigma_2^2 - q \mu_2^2)  }{\big(-2\rho q^2 \mu_2 \sigma_1 + \big( 2\delta(1+q)^2 -q(2\mu_1(1+q)-\sigma_1^2)  \big) \sigma_2  \big)^2}<0,
$$
and this implies that $k^*< k_1$. We conclude that $Q_1(k) \geq 0$ for $0\leq k \leq k^*$. 

(3) By the same way as in the proof of \lemref{common}, we can check that for $0 \leq k \leq k^*$, 
\begin{displaymath}\label{no1}
\big \{ (x,y): x>0, \,\, 2C(x,y) + B(x,y)k = 0, \,\, K(x,y,k)=0 \big \} = \emptyset.
\end{displaymath}
Straight-forward computations show that $y=T_u(x,k)$ and $y=T_d(x,k)$ are the solutions of $K(x,y,k)=0$. Therefore, for $0 \leq k \leq k^*$,
\begin{displaymath}
\begin{split}
&\big \{ (x,y): x>0, \,\,C(x,y) \geq 0, \,\, k=F(x,y) \big \} \\
&=\big \{ (x,y): x>0, \,\,C(x,y) \geq 0,\,\, 2C(x,y) + B(x,y)k >0, \,\, K(x,y,k)=0 \big \} \\
&= \big \{ (x,y): x>0, \,\,C(x,y) \geq 0,\,\,  y=T_u(x,k) \textrm{  or  } T_d(x,k) \big \}.
\end{split}
\end{displaymath}

(4) By the same way as in the proof of \lemref{common}, we can check that for $0 \leq k \leq k^*$, 
$$\big \{ (x,y): x>0, \,\, C(x,y)= 0, \,\, K(x,y,k)=0 \big \} = \emptyset.$$
At $x=0$, we compare the slopes of level curve $\{(x,y): C(x,y)=0\}$ with $\tfrac{\partial}{\partial x} T_u(0,k)$ and $\tfrac{\partial}{\partial x} T_d(0,k)$ to conclude that 
$$
\textrm{  for  }0 \leq k \leq k^*, \quad y=T_u(x,k) \textrm{  or  } T_d(x,k) \textrm{  implies  } C(x,y) \geq 0.
$$
Combine the above observation with part (2) and (3) of this lemma, we conclude that $\big \{ (x,y): x>0,  \,\,C(x,y) \geq 0, \,\, k=F(x,y) \big \}$ is unbounded if $0 \leq k \leq k^*$.

Now assume that $k>k^*$. In the proof of part (1), we observe that for any small enough $\epsilon>0$, $Q_2(k^*+\epsilon)<0$. For any $k \geq k^*+ \epsilon$,
\begin{displaymath}
\begin{split}
&\big \{ (x,y): x>0,  \,\,C(x,y) \geq 0, \,\, k=F(x,y) \big \} \\
& \subset \big \{ (x,y): x>0,  \,\,C(x,y) \geq 0, \,\, k^*+\epsilon \leq  F(x,y) \big \} \\
& \subset \big \{ (x,y): x>0, \,\,  T_d(x,k^*+\epsilon) \leq y \leq T_u(x,k^*+\epsilon)\big \}
\end{split}
\end{displaymath}
The last set above is part of an ellipse (because $Q_2(k^*+\epsilon)<0$), which is bounded. As $\epsilon>0$ can be chosen arbitrary small, we conclude the boundedness for $k>k^*$.

(5) As $l_d(k) - k =  \tfrac{ Q_3(k) - 2k(1+k)q(1+q)(2\delta \sigma_2^2 - q \mu_2)  - \sigma_2 \sqrt{Q_2(k)} }{2(1+k)q(1+q)(2\delta \sigma_2^2 - q \mu_2)}$,
it is enough to show that for $0 \leq k \leq k^*$,
\begin{equation}
\begin{split}\label{Q4Q5}
&Q_4(k):=Q_3(k) - 2k(1+k)q(1+q)(2\delta \sigma_2^2 - q \mu_2)>0,\\
 \textrm{  and  } \quad  &Q_5(k):=Q_4(k)^2- \sigma_2^2 Q_2(k) >0.
\end{split}
\end{equation}
To deal with the first inequality, we observe that $Q_4(k)$ is quadratic in $k$ and 
\begin{displaymath}
\begin{split}
&Q_4(0)=2q \sigma_2 (\mu_1 \sigma_2 - \rho \mu_2 \sigma_1) >0,\\
&Q_4(k_1)=\tfrac{8q(1+q)^3 \big( \delta -\tfrac{q(2\mu_1(1+q)-\sigma_1^2)}{2(1+q)^2} \big) \sigma_2 (2 \delta \sigma_2^2 - q \mu_2^2)(\mu_1 \sigma_2 - \rho \mu_2 \sigma_1)}
{\big( \sigma_2 (2\delta (1+q)^2 - q(2\mu_1 - \sigma_1^2)) - 2\mu_2 q^2 \rho \sigma_1   \big)^2}>0 
\end{split}
\end{displaymath}
where $k_1$ be as in the proof of part (2) above. Then $Q_4''(k)=-4q(1+q)(2\delta \sigma_2^2 - q \mu_2^2) <0$  implies that $Q_4(k)>0$ for $0\leq k \leq k_1$. As $k_1>k^*$, we obtain the first inequality in \eqref{Q4Q5}. 

To show the second inequality, we define $Q_6(k)$ as
$$
Q_6(k):= \tfrac{Q_5(k)}{4q(1- q k ) (2 \delta \sigma_2^2 - q \mu_2^2)}
$$
As $k^*< \tfrac{1}{q}$ (this can be seen in the proof of part (1)), it is enough to show that  $Q_6(k)>0$ for $0\leq k \leq k^*$. We can check that $Q_6(k)$ is a cubic polynomial in $k$, $Q_6(k^*)>0$, $Q_6(0)>0$ and $Q_6'''(0)<0$. Furthermore, direct computation shows that $Q_6'(0)\leq 0 $ implies that $Q_6''(0)<0$. Considering possible shapes of a graph of this cubic polynomial, we conclude that $Q_6(k)>0$ for $0\leq k \leq k^*$. Consequently, the second inequality in \eqref{Q4Q5} holds.

(6) The convergences can be shown by a straightforward calculation.

(7) In part (6), the direct computation also yields that the convergence $\lim_{x\to \infty} \tfrac{\partial}{\partial x} T_{d} (x,k) = l_{d}(k)$ is uniform in $k \in [0,k^*]$. This observation, together with part (5), implies that there exists $\epsilon>0$ and $c>0$ such that $\tfrac{\partial}{\partial x} T_{d} (x,k)-k >\epsilon$ for all $x>c$ and $k\in [0,k^*]$. Also, we observe that for $k\in [0,k^*)$ and $x>1$, 
$$
  \tfrac{1}{x} \tfrac{\partial}{\partial k} T_{d} (x,k) < const \cdot \big(1+ \tfrac{1}{\sqrt{Q_2(k)}} \big) \leq const \cdot\big(1+ + \tfrac{1}{\sqrt{k^*-k}}\big),
$$   
where $const$s are generic constants independent of $x$ and $k$. Now we obtain the first inequality in \eqref{dominator} by choosing $c$ large enough. The second inequality can be shown in the same way.
\end{proof}

Now we define the constant $c^* = c^*(\mu_1,\mu_2,\sigma_1,\sigma_2,\delta, p, \rho)$ to describe the well-poseness.

\begin{definition}\label{c definition}
The constant $c^*$ is defined as follows
$$
c^*=\int_0^{k^*} k \Big( \frac{l_u'(k)}{k- l_u(k) } - \frac{l_d'(k)}{k- l_d(k) }    \Big)   dk,
$$
where $k^*$ is as in \lemref{wellposedlemma} (1). 
\end{definition}

\begin{remark}
By \lemref{wellposedlemma} (6) and (7), we observe that  
$$\Big|k \big( \tfrac{l_u'(k)}{k- l_u(k) } - \tfrac{l_d'(k)}{k- l_d(k) }    \big)  \Big| \leq c+ \tfrac{c}{\sqrt{k^*-k}}.$$
Hence $c^*$ is well-defined. And $c^*\geq 0$, because for $k\in [0,k^*)$, we have $l_u'(k)<0$, $l_d'(k)>0$ and $l_u(k)>l_d(k)>k$. Moreover, 
$$c^*=0 \iff k^*=0 \iff \delta=  \tfrac{q}{2(1-\rho^2)}\big( (\tfrac{\mu_1}{\sigma_1})^2+ (\tfrac{\mu_2}{\sigma_2})^2- 2\rho \tfrac{\mu_1 \mu_2}{\sigma_1 \sigma_2} \big).$$
\end{remark}

We are ready to prove \proref{hard2}. The main idea for the proof is similar to that of Proposition 6.13 in \cite{ChoSirZit11}.

\begin{proposition}\label{hard2} If $0<p<1$, $\mu_1>\frac{\rho \mu_2 \sigma_1}{\sigma_2}$, $\delta \leq  \tfrac{q}{2(1-\rho^2)} \big( (\tfrac{\mu_1}{\sigma_1})^2+ (\tfrac{\mu_2}{\sigma_2})^2- 2\rho \tfrac{\mu_1 \mu_2}{\sigma_1 \sigma_2} \big) $ and 
$c^*< \ln \big(\tfrac{1+\overline{\lambda}}{1-\underline{\lambda}}\big)$, then \proref{FBO} holds.
\end{proposition}

\begin{proof}

As for Claim 1 in the proof of \proref{hard}, we can show that for $0<a<\infty$, there exists a function $g_a:[a, b_a] \mapsto \bR$ such that 
\begin{displaymath}
\begin{split}
g_a'(x)=F(x,g_a(x)) \textrm{  and  }  g_a(a)=\Gamma(a)  \\
\end{split}
\end{displaymath}
with $b_a:=\inf \{x>a : \, C(x,g_a(x))=0 \}$, where $F$ and $\Gamma$ are as in \eqref{FG}. Because the level curve $C=0$ is hyperbola (due to the condition $\delta \leq  \tfrac{q}{2(1-\rho^2)} \big( (\tfrac{\mu_1}{\sigma_1})^2+ (\tfrac{\mu_2}{\sigma_2})^2- 2\rho \tfrac{\mu_1 \mu_2}{\sigma_1 \sigma_2} \big) $), $b_a$ might be $\infty$. The next claim is useful for the proof of $b_a<\infty$. 

\bigskip

{\bf Claim 1}: \emph{$g_a'(x)$ does not admit a local minimum on $(a, b_a)$.}

(Proof of Claim 1): Suppose that $g_a'$ has a local minimum point $x_m\in (a, b_a)$. Then, there exists $\epsilon>0$ such that $g_a'(x_m) \leq g_a'(x) \textrm{  for  } x\in [x_m- \epsilon, x_m+\epsilon]$. Let $k_m:= g_a'(x_m)$. By \lemref{wellposedlemma} and the construction of $T_u$ and $T_d$, we observe that
\begin{equation}
\begin{split}\label{no local}
&g_a(x_m)= T_u(x_m,k_m) \textrm{  or  } T_d(x_m, k_m),\\
&T_d(x,k_m) \leq g_a(x) \leq T_u(x,k_m) \textrm{  for  } x\in [x_m-\epsilon , x_m+ \epsilon].
\end{split}
\end{equation}
Suppose that $g_a(x_m)=T_u(x_m,k_m)$. By \eqref{no local}, $g_a$ and $T_u(\cdot, k_m)$ should tangent at $x=x_m$, i.e., $\tfrac{\partial}{\partial x} T_u(x_m,k_m) = g_a'(x_m) = k_m$. The observation $\tfrac{\partial^2}{\partial x^2} T_u(x,k_m)<0$, together with \eqref{no local}, implies that $g_a'(x) >  \tfrac{\partial}{\partial x} T_u(x,k_m)$ for $x\in (x_m, x_m+\epsilon]$. This lead to the following contradiction
$$
0<\int_{x_m}^{x_m+\epsilon} \Big( g_a'(x) - \tfrac{\partial}{\partial x} T_u(x,k_m) \Big) dx = g_a(x_m+\epsilon) - T_u(x_m+\epsilon, k_m) \leq 0,
$$
where the last inequality is due to \eqref{no local}. Likewise, the case of $g_a(x_m)=T_d(x_m,k_m)$ also ends up with a contradiction.\\
(End of the proof of Claim 1).

\bigskip

{\bf Claim 2}: \emph{$b_a<\infty$ and $g_a'(b_a)=0$.}

(Proof of Claim 2): Suppose that $b_a=\infty$. Claim 1 implies that there exists $k_0$ such that $\lim_{x\to \infty} g_a'(x)=k_0$. \lemref{wellposedlemma} (4) and (5) imply that $0 \leq k_0 \leq k^*$ and $l_d(k_0 - \epsilon) > k_0 + 2\epsilon$ for some $\epsilon>0$. By \lemref{wellposedlemma} (6), we observe that $\tfrac{\partial}{\partial x} T_d(x,k_0-\epsilon)> g_a'(x)+\epsilon$ for large enough $x$. But this leads to a contradiction, because $g_a'(x)>k_0 - \epsilon$ for large enough $x$ implies that $g_a(x)>T_d(x, k_0 - \epsilon)$ for large enough $x$. 

For $g_a'(b_a)=0$, we simply observe that $F(x,y)=0$ for $(x,y)$ such that $C(x,y)=0$ and $x>0$. \\
(End of the proof of Claim 2).

\bigskip

{\bf Claim 3}: \emph{Let $G(a):=\int_a^{b_a} \frac{g_a'(x)}{x} dx $. Then $G$ has the following properties: \\
(i) $G$ is continuous on $(0,\infty)$.\\
(ii) $\lim_{a\uparrow \infty} G(a)=c^*$.\\
(iii) $\lim_{a\downarrow 0} G(a)=\infty$.
}

(Proof of Claim 3): The proofs of parts (i) and (iii) are the same as that for Claim 4 in \proref{hard}. To prove part (ii), we define $x_u(a)$ and $x_d(a)$ as
$$
x_u(a) := \{x>0: g_a(x) = T_u(x,k^*)\} \quad \textrm{and} \quad x_d(a) := \{x>0: g_a(x) = T_d(x,k^*)\}.
$$
\begin{figure}
\begin{center}
  \includegraphics[width=6cm]{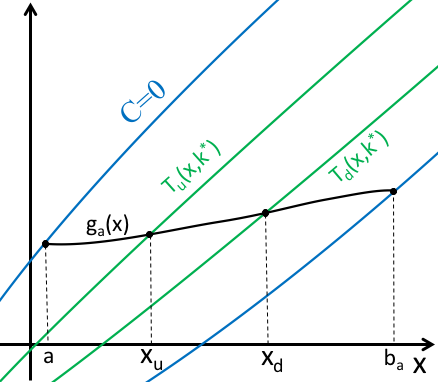}
\end{center}
\figcaption{$x_u$ and $x_d$}
\end{figure}
See Figure 2 for the illustration. \lemref{wellposedlemma} (4) implies that $x_u$ and $x_d$ are well-defined and $a<x_u(a)<x_d(a)<b_a$. Claim 1 implies that $g_a'(x)$ strictly increases for $x\in (a,x_u(a)]$ and strictly decreases for $x\in [x_d(a),b_a)$. Therefore, there exists the inverse function $I_a:[0,k^*] \to [a,x_u(a)]$ of $g_a$, i.e., $g_a'(I_a(k))=k$ for $k\in [0,k^*]$. Then, we observe that
$$
g_a(I_a(k))=T_u(I_a(k),k) \textrm{  and  } I_a'(k)=\tfrac{  \frac{\partial}{\partial k} T_{u} (x,k)} {k- \frac{\partial}{\partial x} T_{u} (x,k)},
$$
where the second equality can be obtained by differentiating the first equality. By changing variable as $x=I_a(k)$,
\begin{displaymath}
\begin{split}
\int_{a}^{x_u(a)} \tfrac{g_a'(x)}{x} dx = \int_0^{k^*} \tfrac{k}{I_a(k)} \cdot \tfrac{  \frac{\partial}{\partial k} T_{u} (I_a(k),k)} {k- \frac{\partial}{\partial x} T_{u} (I_a(k),k)} dk \longrightarrow \int_0^{k^*} \tfrac{ k l_u'(k)}{k-l_u(k)} dk \quad \textrm{as  }a\to \infty,
\end{split}
\end{displaymath}
where the convergence is justified by \lemref{wellposedlemma} (6) and (7) and the observation $\lim_{a\to \infty} I_a(k) = \infty$. In the same way, we can check that
\begin{displaymath}
\begin{split}
\int_{x_d(a)}^{b_a} \tfrac{g_a'(x)}{x} dx \longrightarrow \int_0^{k^*} -\tfrac{ k l_d'(k)}{k-l_d(k)} dk \quad \textrm{as  }a\to \infty.
\end{split}
\end{displaymath}
Therefore, to complete the proof of (ii), it remains to prove that $\lim_{a\to \infty} \int_{x_u(a)}^{x_d(a)} \tfrac{g_a'(x)}{x} dx = 0 $. \proref{wellposedlemma} (5) and (6) implies that there exists $\epsilon>0$ and $x_\epsilon>0$ such that $\tfrac{\partial}{\partial x} T_d(x,k^*) > k^* + 2 \epsilon$ for $x>x_\epsilon$. By \lemref{wellposedlemma} (4), we can find $a_\epsilon>0$ such that $g_a'(x)<k^*+\epsilon$ for $a>a_\epsilon$ and $x\in [a,b_a]$. If $a>\max\{a_\epsilon, x_\epsilon\}$, then
\begin{equation}
\begin{split}\label{lastguy}
\epsilon \, (x_d(a)-x_u(a)) & \leq \int_{x_u(a)}^{x_d(a)} \Big(  \tfrac{\partial}{\partial x} T_d(x,k^*)  - g_a'(x)   \Big) dx = T_u(x_u(a),k^*) - T_d(x_u(a),k^*) \\
&=\tfrac{\sigma_2 \sqrt{Q_1(k^*)x_u(a) + Q_0(k^*)}} {(1+k^*)q(1+q)(2\delta \sigma_2^2  - q \mu_2)},
\end{split}
\end{equation}
where the first equality is due to $g_a(x_u(a))=T_u(x_u(a),k^*)$ and $g_a(x_d(a))=T_d(x_d(a),k^*)$, and the second equality is from the definition of $T_u,T_d$ and $k^*$.
Therefore,
\begin{displaymath}
\begin{split}
\limsup_{a\to \infty} \Big| \int_{x_u(a)}^{x_d(a)} \tfrac{g_a'(x)}{x} dx  \Big| & \leq \limsup_{a\to \infty} \Big| (k^*+\epsilon) \ln\Big( 1+ \tfrac{x_d(a)-x_u(a)}{x_u(a)}  \Big)  \Big|  \\
& \leq \limsup_{a\to \infty} \Big| (k^*+\epsilon) \ln\Big( 1+ \tfrac{\sigma_2 \sqrt{Q_1(k^*)x_u(a) + Q_0(k^*)}} {\epsilon(1+k^*)q(1+q)(2\delta \sigma_2^2  - q \mu_2)x_u(a)} \Big)  \Big|  =0,
\end{split}
\end{displaymath}
where the second inequality is from \eqref{lastguy}, and the equality holds because $\lim_{a\to \infty} x_u(a)=\infty$.\\
(End of the proof of Claim 3).

\bigskip

By Claim 3 and the intermediate value theorem, we can choose $a\in (0,\infty)$ such that 
$$\int_a^{b_a} \tfrac{g_a'(x)}{x} \, dx =\ln (\tfrac{1+\overline{\lambda}}{1-\underline{\lambda}}).$$ 

The proof of (4) and (5) of \proref{FBO} is exactly same as the proof in \proref{hard}.
\end{proof}

\begin{figure}
\begin{center}
  \includegraphics[width=6cm]{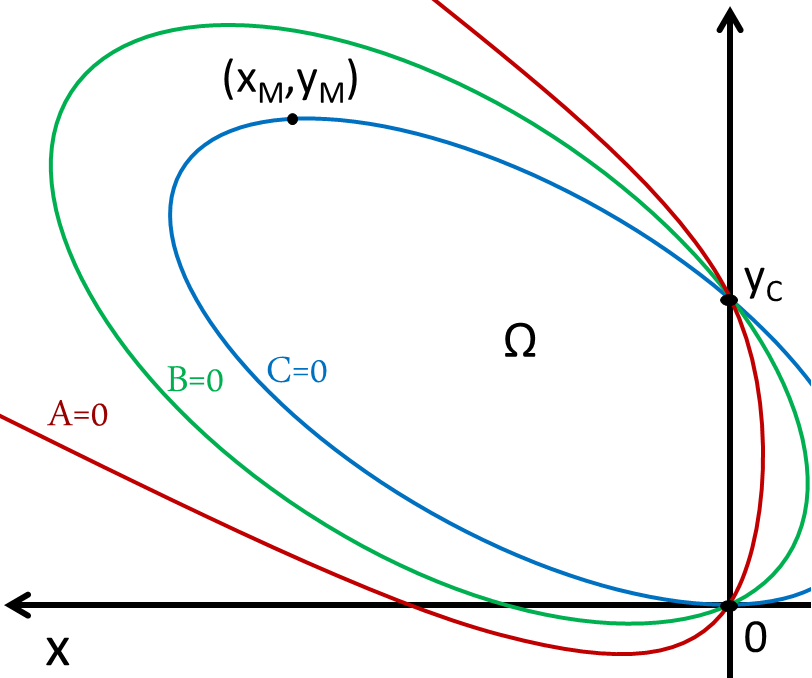}\qquad \includegraphics[width=6cm]{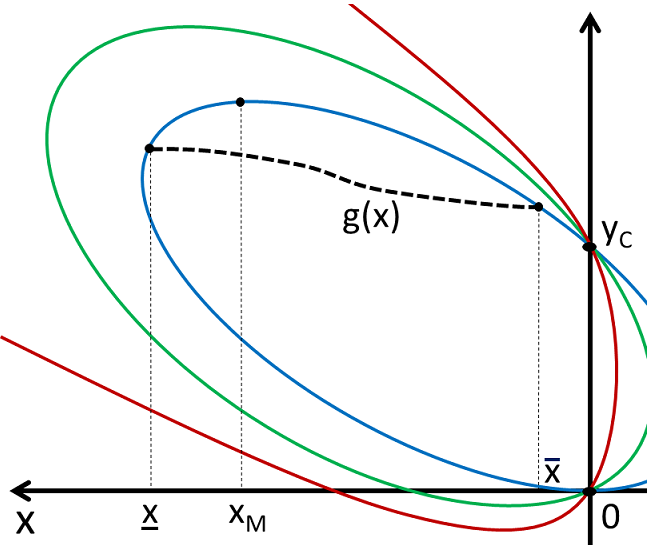}
\end{center}
\figcaption{$0<p<1$ and $\mu_1<\frac{\rho \mu_2 \sigma_1}{\sigma_2}$}
\end{figure}

\begin{proposition}\label{pro2} In case $0<p<1$ and $\mu_1<\frac{\rho \mu_2 \sigma_1}{\sigma_2}$, \proref{FBO} holds.
\end{proposition}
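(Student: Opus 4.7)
The argument parallels \proref{hard} verbatim, with the sole geometric change being that $x_M<0$ under the present hypothesis (while $y_C$, $y_M$, $x_D$, $y_D$ remain positive by the same Cauchy--Schwarz inequalities that ensured $2\delta\sigma_2^2-q\mu_2^2>0$ and $2\delta(1+q)^2+q(\sigma_1^2-2(1+q)\mu_1)>0$). Consequently, the ellipse $C=0$ bulges into the half-plane $\{x<0\}$ with its highest point $(x_M,y_M)$ there, and I would partition its interior into regions $\Omega_1,\dots,\Omega_4$ by the signs of $A,B,C$ in the left half-plane as shown in Figure~2. The key computations---the slopes of $A=0,B=0,C=0$ at the two $y$-axis crossings $(0,0)$ and $(0,y_C)$, and the positive differences that order them---are identical to those in \proref{hard}.

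Since \proref{FBO}(5) calls for $g'(x)/x>0$ and we anticipate $\ux<\ox<0$, the correct branch of the two roots of $Ag'^2+Bg'+C=0$ is the one producing $g'\leq 0$; I would set
\[
F(x,y):=\frac{2C(x,y)}{-B(x,y)-\sqrt{B(x,y)^2-4A(x,y)C(x,y)}},
\]
which is nonpositive on $\Omega_2\cup\Omega_3\cup\Omega_4$ and vanishes on the ellipse. For each $a\in(x_M,0)$ I would start $g_a'(x)=F(x,g_a(x))$ at $(a,\Gamma(a))$ on the upper branch of the ellipse and evolve it to the \emph{left}. Because $\Gamma'(a)<0$ (as $a>x_M$) and $g_a'(a)=0$, the solution curve stays strictly below $\Gamma$ just to the left of $a$, hence inside $\Omega_2\cup\Omega_3\cup\Omega_4$. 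Letting $b_a:=\sup\{x<a:(x,g_a(x))\in\partial(\Omega_2\cup\Omega_3\cup\Omega_4)\}$, the concavity-of-$l$ argument from Claim~1 still forbids escape through $\partial\Omega_1\cap\partial\Omega_2$, so $C(b_a,g_a(b_a))=0$ and $g_a'(b_a)=0$. The analogs of Claims~2--3 then transfer directly: $F=1/q$ still reduces to $y=x/q$ or $y=L(x)$, and since $(x_D,y_D)$ sits in the right half-plane the left-half-plane curve $(x,g_a(x))$ never meets these lines, so $g_a'\neq 1/q$; uniqueness and $C^2$-regularity follow from Picard--Lindel\"of together with \lemref{common}(2)--(3). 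The four strict positivity statements of \proref{FBO}(4)--(5) are obtained from the same algebraic identities used in Claim~5.

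Setting $G(a):=\int_{b_a}^{a}g_a'(x)/x\,dx\geq 0$, I would verify that $G$ is continuous on $(x_M,0)$ (continuous dependence on initial data, plus non-tangency of $g_a$ to $C=0$ at $b_a$ via the same second-derivative argument ruling out $(b_a,g_a(b_a))=(x_M,y_M)$), that $\lim_{a\downarrow x_M}G(a)=0$ (from $b_a\to x_M$ and the uniform bound $|g_a'|\leq 1/q$), and that $\lim_{a\uparrow 0}G(a)=\infty$. This last limit is the principal obstacle, exactly as in \proref{hard}: one introduces auxiliary level curves $\Gamma_\varepsilon(x):=\max\{y:F(x,y)=-\varepsilon\}$ adjacent to $x=0$, chooses $\varepsilon>0$ so that $|\Gamma_\varepsilon'|>2\varepsilon$ on a small interval abutting $x=0$ from the left, and compares $g_a$ to $\Gamma_\varepsilon$ to force $|g_a'|\geq\varepsilon$ on an interval whose length yields a logarithmic divergence. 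The intermediate value theorem then selects $a\in(x_M,0)$ with $G(a)=\ln\bigl((1+\overline\lambda)/(1-\underline\lambda)\bigr)$, and we set $\ox=a$, $\ux=b_a$, $g=g_a$.
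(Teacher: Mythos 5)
Your proposal follows the paper's own (very terse) proof in all essentials: the paper likewise takes the other root $F=2C/(-B-\sqrt{B^2-4AC})$, notes that the geometry of Figure~1 is mirrored into the left half-plane, and asserts that Claims 1--5 and the intermediate-value argument for $G$ transfer verbatim; your identification of $x_M<0$, the leftward evolution from $(a,\Gamma(a))$ with $a\in(x_M,0)$, and the choice $\ox=a$, $\ux=b_a$ are all right.

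One bookkeeping point is off, though it does not affect the method. With the minus root, the denominator $-B-\sqrt{B^2-4AC}$ is negative exactly where $B>0$ or $A<0$ (given $C\geq 0$, $x\neq 0$), so the region excluded is $\{B\leq 0,\,A\geq 0\}$, \emph{not} the mirror image of $\Omega_1=\{B\geq0,A\geq0\}$ from \proref{hard}. In fact, because the slope ordering of $A=0$, $B=0$, $C=0$ at $(0,y_C)$ reverses its geometric meaning when one moves left rather than right, the solution curve here starts and stays in the region $\{A>0,\,B>0,\,C\geq0\}$ --- which is precisely the single region $\Omega$ the paper defines for this case --- rather than in an analogue of $\Omega_2\cup\Omega_3\cup\Omega_4$; and your claim that $F\leq 0$ on the Prop.~\ref{hard}-style $\Omega_4=\{B<0,A\geq0\}$ is false. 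Correspondingly, the no-escape argument must be applied to the locus where the present denominator vanishes, namely $\{A=0,\,B\leq0,\,C>0\}$, not to $\partial\Omega_1\cap\partial\Omega_2$ as labelled in \proref{hard}. With these labels corrected the argument closes exactly as you describe.
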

\begin{proof}
Considering the quadratic curves $A=0,\, B=0,\,C=0$, we define the region $\Omega$ (see Figure 3) by 
$$\Omega:=\{(x,y):\, x<0,\, C(x,y)\geq 0 ,\, B(x,y)>0,\, A(x,y)>0 \}.$$
As in \proref{hard} and \proref{hard2}, we can prove that there exist $\ux<\ox<0$ and $g\in C^2([\ux,\ox])$ such that
\begin{equation}
\begin{split}\label{g2}
&g'(x)=F(x,g(x)), \quad g'(\ux)=0,\quad g'(\ox)=0,\quad \int_{\ux}^{\ox} \frac{g'(x)}{x} \, dx =\ln \big(\tfrac{1+\overline{\lambda}}{1-\underline{\lambda}}\big),\\
&\textrm{where } F(x,y):=\tfrac{2C(x,y)}{-B(x,y)-\sqrt{B(x,y)^2-4 A(x,y) C(x,y)}}.
\end{split}
\end{equation}
Note that $F$ is different from $F$ in \proref{hard}, but the analysis is almost same. Also we can prove the following properties of $g$ by the same way as in \proref{hard}:\\
(i) $g(x)>0$ and $g'(x)<0$ for $x\in [\ux,\ox]$.\\
(ii) $g'(x)/x>0$ for $x\in (\ux,\ox)$.\\
(iii) $q\, g(x)\big(g'(x)+1)- (1+q)x g'(x)>0$ for $x\in[\ux,\ox]$.\\
(iv) $g(x)-x g'(x)>0$  for $x\in[\ux,\ox]$.

The proof is done by \eqref{g2} and (i)-(iv).  
\end{proof}

\begin{figure}
\begin{center}
  \includegraphics[width=6cm]{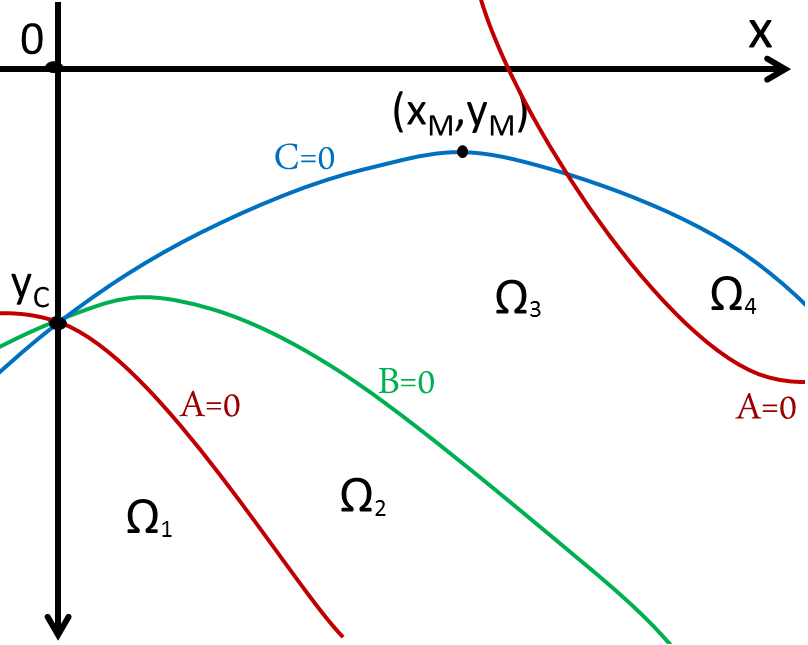}\qquad \includegraphics[width=6cm]{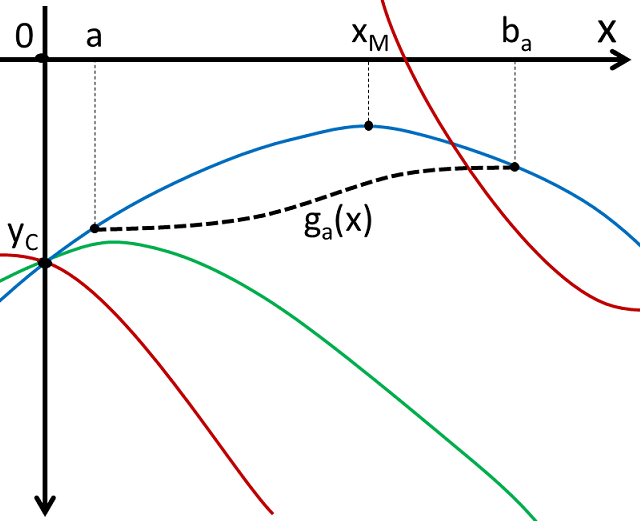}
\end{center}
\figcaption{$p<0$ and $\mu_1>\frac{\rho \mu_2 \sigma_1}{\sigma_2}$}
\end{figure}

\begin{proposition}\label{p nega} In case $p<0$ and $\mu_1>\frac{\rho \mu_2 \sigma_1}{\sigma_2}$, \proref{FBO} holds.
\end{proposition}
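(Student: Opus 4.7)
The plan is to adapt the construction of \proref{hard} to the regime $\sgn(p)=-1$, $q\in(-1,0)$. A direct computation of the conic-classifying discriminant gives
\[
\bigl(\tfrac{\partial^2 C}{\partial x\partial y}\bigr)^2-\tfrac{\partial^2 C}{\partial x^2}\tfrac{\partial^2 C}{\partial y^2}\;=\;-8q(1-\rho^2)(1+q)^2\sigma_1^2\sigma_2^4\Bigl(\delta-\tfrac{q}{2(1-\rho^2)}\bigl((\tfrac{\mu_1}{\sigma_1})^2+(\tfrac{\mu_2}{\sigma_2})^2-2\rho\tfrac{\mu_1\mu_2}{\sigma_1\sigma_2}\bigr)\Bigr),
\]
which is now \emph{positive} since $q<0$ and the bracket is positive by \assref{ass}. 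So $C=0$ is a hyperbola through the origin rather than an ellipse, and the geometric picture changes qualitatively. Since \proref{FBO}(4) still requires $qg>0$, the target branch of $C=0$ must lie in $\{y<0\}$, and under $\mu_1>\rho\mu_2\sigma_1/\sigma_2$ I expect the admissible region to sit in $\{x>0,\,y<0\}$ (as sketched in Figure~3), with $y_C,(x_D,y_D),(x_M,y_M)$ all negative. Using \lemref{common}(1) and computing slopes of $A=0,B=0,C=0$ at $(0,0)$ and $(0,y_C)$, I would partition the piece of $\{x>0\}$ adjacent to this branch into subregions $\Omega_1,\ldots,\Omega_4$ analogous to those in \proref{hard}.

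Next, using \lemref{common}(3) to ensure $\sqrt{B^2-4AC}$ is real for $x\ne 0$, I would choose the branch
\[
F(x,y):=\tfrac{2C(x,y)}{-B(x,y)\pm\sqrt{B(x,y)^2-4A(x,y)C(x,y)}}
\]
with the sign of the square root picked so that $F$ reproduces the root of \eqref{ode} coming from \eqref{optimizer}, and apply Peano existence to $g_a'=F(x,g_a)$ starting from $(a,\Gamma(a))$ on the hyperbola branch. As in Claim~1 of \proref{hard}, the flow cannot blow up on the denominator-zero locus (a finite-slope comparison $l'(0)<\infty$ rules this out), so $g_a$ must first return to $C=0$ at some $b_a>a$, giving $g_a'(b_a)=0$ automatically.

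The step I expect to require the most care is the analogue of Claim~2: that $g_a$ avoids the singular point $(x_D,y_D)$. Both auxiliary lines $y=x/q$ and $y=L(x)$ now have \emph{negative} slope (because $q<0$), so their relative ordering and the side on which the curve $g_a$ sits must be re-examined. The key algebraic identity
\[
g_a'(x_0)-L'(x_0)=\tfrac{2\delta(1+q)^2+q(\sigma_1^2-2(1+q)\mu_1)}{q(1+q)^2(2\delta\sigma_2^2-q\mu_2^2)}
\]
has the \emph{opposite} sign to the $p>0$ case --- the numerator is still positive by \assref{ass} and Cauchy--Schwarz, but the denominator is negative because $q<0$ --- so the intended contradiction has to be re-derived with the two lines swapped. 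Once this sign-flipped comparison is in place, Picard--Lindel\"of gives uniqueness and $C^2$-regularity of $g_a$ away from $(x_D,y_D)$.

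To close the argument, $G(a):=\int_a^{b_a}g_a'(x)/x\,dx$ is continuous on the admissible interval, $G(a)\to 0$ as $a$ approaches the $x$-extremum of the branch (via the bound $|g_a'|<1/|q|$), and $G(a)\to+\infty$ as $a\downarrow 0$ by the sub-level-curve $\Gamma_\epsilon$ argument of Claim~4(iii), which uses no sign assumption; the intermediate value theorem then selects $\ux=a$, $\ox=b_a$ with $\int_{\ux}^{\ox}g'(x)/x\,dx=\log((1+\old)/(1-\uld))$. Properties \proref{FBO}(4)--(5) follow from the contradiction arguments of Claim~5: $qg>0$ is built into the chosen branch; $g'(x)/x>0$ because $g'>0$ and $x>0$; and the remaining strict inequalities $qg(g'+1)-(1+q)xg'>0$ and $q(g-xg')>0$ reduce, by plugging a hypothetical equality into $F(x_0,g(x_0))=g'(x_0)$, to the identity
\[
\Delta_k\bigl(k^2\mu_2^2(1+q)^2-2k^2\mu_2(1+q)\rho\sigma_1\sigma_2+((1+k)^2-(1+2k)\rho^2)\sigma_1^2\sigma_2^2\bigr)=-4(1-\rho^2)\sigma_1^2\sigma_2^2((1+q)\mu_2-\rho\sigma_1\sigma_2)^2<0,
\]
which does not depend on $\sgn(p)$ and therefore carries over verbatim.
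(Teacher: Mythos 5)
Your plan tracks the paper's own proof of this case almost step for step: the paper likewise observes that for $q<0$ the conic $C=0$ becomes a hyperbola, locates the relevant (lower) branch in $\{x>0,\,y<0\}$ as in Figure~3, uses the same partition $\Omega_1,\ldots,\Omega_4$, the same branch $F=2C/(-B+\sqrt{B^2-4AC})$, a sign-adjusted $\Gamma$ parametrizing the lower curve of the hyperbola with $\Gamma'>0$ on $(0,x_M)$ and $\Gamma'<0$ beyond, and then carries over Claims~1--5 of \proref{hard} with the appropriate sign modifications, including exactly the sign-flip in $g_a'(x_0)-L'(x_0)$ you single out. The one slip worth flagging is your parenthetical that $y_C$, $(x_D,y_D)$, $(x_M,y_M)$ are ``all negative'': in this regime $x_D>0$ and $x_M>0$ (only the $y$-coordinates $y_C,y_D,y_M$ are negative), which is in fact what places the construction in the fourth quadrant as you correctly state elsewhere.
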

\begin{proof}
Because $p<0$, we have
$$q<0, \,\,1+q>0,\,\,  x_M>0,\,\, y_M<0.$$
As in \proref{hard}, we can show that the level curve $C=0$ is a hyperbola, and the quadratic curves $A=0,\, B=0,\,C=0$ are as in Figure 4 (we choose the lower curves of the hyperbola). Also,
\begin{displaymath}
\begin{split}
&\{(x,y):\, x>0,\,y<0,\, C(x,y)\geq0 \}=\Omega_1 \cup \Omega_2 \cup \Omega_3\cup \Omega_4,\\
&\textrm{where  }
\begin{cases}
\Omega_1:=\{(x,y):\, x>0,  \, y<0,\,\,  C>0,\, B\geq 0,\, A\geq 0\} \\
\Omega_2:=\{(x,y):\, x>0,  \, y<0, \,\,  C>0,\, B\geq 0,\, A< 0\} \\
\Omega_3:=\{(x,y):\, x>0,  \, y<0, \,\,  C\geq 0,\, B< 0,\, A< 0\} \\
\Omega_4:=\{(x,y):\, x>0,  \, y<0, \,\,  C\geq 0,\, B< 0,\, A\geq 0\} \\
\end{cases}
\end{split}
\end{displaymath}

%By \proref{common} (2) and (3), if $2\delta(1+q)^2 + q(\sigma_1^2-2(1+q)\mu_1)=0$, 
%$$\{(x,y):\, x \neq 0, \,\, B(x,y)^2-4A(x,y)C(x,y) \leq 0\} = \emptyset.$$
%If $2\delta(1+q)^2 + q(\sigma_1^2-2(1+q)\mu_1)\neq 0$, then $A(x_D,y_D)>0, \,B(x_D, y_D)>0$ and  $C(x_D,y_D)>0$, so $(x_D,y_D)\in \Omega_1$. 

{\bf Claim}: \emph{For any $a\in \bR$ such that $0<a<x_M$, there exist a constant $b_a>a$ and a function $g_a:[a,b_a]\mapsto \bR$ such that
\begin{displaymath}
\begin{split}
&g_a'(x)=F(x,g_a(x)), \quad g_a(a)=\Gamma(a), \quad g_a'(b_a)=0,  \\
\textrm{where } &\begin{cases} F(x,y):=\tfrac{2C(x,y)}{-B(x,y)+\sqrt{B(x,y)^2-4 A(x,y) C(x,y)}},\\
\Gamma(x):=\tfrac{-q\sigma_2 (\sigma_2+(\mu_1 \sigma_2 - \rho \sigma_1 \mu_2)x) + \sigma_2\sqrt{ 2q\delta(\rho^2-1)\sigma_1^2\sigma_2^2 x^2 +  q^2( \mu_2^2 \sigma_1^2 x^2 - 2\rho \sigma_1 \sigma_2 \mu_2 x(-1+\mu_1 x) + \sigma_2^2(-1+\mu_1 x)^2)  }     }{q(1+q)(2 \delta \sigma_2^2 - q \mu_2^2)}.
\end{cases}
\end{split}
\end{displaymath}
In fact, $y=\Gamma(x)$ is the equation of the lower curve of the hyperbola $C=0$.}

We observe that $\Gamma'(x)>0$ for $0<x<x_M$, $\Gamma'(x)<0$ for $x>x_M$ and $\lim_{x\to \infty }\Gamma(x)=-\infty$. Using this observation, we can prove Claim by the same way as in \proref{hard}.

Again, as in \proref{hard}, there exist $0<\ux<\ox$ and $g\in C^2([\ux,\ox])$ such that
\begin{equation}
\begin{split}\label{g3}
g'(x)=F(x,g(x)), \quad g'(\ux)=0,\quad g'(\ox)=0, \quad \int_{\ux}^{\ox} \frac{g'(x)}{x} \, dx =\ln \big(\tfrac{1+\overline{\lambda}}{1-\underline{\lambda}}\big),
\end{split}
\end{equation}
and $g$ satisfies the following properties:\\
(i) $g(x)<0$ and $g'(x)>0$ for $x\in [\ux,\ox]$.\\
(ii) $g'(x)/x>0$ for $x\in (\ux,\ox)$.\\
(iii) $q\, g(x)\big(g'(x)+1)- (1+q)x g'(x)>0$ for $x\in[\ux,\ox]$.\\
(iv) $g(x)-x g'(x)<0$  for $x\in[\ux,\ox]$.

The proof is done by \eqref{g3} and (i)-(iv).  
\end{proof}

\begin{figure}
\begin{center}
  \includegraphics[width=6cm]{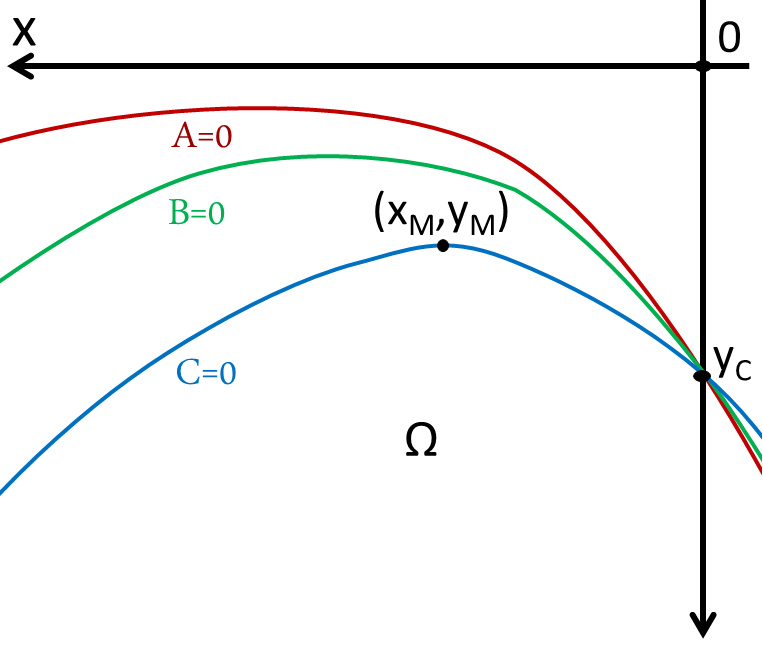}\qquad \includegraphics[width=6cm]{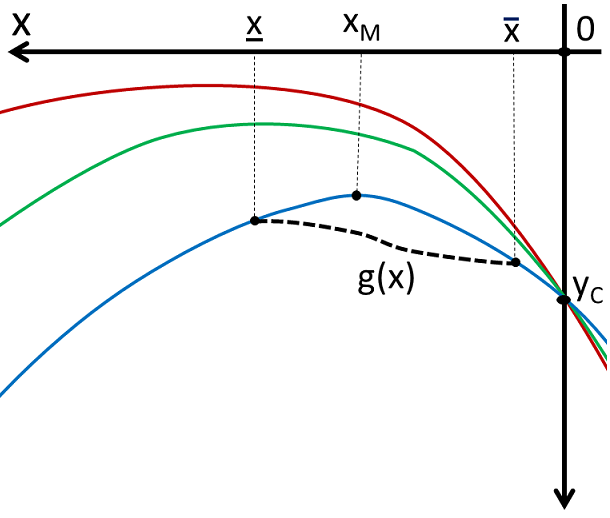}
\end{center}
\figcaption{$p<0$ and $\mu_1<\frac{\rho \mu_2 \sigma_1}{\sigma_2}$}
\end{figure}

\begin{proposition}\label{pro4} In case $p<0$ and $\mu_1<\frac{\rho \mu_2 \sigma_1}{\sigma_2}$, \proref{FBO} holds.
\end{proposition}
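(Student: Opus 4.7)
The proof is a direct adaptation of \proref{hard} (the ellipse case for $p>0$) and \proref{p nega} (the hyperbola case for $p<0$), with the region reflected through the $y$-axis in the same way that \proref{pro2} mirrors \proref{hard}. First I do the sign bookkeeping. Since $p<0$, Cauchy--Schwarz and \assref{ass} give $q<0$, $1+q>0$, $2\delta\sigma_2^2-q\mu_2^2>0$, and $2\delta(1+q)^2+q(\sigma_1^2-2(1+q)\mu_1)>0$; coupling these with $\mu_1<\rho\mu_2\sigma_1/\sigma_2$ and $\sgn(p)=-1$ yields $x_M<0$, $y_M<0$, $y_C<0$, and (with a short computation) $x_D>0$, $y_D<0$. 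The same determinant identity as in \proref{p nega} confirms that the level set $C=0$ is a hyperbola, and I take its branch lying in the third quadrant $\{x<0,y<0\}$; solving $C(x,y)=0$ for $y$ on this branch gives the explicit upper envelope $\Gamma(x)$ (with the appropriate sign of the radical). Using \lemref{common}(1) together with the slope computations at $(0,0)$ and $(0,y_C)$, the relative positions of the curves $A=0$, $B=0$, $C=0$ in $\{x<0\}$ form the picture obtained by reflecting Figure 3 through the $y$-axis.

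Next I partition the admissible region
\[
\{(x,y):x<0,\,y<0,\,C(x,y)\ge 0\}=\Omega_1\cup\Omega_2\cup\Omega_3\cup\Omega_4
\]
by signs of $A$ and $B$, and on $\Omega_2\cup\Omega_3\cup\Omega_4$ I set
\[
F(x,y):=\frac{2C(x,y)}{-B(x,y)-\sqrt{B(x,y)^2-4A(x,y)C(x,y)}},
\]
which is the root choice used in \proref{pro2}; \lemref{common}(2),(3) guarantee the radicand is nonnegative and that the denominator vanishes only on the outer boundary $\partial\Omega_1\cap\partial\Omega_2$. For each $a\in(x_M,0)$, Peano's theorem produces a solution $g_a$ of $g_a'=F(x,g_a)$ with $g_a(a)=\Gamma(a)$. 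The trajectory moves to the right (toward $x=0$) and exits the admissible region at
\[
b_a:=\inf\bigl\{x>a:(x,g_a(x))\in\partial(\Omega_2\cup\Omega_3\cup\Omega_4)\bigr\}\in(a,0);
\]
the verbatim analogue of Claim 1 of \proref{hard} (using concavity of $\Gamma$ near $x_M$ and the L'H\^opital comparison with the boundary curve) rules out exit through the outer boundary, so $b_a$ is the point where $C=0$ is hit again and consequently $g_a'(b_a)=0$. The analogues of Claims 2 and 3 (excluding $g_a'=1/q$ via the two-line analysis, then Picard--Lindel\"of uniqueness and $C^2$ regularity) transfer without change.

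Setting $G(a):=\int_a^{b_a}g_a'(x)/x\,dx$, the four-part argument of Claim 4 shows $G$ is continuous on $(x_M,0)$ with $\lim_{a\downarrow x_M}G(a)=0$ and $\lim_{a\uparrow 0}G(a)=+\infty$; the intermediate value theorem then yields $\ux\in(x_M,0)$ with $G(\ux)=\log\tfrac{1+\old}{1-\uld}$, and I set $\ox:=b_{\ux}<0$ and $g:=g_{\ux}$. Finally I verify items (1)--(5) of \proref{FBO}: $\ux<\ox<0$ by construction; $g<0$ and $g'<0$ on $[\ux,\ox]$ because the trajectory lives in $\{y<0\}$ and moves along the decreasing branch of $\Gamma$; $g'/x>0$ on $(\ux,\ox)$ is then automatic. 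The positivity of $q\,g(g'+1)-(1+q)xg'$ and $g'+1$, and the sign condition $g-xg'<0$, follow from the same discriminant identities as in Claims 5(iii),(iv) of \proref{hard}, where the overall sign now flips because $q<0$ and $g<0$.

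The only real obstacle, as in the three preceding propositions, is the geometric bookkeeping of Steps~1--2: pinning down the correct branch of the hyperbola, the correct sign of the radical in the definition of $F$, and the correct connected component of $\{C\ge0\}$ on which to run the ODE, so that the boundary data $g_a(a)=\Gamma(a)$ and the terminal condition $g_a'(b_a)=0$ are both achievable and yield $\ux<\ox<0$. Once that identification is made, the continuity and limiting behaviour of $G$, as well as the algebraic sign checks, are mechanical translations of the arguments already given.
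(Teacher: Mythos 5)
The high-level plan (mirror the argument of \proref{p nega} through $x\to -x$, pick the appropriate root of the quadratic in $g'$, run the ODE from $\Gamma$ and tune the integral constraint) matches the paper's brief proof, but the geometric bookkeeping that you yourself flag as ``the only real obstacle'' is not done correctly, and that is where the content of the proposition actually lives. First, the region: for $\mu_1<\rho\mu_2\sigma_1/\sigma_2$ (Propositions~\ref{pro2} and \ref{pro4}) the paper runs the ODE on the \emph{single} region $\Omega=\{x<0,\,y<0,\,C\ge 0,\,A>0,\,B>0\}$, which is the analogue of $\Omega_1$, not of $\Omega_2\cup\Omega_3\cup\Omega_4$. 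With the root choice $F=2C/(-B-\sqrt{B^2-4AC})$, the sign requirement $F\le 0$ fails on $\Omega_4$ (there $A\ge 0$, $B<0$, so the denominator is nonnegative), so the trajectory cannot pass through $\Omega_4$; moreover the denominator $-B-\sqrt{B^2-4AC}$ vanishes on $\{B\le 0,\,A=0\}$, i.e.\ on $\partial\Omega_3\cap\partial\Omega_4$, not on $\partial\Omega_1\cap\partial\Omega_2$. Thus the exit-boundary argument you copy from Claim~1 of \proref{hard} does not transfer verbatim to the set you chose.

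Second, the initial condition and the direction of $G$: in both \proref{hard} and \proref{p nega} the free parameter $a$ lies on the side of $x_M$ where $\Gamma'(a)>0$ (so that $g_a'(a)=0<\Gamma'(a)$ pushes the trajectory into $\{C>0\}$), and the trajectory passes through $x=x_M$ with $b_a$ on the other side; for Proposition~\ref{pro4}, where $x_M<0$, this places $a<x_M<b_a<0$, not $a\in(x_M,0)$ with $b_a\in(a,0)$ as you claim. The limits of $G$ then occur at $a\uparrow x_M$ (where $G\to 0$) and at the other endpoint of the admissible $a$-range (where $G\to\infty$), and the argument for the latter is the delicate barrier argument of Claim~4(iii) near $x=0$, not an immediate consequence of ``the interval shrinking''; your stated limits $\lim_{a\downarrow x_M}G=0$, $\lim_{a\uparrow 0}G=\infty$ are tied to the wrong parameterization. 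Relatedly, the paper's $\Gamma$ for the $p<0$ cases is the \emph{lower} branch of the hyperbola (the minus-sign root of the quadratic in $y$, see \proref{p nega}), not an ``upper envelope.'' The sign checks (i)--(iv) and the verification that $g'+1>0$ then follow as you indicate, but only once the region, the branch, and the $a$-range are fixed correctly.
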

\begin{proof}
As in \proref{p nega}, we can show that the quadratic curves $A=0,\, B=0,\,C=0$ are as in Figure 5. The region $\Omega$ is defined as
$$\Omega:=\{(x,y):\, x<0,\, y<0, \,  C(x,y)\geq 0 ,\, B(x,y)>0,\, A(x,y)>0 \}.$$
As in \proref{p nega}, we can prove that there exist $\ux<\ox<0$ and $g\in C^2([\ux,\ox])$ such that
\begin{equation}
\begin{split}\label{g4}
&g'(x)=F(x,g(x)), \quad g'(\ux)=0,\quad g'(\ox)=0,\quad \int_{\ux}^{\ox} \frac{g'(x)}{x} \, dx =\ln \big(\tfrac{1+\overline{\lambda}}{1-\underline{\lambda}}\big),\\
&\textrm{where } F(x,y):=\tfrac{2C(x,y)}{-B(x,y)-\sqrt{B(x,y)^2-4 A(x,y) C(x,y)}},
\end{split}
\end{equation}
and $g$ satisfies the following properties:\\
(i) $g(x)<0$ and $g'(x)<0$ for $x\in [\ux,\ox]$.\\
(ii) $g'(x)/x>0$ for $x\in (\ux,\ox)$.\\
(iii) $q\, g(x)\big(g'(x)+1)- (1+q)x g'(x)>0$ for $x\in[\ux,\ox]$.\\
(iv) $g(x)-x g'(x)<0$  for $x\in[\ux,\ox]$.

The proof is done by \eqref{g4} and (i)-(iv).  
\end{proof}

\bigskip

{\bf Proof of \thmref{well-posed thm}}\\
By \thmref{main verify}, we have  (3) $\Rightarrow$ (2). \defref{shadow_define} implies that (2) $\Rightarrow$ (1). It remains to show that (1) $\Rightarrow$ (3). 

{\bf Claim}: \emph{Assume that $0<p<1$, $\mu_1>\frac{\rho \mu_2 \sigma_1}{\sigma_2}$ and $\delta \leq  \tfrac{q}{2(1-\rho^2)} \big( (\tfrac{\mu_1}{\sigma_1})^2+ (\tfrac{\mu_2}{\sigma_2})^2- 2\rho \tfrac{\mu_1 \mu_2}{\sigma_1 \sigma_2} \big) $. As in Claim 3 in \proref{hard2}, let $G(a)=\int_a^{b_a} \tfrac{g_a'(x)}{x}$. Then $G(a)$ is a strictly decreasing function.}

(Proof of Claim): Define $\overline{\lambda}(a):=(1-\underline{\lambda}) e^{G(a)} -1$, then $G(a)=\ln \big( \tfrac{1+\overline{\lambda}(a)}{ 1-\underline{\lambda}}\big)$. We consider a model with transaction costs $\overline{\lambda}(a)$, i.e., parameterize the model by $a$. In \eqref{fr}, if $(\eta_0,\eta_1,\eta_2)=(1,0,0)$, then 
$r(x)=-\tfrac{x}{q g_a(x)}<0$ for $x\in [a,b_a]$. Therefore, according to \eqref{hx}, $\hat{x}=a$. Applying \thmref{main verify} to this situation, we observe that the value of the optimization problem is $\tfrac{ \xi(\hx)^p}{p} |g_a(\hx)|^{1-p}=\tfrac{1}{p}\, g_a(a)^{1-p}=\tfrac{1}{p}\, \Gamma(a)^{1-p}$, where $\Gamma$ is as in \eqref{FG}. As $\Gamma$ is strictly increasing, the value is strictly increasing on $a$. But the value of the optimization problem should be strictly decreasing over $\overline{\lambda}$ (higher transaction cost term reduce the value). Therefore, we conclude that $\overline{\lambda}(a)$ should be a strictly decreasing function of $a$, so is $G(a)$.\\
(End of the proof of Claim).

\smallskip
In the proof of Claim above, we further observe that the value of the optimization problem with transaction costs $\overline{\lambda}(\infty):=\lim_{a\to \infty}\overline{\lambda}(a)$ should be infinity, because $\lim_{a\to \infty}\Gamma(a)=\infty$. Because $\lim_{a\to \infty}G(a)=c^* = \ln \big( \tfrac{1+\overline{\lambda}(\infty)}{ 1-\underline{\lambda}}\big)$, we conclude that if $0<p<1$, $\mu_1>\frac{\rho \mu_2 \sigma_1}{\sigma_2}$, $\delta \leq  \tfrac{q}{2(1-\rho^2)} \big( (\tfrac{\mu_1}{\sigma_1})^2+ (\tfrac{\mu_2}{\sigma_2})^2- 2\rho \tfrac{\mu_1 \mu_2}{\sigma_1 \sigma_2} \big) $  and $c^*\geq \ln \big(\tfrac{1+\overline{\lambda}}{1-\underline{\lambda}}\big)$, then the value is infinity.

In the same way, we reach the same conclusion for the case of $\mu_1<\frac{\rho \mu_2 \sigma_1}{\sigma_2}$. Therefore, the problem is ill-posed if $0<p<1$, $\delta \leq  \tfrac{q}{2(1-\rho^2)} \big( (\tfrac{\mu_1}{\sigma_1})^2+ (\tfrac{\mu_2}{\sigma_2})^2- 2\rho \tfrac{\mu_1 \mu_2}{\sigma_1 \sigma_2} \big) $  and $c^*\geq \ln \big(\tfrac{1+\overline{\lambda}}{1-\underline{\lambda}}\big)$. This is the contrapositive of (1) $\Rightarrow$ (3). $\square$

\begin{remark}
The claim in the proof of \thmref{well-posed thm} can be extended to other parameter conditions. This means that in \proref{FBO}, the solution $g$ and the free boundaries $\overline{x}$ and $\underline{x}$ are unique. 
\end{remark}

\section*{Appendix: The case of  $\mu_2= \frac{\rho \sigma_1 \sigma_2}{1+q}$ }

%\noindent{\bf $\bullet$ The case of  $\mu_2= \frac{\rho \sigma_1 \sigma_2}{1+q}$}

%\smallskip

As we see in Section 6, our optimal consumption/investment problem reduces to the analysis of the ODE in \eqref{ode}, i.e., 
$$A(x,g(x)) g'(x)^2 +B(x,g(x))g'(x) + C(x,g(x))=0.$$ 
We substitute $\mu_2= \frac{\rho \sigma_1 \sigma_2}{1+q}$ to the above ODE  and obtain 
\begin{equation}
\begin{split}\label{one-reduce}
g'(x)=-1 \quad \textrm{or} \quad g'(x)=\frac{P(x,g(x))}{Q(x,g(x))},
\end{split}
\end{equation}
where 
\begin{displaymath}
\begin{split}
P(x,y)&:=-2 q \delta^* y^2+ 2p(\mu^* x + \sgn(p))y - (1-p)^2 (\sigma^*)^2 x^2\\
Q(x,y)&:=-P(x,y) + (p(\sigma^*)^2 - 2 (1+q)\delta^*) x y + 2(\mu^*-(1-p)(\sigma^*)^2)x^2 + 2\sgn(p) x
\end{split}
\end{displaymath}
with 
$$\mu^* := \mu_1 - (1-p)\rho^2 \sigma_1^2, \quad \sigma^*:=\sigma_1\sqrt{1-\rho^2}, \quad \delta^*:=\delta - \tfrac{ p(1-p) \rho^2 \sigma_1^2}{2}.$$
\cite{ChoSirZit11} studies the optimization problem with only one illiquid asset, and ends up with the ODE $g'(x)=\tfrac{P(x,g(x))}{Q(x,g(x))}$ (cf. see Section 5 in \cite{ChoSirZit11}) which is same as the ODE in \eqref{one-reduce}. Therefore, we conclude that if $\mu_2= \frac{\rho \sigma_1 \sigma_2}{1+q}$, then the ODE in our model with two risky assets simplifies to the ODE in \cite{ChoSirZit11}. 

Alternatively, we can also find a financial explanation for why our model reduces to a simpler model, in case  $\mu_2= \frac{\rho \sigma_1 \sigma_2}{1+q}$. Using \eqref{pi12} with  $\mu_2= \frac{\rho \sigma_1 \sigma_2}{1+q}$,  we observe that the proportion of `liquid wealth' invested in the liquid risky asset is a constant
$$\tfrac{ \hat{\varphi}^{(2)}_t S_t^{(2)} }{ \hat{\varphi}^{(0)}_t+\hat{\varphi}^{(2)}_t S_t^{(2)}} = \tfrac{\pi_2(X_t)}{1-\pi_1(X_t)}=\tfrac{(1+q)\mu_2}{\sigma_2^2}.$$
In other words, economically speaking, 
\begin{displaymath}
\begin{split}
&\Big(\textrm{our model with a bond, the illiquid asset $S^{(1)}$ and the liquid asset $S^{(2)}$}\Big)\\
\iff&  \,\,\Big( \textrm{the model with the illiquid asset $S^{(1)}$ and a new liquid asset $S^{(3)}$ (without bond)} \Big) ,\\
\end{split}
\end{displaymath}
where  $dS_t^{(3)}=S_t^{(3)}\tfrac{(1+q)\mu_2}{\sigma_2^2}(\mu_2 dt + \sigma_2 dB^{(2)}_t)$. 
%This is a financial explanation why our model reduces to the model in \cite{ChoSirZit11}, in case  $\mu_2= \frac{\rho \sigma_1 \sigma_2}{1+q}$.

\bigskip

{\bf Acknowledgements.} The author has been supported by the Start-Up fund at UNIST (Grant No. 1.160099.01) and the National Research Foundation of Korea (Grant No. 2017R1E1A1A03070732).

\bibliographystyle{siam}  % Here the bibliography 		     %
\bibliography{reference}        % is inserted.			     %

\end{document}